\documentclass{SciPost}

\usepackage[bitstream-charter]{mathdesign}
\DeclareSymbolFont{usualmathcal}{OMS}{cmsy}{m}{n}
\DeclareSymbolFontAlphabet{\mathcal}{usualmathcal}

\fancypagestyle{SPstyle}{
  \fancyhf{}
  \lhead{\colorbox{scipostblue}{\bfseries\color{white}~SciPost Physics~}}
  \rhead{{\bfseries\color{scipostdeepblue}~Submission}}
  
  \fancyfoot[C]{\textbf{\thepage}}
}

\usepackage[T1]{fontenc}
\usepackage{mathtools}
\usepackage{physics}
\usepackage{tikz}
\usetikzlibrary{arrows.meta}
\usepackage{pgfplots}
\pgfplotsset{compat=1.18}

\tikzset{
  manuscript flow diagram/.style={
    x=1cm,
    y=1cm,
    font=\footnotesize,
    flow/.style={
      -{Latex[length=2.1mm,width=1.4mm]},
      line width=0.7pt,
      draw=black!75
    },
    box/.style={
      draw=black!65,
      rounded corners=2pt,
      line width=0.6pt,
      align=center,
      inner xsep=5pt,
      inner ysep=5pt,
      font=\scriptsize
    }
  }
}

\hypersetup{
  breaklinks = true,
  colorlinks = true,
  allcolors = blue,
  pdftitle = {Two-dimensional Toda--Arnoldi correspondence: holomorphic Krylov geometry and counterdiabatic transport},
  pdfauthor = {Urei Miura},
  pdfkeywords = {Toda lattice, Arnoldi method, Krylov subspaces, non-Hermitian Hamiltonians, counterdiabatic driving}
}

\newenvironment{thm}
  {\par\medskip\noindent\textbf{Main theorem.}\ \itshape}
  {\par\medskip}
\newenvironment{proof}
  {\par\medskip\noindent\textbf{Proof.}\ }
  {\hfill\ensuremath{\square}\par\medskip}

\newcommand{\C}{\mathbb{C}}
\newcommand{\R}{\mathbb{R}}
\newcommand{\Z}{\mathbb{Z}}
\newcommand{\CP}{\mathbb{CP}}
\newcommand{\K}{\mathcal{K}}
\DeclareMathOperator{\Span}{span}
\DeclareMathOperator{\Gr}{Gr}

\newcommand{\zb}{\bar{z}}
\newcommand{\ii}{\mathrm{i}}
\newcommand{\e}{\mathrm{e}}
\newcommand{\eno}[1]{\e^{#1}}

\newcommand{\LLop}{\mathsf{L}}
\newcommand{\MM}{\mathsf{M}}

\newcommand{\HH}{\mathsf{H}}
\newcommand{\Had}{\HH_{\mathrm{ad}}}
\newcommand{\Hcd}{\HH_{\mathrm{cd}}}
\newcommand{\Uarn}{\mathsf{U}}
\newcommand{\Aarn}{\mathsf{A}}

\newcommand{\Imat}{\mathsf{I}}
\newcommand{\Gmat}{\mathsf{G}}
\newcommand{\Vmat}{\mathsf{V}}
\newcommand{\Pmat}{\mathsf{P}}

\newcommand{\wket}[1]{%
  \left.\left|#1\right\rangle\!\right\rangle
}
\newcommand{\kket}[1]{\wket{#1}}
\newcommand{\wbraket}[2]{%
  \left\langle\!\left\langle
  #1
  \,\middle|\,
  #2
  \right\rangle\!\right\rangle
}

\begin{document}

\pagestyle{SPstyle}

\begin{center}
{\Large\textbf{\color{scipostdeepblue}{Two-dimensional Toda--Arnoldi correspondence: Holomorphic Krylov geometry and counterdiabatic transport}}}
\end{center}

\begin{center}
Urei Miura\textsuperscript{1,2,$\star$}
\end{center}

\begin{center}
{\bfseries 1} Division of Physics and Astronomy, Graduate School of Science, Kyoto University, Kyoto 606-8502, Japan\\
{\bfseries 2} Center for Gravitational Physics and Quantum Information, Yukawa Institute for Theoretical Physics, Kyoto University, Kitashirakawa Oiwake-Cho, Kyoto 606-8502, Japan\\
${}^\star$ \href{mailto:urei.miura@yukawa.kyoto-u.ac.jp}{\small urei.miura@yukawa.kyoto-u.ac.jp}
\end{center}

\section*{\color{scipostdeepblue}{Abstract}}
{\bfseries\boldmath
Although Arnoldi reduction of a generally non-Hermitian Hamiltonian yields an upper Hessenberg matrix rather than the tridiagonal form of Hermitian Lanczos theory, we show that a closed Toda sector survives in its diagonal and subdiagonal coefficients.
For a fixed finite-dimensional Hamiltonian and a cyclic state vector deformed holomorphically, the Krylov Gram determinants are $\tau$ functions of the finite two-dimensional Toda lattice, whose Flaschka variables coincide exactly with these Arnoldi coefficients.
The Toda dynamics therefore closes on this sector without determining the remaining upper Hessenberg entries.
The subdiagonal part of the same sector also has a direct geometric meaning: the squared subdiagonal coefficients determine both the Fubini--Study metric and the Berry curvature of holomorphic Krylov subspaces, whereas the geometric quantities associated with subspaces lost at Arnoldi breakdown cease to be defined.
Along a smooth real path in the cyclic region, the Arnoldi-frame connection further provides a Hermitian tridiagonal generator of exact isospectral transport.
When added to the Arnoldi matrix, this generator cancels transitions between instantaneous eigenspaces and realizes counterdiabatic driving whenever the matrix is diagonalizable with a nondegenerate spectrum.
}

\vspace{\baselineskip}

\noindent\textcolor{white!90!black}{%
\fbox{\parbox{0.975\linewidth}{%
\textcolor{white!40!black}{\begin{tabular}{lr}%
  \begin{minipage}{0.6\textwidth}%
    {\small Copyright attribution to authors. \newline
    This work is a submission to SciPost Physics. \newline
    License information to appear upon publication. \newline
    Publication information to appear upon publication.}
  \end{minipage} & \begin{minipage}{0.4\textwidth}
    {\small Received Date \newline Accepted Date \newline Published Date}%
  \end{minipage}
\end{tabular}}
}}
}

\vspace{10pt}
\noindent\rule{\textwidth}{1pt}
\tableofcontents\thispagestyle{SPstyle}
\noindent\rule{\textwidth}{1pt}
\vspace{10pt}

\nolinenumbers


\section{Introduction}
\label{sec:introduction}

Understanding how quantum states and observables evolve is a central problem
in quantum physics.
Krylov methods provide both a computational and a conceptual framework for
this problem.
They organize the evolution in the subspace generated by repeated action of
the dynamical generator, allowing matrix functions and quantum time evolution
to be approximated efficiently without diagonalizing the full problem
~\cite{HochbruckLubich1997,NandyEtAl2025Review}.
The connection is direct: the generator of the physical evolution also
generates the Krylov basis used to represent it.

Beyond this numerical role, Krylov bases have been used to study operator
growth, thermalization, and quantum complexity
~\cite{ParkerEtAl2019,RabinoviciEtAl2021,NandyEtAl2025Review,BhattacharjeeEtAl2022Saddle,BhattacharjeeEtAl2023LargeQSYK}.
More recently, they have been used to construct counterdiabatic generators
and to describe dynamics generated by time-dependent operators
~\cite{TakahashiDelCampo2024,TakahashiDelCampo2025}.

For a Hermitian generator, the Lanczos construction yields a three-term
recurrence and a real symmetric tridiagonal representation
~\cite{Lanczos1950}.
Under exponential deformations of correlation functions, the associated
moment and Gram determinants satisfy the Toda $\tau$-function relations, and the
recurrence coefficients serve as Toda variables~\cite{DymarskyGorsky2020}.

For a generally non-Hermitian generator, the Arnoldi method uses a single
orthonormal basis and produces an upper Hessenberg matrix
~\cite{Arnoldi1951,Saad1980}.
The bi-Lanczos method instead retains a tridiagonal representation by using
dual bases, and both methods have been applied to open-system dynamics and
operator growth
~\cite{MingantiHuybrechts2022,BhattacharyaEtAl2022Arnoldi,BhattacharyaEtAl2023BiLanczos,BhattacharjeeEtAl2023DissipativeSYK,BhattacharjeeEtAl2024LindbladianSYK}.
A general Arnoldi matrix, however, contains entries beyond its diagonal and
subdiagonal.
We ask whether the diagonal and subdiagonal Arnoldi coefficients form a
closed Toda system and whether they determine the geometry of the
corresponding Krylov subspaces.

We address this question for a fixed, finite-dimensional, generally
non-Hermitian Hamiltonian and a cyclic seed state.
We deform the seed holomorphically and construct the Arnoldi basis at each
parameter value.
We prove that the resulting Krylov Gram determinants satisfy the $\tau$-function
relations of the finite two-dimensional Toda lattice.
This establishes a Toda--Arnoldi correspondence in which the Flaschka
variables are the diagonal and subdiagonal Arnoldi coefficients.
The Toda equations close on these coefficients without fixing the remaining
upper Hessenberg entries.

We then use the squared subdiagonal coefficients to determine the
Fubini--Study metric and Berry curvature induced by the holomorphic
Krylov-subspace map.
Along a smooth real parameter path, the moving Arnoldi frame generates an
exact isospectral evolution of the Arnoldi matrix.
When this matrix is diagonalizable with a nondegenerate spectrum, the same
generator cancels transitions between instantaneous eigenspaces and acts as
a counterdiabatic term.

The paper is organized as follows.
Section~\ref{sec:main-theorem} establishes the Toda--Arnoldi correspondence,
Sec.~\ref{sec:krylov-geometry} develops its geometric consequences, and
Sec.~\ref{STA-in-Krylov-chain} treats transport along real parameter paths.
Appendices~\ref{app:toda-review} and~\ref{app:arnoldi} review the
two-dimensional Toda lattice and the Arnoldi method, respectively, and
Appendix~\ref{app:sta-basics} reviews counterdiabatic driving.
Appendix~\ref{app:main-theorem-qr-proof} gives the QR proof and Lax equation,
Appendix~\ref{app:grassmannian-geometry} collects the geometric details, and
Appendix~\ref{app:krylov-electrostatics} presents an electrostatic form of the
Toda--Arnoldi relations.

\section{Two-dimensional Toda--Arnoldi correspondence}
\label{sec:main-theorem}
For a generally non-Hermitian finite-dimensional Hamiltonian, we deform a state vector holomorphically with respect to a complex parameter and apply the Arnoldi method at each parameter value.
We show that the resulting $\tau$ functions are the squared volumes spanned by the corresponding unnormalized holomorphic Krylov vectors.
We then prove that the diagonal and subdiagonal Arnoldi coefficients $\alpha_{n}\qty(z,\zb)\in\C$ and $\beta_{n}\qty(z,\zb)>0$ coincide with the Flaschka variables $a_{n}\qty(z,\zb),b_{n}\qty(z,\zb)$ of the two-dimensional Toda lattice.

For the two-dimensional Toda lattice, see Appendix~\ref{app:toda-review} and Refs.~\cite{UenoTakasaki1984,Takasaki2018Toda}; for the Arnoldi method, see Appendix~\ref{app:arnoldi} and Refs.~\cite{Arnoldi1951,Saad1980}.
Figure~\ref{fig:toda-arnoldi-overview} summarizes the construction that connects holomorphic evolution to Arnoldi coefficients and two-dimensional Toda variables through Krylov data and $\tau$ functions.

\begin{figure}[t]
  \centering
  \resizebox{\linewidth}{!}{%
  \begin{tikzpicture}[
    manuscript flow diagram,
    box/.append style={
      font=\footnotesize,
      inner xsep=4pt,
      inner ysep=6pt
    }
  ]
    \node[
      box,
      fill=blue!7,
      text width=3.40cm
    ] (state) at (-7.00,0) {
      {\footnotesize\textbf{Holomorphic evolution}}\\[2pt]
      $\partial_{z}\ket{\psi\qty(z)}=-\HH\ket{\psi\qty(z)}$
    };

    \node[
      box,
      fill=blue!7,
      text width=3.65cm
    ] (krylov) at (-2.80,0) {
      {\footnotesize\textbf{Krylov data}}\\[2pt]
      $\mathsf{K}_{n}
      =\qty(\ket{\psi},\ldots,\HH^{n-1}\ket{\psi})$
    };

    \node[
      box,
      fill=orange!10,
      text width=4.10cm
    ] (arnoldi) at (1.55,1.20) {
      {\footnotesize\textbf{Arnoldi}}\\[2pt]
      $\mathsf{K}_{D}=\Uarn\mathsf{R}$,
      \quad
      $\Aarn=\Uarn^{\dagger}\HH\Uarn$\\[1pt]
      Arnoldi coefficients $\alpha_{n},\beta_{n}$
    };

    \node[
      box,
      fill=green!9,
      text width=4.10cm
    ] (toda) at (1.55,-1.20) {
      {\footnotesize\textbf{Two-dimensional Toda}}\\[2pt]
      $\tau_{n}
      =\det\qty(\mathsf{K}_{n}^{\dagger}\mathsf{K}_{n})
      =\norm{\wket{\Psi_{n}}}^{2}$\\[1pt]
      Flaschka variables $a_{n},b_{n}$
    };

    \node[
      box,
      fill=violet!10,
      text width=3.10cm,
      line width=0.9pt
    ] (result) at (5.85,0) {
      {\footnotesize\textbf{Correspondence}}\\[3pt]
      $a_{n}=\alpha_{n}$,
      \qquad
      $b_{n}=\beta_{n}$
    };

    \draw[flow] (state.east) -- (krylov.west);
    \draw[flow] (krylov.north east) -- (arnoldi.west);
    \draw[flow] (krylov.south east) -- (toda.west);
    \draw[flow] (arnoldi.east) -- (result.north west);
    \draw[flow] (toda.east) -- (result.south west);
  \end{tikzpicture}%
  }
  \caption{
    Toda--Arnoldi correspondence obtained from holomorphic evolution through Krylov data.
    The Arnoldi coefficients $\alpha_{n},\beta_{n}$ coincide, respectively, with the two-dimensional Toda variables $a_{n},b_{n}$ wherever the Arnoldi method does not break down.
  }
  \label{fig:toda-arnoldi-overview}
\end{figure}

\subsection{\texorpdfstring{$\tau$}{tau} functions from holomorphic Krylov data}
\label{subsec:holomorphic-krylov-deformation}
Let $z\in\C$ denote a complex deformation parameter,
and let $\zb\coloneqq z^{*}$ be its complex conjugate.
Here, $z$ is a deformation parameter rather than physical time; for Hermitian
$\HH$, its real direction is imaginary-time-like.
For $D\in\Z_{>0}$, let $\mathcal{H}_{D}\simeq\C^{D}$ be a $D$-dimensional Hilbert space and $\HH\in\operatorname{End}\qty(\mathcal{H}_{D})$ a generally non-Hermitian Hamiltonian.
Assume that a nonzero holomorphic state vector $\ket{\psi\qty(z)}\in\mathcal{H}_{D}$ satisfies
\begin{align}
  \partial_{z}\ket{\psi\qty(z)}
  &=
  -\HH\ket{\psi\qty(z)},
  \label{eq:EqMstate}\\
  \partial_{\zb}\ket{\psi\qty(z)}
  &=
  0.
  \label{eq:holomorphic-state-condition}
\end{align}
We later use this vector as the seed vector for the Arnoldi method.
These differential equations have the solution
\begin{align}
  \ket{\psi\qty(z)} = \eno{-z\HH}\ket{\psi\qty(0)}.
  \label{eq:holomorphic-state-solution}
\end{align}
In what follows, we assume that $\ket{\psi\qty(0)}$ is a cyclic vector of $\HH$
[see Eq.~\eqref{eq:arnoldi-cyclic-vector-condition}].
Because $\eno{-z\HH}$ is invertible and commutes with $\HH$,
$\ket{\psi\qty(z)}$ is also a cyclic vector of $\HH$ for every $z\in\C$.
We call the set of parameter values for which the seed vector remains cyclic
the cyclic region; equivalently, it is the region in which all $D$ Arnoldi
vectors are well defined.

Define the $\tau_{1}\qty(z,\zb)$ function of the two-dimensional Toda lattice as the squared norm of the holomorphic state vector $\ket{\psi\qty(z)}$:
\begin{align}
  \tau_{1}\qty(z,\zb)\coloneqq \braket{\psi\qty(\zb)}{\psi\qty(z)}\quad\qty(>0).
  \label{eq:tau-one-norm}
\end{align}
In this inner product, $\bra{\psi\qty(\zb)}$ is antiholomorphic and $\ket{\psi\qty(z)}$ is holomorphic.
Equations~\eqref{eq:EqMstate} and \eqref{eq:holomorphic-state-condition} give the derivatives of $\tau_{1}\qty(z,\zb)$ as
\begin{align}
  \partial_{z}^{i}\partial_{\zb}^{j}\tau_{1}\qty(z,\zb) = \qty(-1)^{i+j}\bra{\psi\qty(\zb)}\qty(\HH^{\dagger})^{j}\HH^{i}\ket{\psi\qty(z)}.
  \label{eq:tau-one-derivatives}
\end{align}
The sign factors arising in the rows and columns cancel in the determinant.
Set $\tau_{0}\coloneqq1$ and define $\tau_{n}\qty(z,\zb)$ for $1\leq n\leq D$ by
\begin{align}
\label{eq:taun}
  \tau_{n}\qty(z,\zb)
  &=
  \det\qty[
    \bra{\psi\qty(\zb)}
    \qty(\HH^{\dagger})^{j}
    \HH^{i}
    \ket{\psi\qty(z)}
  ]_{0 \leq i,j \leq n-1}.
\end{align}
Because $\mathcal{H}_{D}$ is $D$-dimensional and $\ket{\psi\qty(z)}$ is a cyclic vector,
$\tau_{n}>0\quad\qty(1\leq n\leq D)$.
For the finite-chain boundary condition, set $\tau_{D+1}\coloneqq0$.
By the Desnanot--Jacobi identity~\eqref{eq:desnanot-jacobi}, Eq.~\eqref{eq:taun} satisfies the Hirota bilinear identity~\eqref{eq:2d-toda-bilinear} of the two-dimensional Toda lattice for $1\leq n\leq D$.
The Gram determinant formula~\eqref{eq:grassmann-wedge-norm} allows Eq.~\eqref{eq:taun} to be written as the squared norm of a holomorphic Krylov exterior state.
This state is constructed from the holomorphic Krylov basis
\begin{align}
    \ket{\psi\qty(z)},\HH
    \ket{\psi\qty(z)},\HH^{2}
    \ket{\psi\qty(z)},\cdots,\HH^{n-1}
    \ket{\psi\qty(z)},
  \label{eq:holomorphic-krylov-basis}
\end{align}
as
\begin{align}
  \kket{\Psi_{n}\qty(z)}\coloneqq
    \ket{\psi\qty(z)}\wedge\HH
    \ket{\psi\qty(z)}\wedge\cdots\wedge\HH^{n-1}\ket{\psi\qty(z)},
  \label{eq:holomorphic-exterior-state}
\end{align}
and its squared norm is
\begin{align}
\label{eq:taun-innerproduct}
  \tau_{n}\qty(z,\zb)
  &=
  \wbraket{\Psi_{n}\qty(\zb)}{\Psi_{n}\qty(z)}
  \quad\qty(>0).
\end{align}
Equation~\eqref{eq:taun-innerproduct} shows that $\tau_{n}\qty(z,\zb)$ can be interpreted as the squared volume spanned by the first $n$ unnormalized holomorphic Krylov vectors in the Hilbert space.
The same exterior-product representation gives the identity
\begin{align}
  \partial_{z}^{i}\partial_{\zb}^{j}\tau_{n}\qty(z,\zb)
  &=
  \wbraket{\partial_{\zb}^{j}\Psi_{n}\qty(\zb)}{\partial_{z}^{i}\Psi_{n}\qty(z)}.
  \label{eq:taun-exterior-derivatives}
\end{align}

\subsection{Arnoldi coefficients as Flaschka variables}
\label{subsec:Main-theorem}
The main theorem is as follows.

\begin{thm}
At each value of $z$, the normalized state vector
\begin{align}
  \ket{u_{0}\qty(z,\zb)}=\frac{\ket{\psi\qty(z)}}{\norm{\ket{\psi\qty(z)}}}
  =\frac{\ket{\psi\qty(z)}}{\tau_{1}\qty(z,\zb)^{\frac{1}{2}}}
  \label{eq:toda-arnoldi-seed}
\end{align}
serves as the seed vector in the Arnoldi reduction of $\HH$ to upper Hessenberg form.
Denote the resulting upper Hessenberg matrix by $\Aarn\qty(z,\zb)$, its diagonal entries by $\alpha_{n}\qty(z,\zb)\quad\qty(0\leq n\leq D-1)$, and its subdiagonal entries by $\beta_{n}\qty(z,\zb)\quad\qty(1\leq n\leq D-1)$:
\begin{align}
  \Aarn\qty(z,\zb) =
  \begin{pmatrix}
    \alpha_{0}\qty(z,\zb)
    & *
    & \cdots
    & *
    \\
    \beta_{1}\qty(z,\zb)
    & \alpha_{1}\qty(z,\zb)
    & \ddots
    & \vdots
    \\
    \vdots
    & \ddots
    & \ddots
    & *
    \\
    0
    & \cdots
    & \beta_{D-1}\qty(z,\zb)
    & \alpha_{D-1}\qty(z,\zb)
  \end{pmatrix}.
  \label{eq:toda-arnoldi-hessenberg-form}
\end{align}
The Flaschka variables of the two-dimensional Toda lattice defined from the $\tau$ functions~\eqref{eq:taun} by Eqs.~\eqref{eq:2d-toda-flaschka-a} and \eqref{eq:2d-toda-flaschka-b} coincide with the Arnoldi coefficients:
\begin{equation}
  \begin{aligned}
    a_{n}\qty(z,\zb)
    &=
    \alpha_{n}\qty(z,\zb),
    &&0\leq n\leq D-1,
    \\
    b_{n}\qty(z,\zb)
    &=
    \beta_{n}\qty(z,\zb),
    &&1\leq n\leq D-1.
  \end{aligned}
  \label{eq:toda-arnoldi-coefficients}
\end{equation}
\end{thm}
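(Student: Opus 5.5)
The plan is to compare both sides through the QR factorization of the holomorphic Krylov matrix $\mathsf{K}_{D}=\qty(\ket{\psi},\HH\ket{\psi},\ldots,\HH^{D-1}\ket{\psi})$, as in Fig.~\ref{fig:toda-arnoldi-overview}. Cyclicity holds at every $z$, so $\mathsf{K}_{D}$ is invertible and admits a unique factorization $\mathsf{K}_{D}=\Uarn\mathsf{R}$. Here $\Uarn$ is unitary, and $\mathsf{R}$ is upper triangular with positive diagonal entries $r_{0},\ldots,r_{D-1}$. The columns of $\Uarn$ are exactly the Arnoldi vectors $\ket{u_{n}}$ generated from the seed~\eqref{eq:toda-arnoldi-seed}. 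This holds because Arnoldi is Gram--Schmidt applied to $\HH\ket{u_{n-1}}$, which spans the same flag as Gram--Schmidt applied to $\HH^{n}\ket{\psi}$, and both procedures fix phases by positivity. The first step is then to read off the $\tau$ functions. By Eq.~\eqref{eq:taun} and the Gram formula, $\tau_{n}=\det\qty(\mathsf{K}_{n}^{\dagger}\mathsf{K}_{n})=\det\qty(\mathsf{R}_{n}^{\dagger}\mathsf{R}_{n})=\prod_{k=0}^{n-1}r_{k}^{2}$, where $\mathsf{R}_{n}$ denotes the leading $n\times n$ block.

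For the subdiagonal, write $\HH\mathsf{K}_{D}=\mathsf{K}_{D}\mathsf{C}$, where $\mathsf{C}$ is the companion matrix: it has ones on the subdiagonal, and its last column comes from the characteristic polynomial. Then $\Aarn=\Uarn^{\dagger}\HH\Uarn=\mathsf{R}\mathsf{C}\mathsf{R}^{-1}$. Since $\mathsf{R}$ and $\mathsf{R}^{-1}$ are upper triangular and $\mathsf{C}$ is Hessenberg, the subdiagonal of the product is $\beta_{n}=r_{n}/r_{n-1}$. Hence
\begin{align*}
  \beta_{n}^{2}=\frac{\tau_{n+1}\tau_{n-1}}{\tau_{n}^{2}},
\end{align*}
which, together with $\beta_{n}>0$, should be exactly the definition~\eqref{eq:2d-toda-flaschka-b} of $b_{n}$.

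For the diagonal, I would use the exterior representation~\eqref{eq:taun-innerproduct}. The equation $\partial_{z}\ket{\psi}=-\HH\ket{\psi}$ implies that $\partial_{z}\kket{\Psi_{n+1}}=-\HH^{(\wedge)}\kket{\Psi_{n+1}}$, where $\HH^{(\wedge)}$ is the derivation extension of $\HH$ acting on each factor; the antiholomorphic bra is untouched. Therefore
\begin{align*}
  \partial_{z}\log\tau_{n+1}=-\frac{\wbraket{\Psi_{n+1}}{\HH^{(\wedge)}\Psi_{n+1}}}{\wbraket{\Psi_{n+1}}{\Psi_{n+1}}}.
\end{align*}
I would rewrite $\kket{\Psi_{n+1}}=\det\mathsf{R}_{n+1}\,\ket{u_{0}}\wedge\cdots\wedge\ket{u_{n}}$, which is legitimate because $\det \mathsf{R}_{n+1}$ is a scalar. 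The expectation of the derivation in the decomposable orthonormal wedge is $\sum_{k=0}^{n}\bra{u_{k}}\HH\ket{u_{k}}=\sum_{k=0}^{n}\alpha_{k}$. Any term in which $\HH$ maps $\ket{u_{k}}$ into another $\ket{u_{j}}$ with $j\ne k$ vanishes by antisymmetry, and components outside the span are orthogonal. Differencing in $n$ then gives
\begin{align*}
  \alpha_{n}=-\partial_{z}\log\frac{\tau_{n+1}}{\tau_{n}},
\end{align*}
and the case $n=0$ reads $\alpha_{0}=-\partial_{z}\log\tau_{1}=\bra{u_{0}}\HH\ket{u_{0}}$.

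The main obstacle I expect is bookkeeping rather than substance: matching these expressions precisely with the conventions in Eqs.~\eqref{eq:2d-toda-flaschka-a} and~\eqref{eq:2d-toda-flaschka-b}. The points to check are the sign coming from $\partial_{z}\ket{\psi}=-\HH\ket{\psi}$, whether $a_{n}$ is a $\partial_{z}$ or a $\partial_{\zb}$ logarithmic derivative (a $\partial_{\zb}$ convention would produce $\bar{\alpha}_{n}$ instead), the index shift between $\tau_{n}$ and $n$, and the boundary conventions $\tau_{0}=1$ and $\tau_{D+1}=0$. The last one only constrains $b_{D}$, which lies outside the theorem's range. A secondary point is justifying that the Arnoldi vectors equal the QR factor $\Uarn$, including the phase convention $\beta_{n}>0$. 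I would handle this by induction on $n$, using that $\HH\ket{u_{n-1}}$ has a nonzero component along $\HH^{n}\ket{\psi}$ modulo $\K_{n}$, with positive coefficient $1/r_{n-1}$.
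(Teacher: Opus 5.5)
Your argument is correct. It is essentially the alternative QR proof that the paper gives in Appendix~\ref{app:main-theorem-qr-proof}, not the main-text proof.

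The appendix uses the same ingredients you do: $\tau_{n}=\prod_{j=0}^{n-1}r_{j}^{2}$, $\beta_{n}=r_{n}/r_{n-1}$, and $\partial_{z}\ln\tau_{n}=-\operatorname{tr}\qty(\Uarn_{n}^{\dagger}\HH\Uarn_{n})=-\sum_{j=0}^{n-1}\alpha_{j}$. Your expectation value of the derivation $\HH^{(\wedge)}$ in the unit wedge $\ket{u_{0}}\wedge\cdots\wedge\ket{u_{n}}$ is exactly this compressed trace, written in exterior-algebra language. Your companion-matrix identity $\Aarn=\mathsf{R}\mathsf{C}\mathsf{R}^{-1}$ replaces the paper's comparison of $\ket{u_{n+1}}$ components, and it gives the Hessenberg form at the same time. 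Your convention worries also resolve in your favour: Eq.~\eqref{eq:2d-toda-flaschka-a} uses $\partial_{z}$, so you get $a_{n}=\alpha_{n}$ rather than $\bar{\alpha}_{n}$.

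The main-text proof stays inside the Arnoldi recurrence instead. It expands $\HH^{l}\ket{\psi}$ in the Arnoldi basis, keeping both the leading coefficient $\tau_{1}^{1/2}\prod_{m=1}^{l}\beta_{m}$ and the subleading one $\tau_{1}^{1/2}\prod_{m=1}^{l-1}\beta_{m}\sum_{m=0}^{l-1}\alpha_{m}$. It then uses $\partial_{z}\kket{\Psi_{n}}=-\kket{\Psi_{n-1}}\wedge\HH^{n}\ket{\psi}$, where only the derivative of the last factor survives antisymmetry. That route gives $\tau_{n}=\tau_{1}^{n}\prod_{m=1}^{n-1}\beta_{m}^{2(n-m)}$ explicitly in Arnoldi data. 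The cost is tracking the subleading coefficient, plus a separate appeal to the last Arnoldi column at $n=D$. Your basis-free trace treats all $1\le n\le D$ uniformly and uses only the flag property of the Arnoldi basis.
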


\begin{proof}
We express the $\tau$ functions and the holomorphic Krylov exterior vector $\kket{\Psi_{n}\qty(z)}$ in terms of the Arnoldi coefficients $\alpha_{n}\qty(z,\zb),\beta_{n}\qty(z,\zb)$.
We then substitute these expressions into Eqs.~\eqref{eq:2d-toda-flaschka-a} and \eqref{eq:2d-toda-flaschka-b}.
For the remainder of the proof, we suppress the dependence on $\qty(z,\zb)$.

We apply the Arnoldi method to $\HH$ using the seed vector $\ket{u_{0}}$ and denote the resulting orthonormal basis by
\begin{align}
  \qty{\ket{u_{l}}}_{l=0}^{D-1}.
  \label{eq:toda-arnoldi-basis}
\end{align}
Applying Eq.~\eqref{eq:arnoldi-recurrence} successively from $n=0$ and using Eq.~\eqref{eq:toda-arnoldi-seed} gives
\begin{align}
  &\HH^{0}\ket{\psi}
  =
  \tau_{1}^{\frac{1}{2}}\ket{u_{0}},
  \notag\\
  &\HH^{1}\ket{\psi}
  =
  \tau_{1}^{\frac{1}{2}}
  \qty(\beta_{1}\ket{u_{1}}+\alpha_{0}\ket{u_{0}}),
  \notag\\
  &\HH^{l}\ket{\psi}
  =
  \tau_{1}^{\frac{1}{2}}
  \begin{aligned}[t]
    &\qty[
      \qty(\prod_{m=1}^{l}\beta_{m})\ket{u_{l}}
      +
      \qty(\prod_{m=1}^{l-1}\beta_{m})
      \qty(\sum_{m=0}^{l-1}\alpha_{m})\ket{u_{l-1}}
    ]
    \\
    &\quad+
    \qty(\text{terms proportional to }\ket{u_{l-2}},\ldots,\ket{u_{0}}),
  \end{aligned}
  \notag\\
  &\qty(2\leq l\leq D-1).
  \label{eq:toda-arnoldi-krylov-expansion}
\end{align}

Using the orthonormality of $\qty{\ket{u_{l}}}_{l=0}^{D-1}$ and substituting Eq.~\eqref{eq:toda-arnoldi-krylov-expansion} into Eqs.~\eqref{eq:taun-innerproduct} and \eqref{eq:taun-exterior-derivatives}, we obtain the following identities:
\begin{align}
  &\tau_{1}
  =
  \braket{\psi}{\psi},
  \qquad
  \qty(n=1),
  \notag\\
  &\tau_{n}
  =
  \tau_{1}^{n}
  \qty(
    \prod_{m=1}^{n-1}
    \beta_{m}^{2\qty(n-m)}
  ),
  \qquad
  \qty(2\leq n\leq D),
  \label{eq:toda-arnoldi-tau-product}\\
  &\partial_{z}\ln\qty(\tau_{1})
  =
  -\alpha_{0},
  \qquad
  \qty(n=1),
  \notag\\
  &\partial_{z}\ln\qty(\tau_{n})
  =
  -\qty(
    \sum_{m=0}^{n-1}\alpha_{m}
  ),
  \qquad
  \qty(2\leq n\leq D).
  \label{eq:toda-arnoldi-tau-derivative}
\end{align}
In deriving Eq.~\eqref{eq:toda-arnoldi-tau-derivative}, we used the following relation, which follows from Eq.~\eqref{eq:EqMstate} and the antisymmetry of the exterior product:
\begin{align}
  &\partial_{z}\kket{\Psi_{1}}
  =
  -\HH\ket{\psi},
  \qquad
  \qty(n=1),
  \notag\\
  &\partial_{z}\kket{\Psi_{n}}
  =
  -\kket{\Psi_{n-1}}
  \wedge\HH^{n}\ket{\psi},
  \qquad
  \qty(2\leq n\leq D).
  \label{eq:toda-arnoldi-exterior-derivative}
\end{align}
For $n=D$ in Eq.~\eqref{eq:toda-arnoldi-tau-derivative}, we used Eq.~\eqref{eq:arnoldi-last-column}.

Substituting Eq.~\eqref{eq:toda-arnoldi-tau-derivative} into Eq.~\eqref{eq:2d-toda-flaschka-a} and Eq.~\eqref{eq:toda-arnoldi-tau-product} into Eq.~\eqref{eq:2d-toda-flaschka-b} gives $a_{n}=\alpha_{n}$ and $b_{n}^{2}=\beta_{n}^{2}$.
Because $b_{n}>0$ and $\beta_{n}>0$, we have $b_{n}=\beta_{n}$, which proves Eq.~\eqref{eq:toda-arnoldi-coefficients}.
\end{proof}

Appendix~\ref{app:main-theorem-qr-proof} gives an alternative proof based on the QR decomposition of the matrix formed by the $D$ Krylov vectors, together with the Lax equation for the Arnoldi matrix derived from the same decomposition.

\subsection{Toda--Lanczos correspondence for Hermitian Hamiltonians}
\label{subsec:comments-main-theorem}

When the Hamiltonian $\HH$ is Hermitian, the Arnoldi method reduces to the Lanczos method, and $\Aarn$ is a Hermitian tridiagonal matrix~\cite{Lanczos1950,Saad1980}.
Equations~\eqref{eq:holomorphic-state-solution} and \eqref{eq:taun} then imply that $\tau_{n}$ depends only on the real deformation variable $t\coloneqq z+\zb$.
Thus, $a_{n}$ and $b_{n}$ are also functions of $t$ alone and are independent of the imaginary part of $z$.
The corresponding two-dimensional Toda lattice therefore reduces to the Toda lattice, and our correspondence reduces to the known Toda--Lanczos correspondence~\cite{DymarskyGorsky2020,TakahashiNandyDelCampo2026}.

\section{Geometry of holomorphic Krylov subspaces}
\label{sec:krylov-geometry}

In this section, let the domain of the complex parameter be a Riemann surface $\Sigma$ with local complex coordinate $z$.
The holomorphic Krylov exterior state defines a map from the region $\Sigma_{n}^{\circ}\subseteq\Sigma$, where the Krylov matrix has rank $n$, to the Grassmannian $\Gr\qty(n,\mathcal{H}_{D})$.
Under the Toda--Arnoldi correspondence, the $z$ component of the Berry connection associated with this map is expressed in terms of the sum of the Flaschka variables $a_{m}$.
The variable $b_{n}^{2}$ determines both the Berry curvature and the Fubini--Study metric.
Stokes' theorem equates the area integral of $b_{n}^{2}$ with the boundary integral of the sum of $a_{m}$.

In particular, if this map extends holomorphically to the entire compact Riemann surface $\Sigma$ without boundary, the integral of the Berry curvature defines a quantized Chern number~\cite{Simon1983}.
For a three-site model, we show that when Arnoldi breakdown lowers the Krylov dimension, this Chern number either changes or, together with the corresponding exterior state, ceases to be defined.
Appendix~\ref{app:grassmannian-geometry} defines projective spaces, Grassmannians, and the Fubini--Study metric.

\subsection{Berry connection, curvature, and Chern numbers}
\label{subsec:krylov-berry-connection-curvature}

The holomorphic Krylov exterior state defined in Eq.~\eqref{eq:holomorphic-exterior-state} is a local representative vector of the corresponding Pl\"ucker image.
On a local coordinate neighborhood where $\wket{\Psi_{n}\qty(z)}\neq0$, normalize it as
\begin{align}
  \wket{\widehat{\Psi}_{n}\qty(z,\zb)}
  &\coloneqq
  \frac{
    \wket{\Psi_{n}\qty(z)}
  }{
    \tau_{n}\qty(z,\zb)^{\frac{1}{2}}
  },
  \qquad
  1\leq n\leq D.
  \label{eq:normalized-krylov-exterior-state}
\end{align}
Following Ref.~\cite{Berry1984}, define its Berry connection $\mathcal{A}_{n}\qty(z,\zb)$ by
\begin{align}
  \mathcal{A}_{n}\qty(z,\zb)
  &\coloneqq
  \ii
  \wbraket{
    \widehat{\Psi}_{n}\qty(z,\zb)
  }{
    \dd\widehat{\Psi}_{n}\qty(z,\zb)
  }
  =
  \mathcal{A}_{n,z}\qty(z,\zb)\dd z
  +
  \mathcal{A}_{n,\zb}\qty(z,\zb)\dd\zb.
  \label{eq:krylov-exterior-berry-connection}
\end{align}
Holomorphicity and Eq.~\eqref{eq:toda-arnoldi-tau-derivative} give the components of the Berry connection in the phase convention of Eq.~\eqref{eq:normalized-krylov-exterior-state} as
\begin{align}
  \mathcal{A}_{n,z}\qty(z,\zb)
  &=
  \frac{\ii}{2}
  \partial_{z}\ln\qty(\tau_{n}\qty(z,\zb))
  =
  -\frac{\ii}{2}
  \sum_{m=0}^{n-1}a_{m}\qty(z,\zb),
  \label{eq:krylov-exterior-berry-connection-z}\\
  \mathcal{A}_{n,\zb}\qty(z,\zb)
  &=
  -\frac{\ii}{2}
  \partial_{\zb}\ln\qty(\tau_{n}\qty(z,\zb)).
  \label{eq:krylov-exterior-berry-connection-zbar}
\end{align}
Replacing the local representative vector according to
$\wket{\widehat{\Psi}_{n}}\mapsto
e^{\ii\chi}\wket{\widehat{\Psi}_{n}}$
transforms the Berry connection as
$\mathcal{A}_{n}\mapsto\mathcal{A}_{n}-\dd\chi$.
The Berry curvature $\mathcal{F}_{n}\qty(z,\zb)$, which is independent of the local representative, is
\begin{align}
  \mathcal{F}_{n}\qty(z,\zb)
  &\coloneqq
  \dd\mathcal{A}_{n}\qty(z,\zb)
  =
  -\ii
  \partial_{z}\partial_{\zb}
  \ln\qty(\tau_{n}\qty(z,\zb))
  \dd z\wedge\dd\zb
  \notag\\
  &=
  -\ii b_{n}\qty(z,\zb)^{2}
  \dd z\wedge\dd\zb.
  \label{eq:krylov-exterior-berry-curvature}
\end{align}
If the map to the Grassmannian extends to the entire compact Riemann surface $\Sigma$ without boundary, the locally defined curvature $\mathcal{F}_{n}$ extends to a global $2$-form on $\Sigma$.
Its first Chern number is
\begin{align}
  C_{n}
  &\coloneqq
  \frac{1}{2\pi}
  \int_{\Sigma}\mathcal{F}_{n}\qty(z,\zb)
  =
  -\frac{\ii}{2\pi}
  \int_{\Sigma}\dd z\wedge\dd\zb\,
  b_{n}\qty(z,\zb)^{2}
  \in\Z_{\le0}.
  \label{eq:krylov-exterior-first-chern-number}
\end{align}
Thus, $C_{n}$ is expressed in terms of the Flaschka variable $b_{n}\qty(z,\zb)^{2}$.
Because $b_{n}\qty(z,\zb)^{2}\ge0$, $C_{n}$ is nonpositive.

For a compact region $\Omega\subset\Sigma$ covered by a single local representative and with a smooth boundary, Stokes' theorem gives
\begin{align}
  \int_{\Omega}\mathcal{F}_{n}
  &=
  2\int_{\Omega}\dd\qty(\dd z\mathcal{A}_{n,z})
  =
  2\oint_{\partial\Omega}\dd z\,\mathcal{A}_{n,z}.
  \label{eq:krylov-exterior-stokes-theorem}
\end{align}
Substituting Eqs.~\eqref{eq:krylov-exterior-berry-connection-z} and \eqref{eq:krylov-exterior-berry-curvature} gives
\begin{align}
  \int_{\Omega}
  \dd z\wedge\dd\zb\,
  b_{n}\qty(z,\zb)^{2}
  &=
  \oint_{\partial\Omega}
  \dd z\,
  \qty(
    \sum_{m=0}^{n-1}a_{m}\qty(z,\zb)
  ).
  \label{eq:krylov-exterior-gauss-law}
\end{align}
Appendix~\ref{app:krylov-electrostatics} interprets this relation as Gauss's law in Krylov electrostatics.

Figure~\ref{fig:krylov-stokes-gauss} shows the correspondence between the area and boundary terms in Eq.~\eqref{eq:krylov-exterior-gauss-law}.

\begin{figure}[t]
  \centering
  \begin{tikzpicture}[
    x=1cm,
    y=1cm,
    plane/.style={
      draw=black!70,
      line width=0.7pt,
      fill=gray!3
    },
    region/.style={
      draw=black!78,
      line width=0.8pt,
      fill=blue!7
    },
    orientation/.style={
      -{Latex[length=3.3mm,width=2.4mm]},
      line width=1.35pt,
      line cap=round,
      draw=black!88
    },
    every node/.style={
      font=\small
    }
  ]
    \filldraw[plane]
      (-3.00,-1.50) rectangle (3.00,1.55);
    \node[anchor=north west,font=\footnotesize]
      at (-2.90,1.47)
      {$\qty(z,\zb)\in\Sigma$};

    \filldraw[region]
      (0,0.25) ellipse[x radius=2.15,y radius=0.78];

    \draw[orientation]
      plot[
        domain=200:340,
        samples=60,
        smooth
      ]
      ({2.15*cos(\x)},{0.25+0.78*sin(\x)});
    \node[anchor=east,font=\footnotesize]
      at (-2.10,0.28)
      {$\partial\Omega$};
    \node[font=\footnotesize]
      at (0,0.58)
      {$\Omega$};

    \node[align=center,font=\scriptsize]
      at (0,0.08)
      {$\displaystyle
        \int_{\Omega}
        \dd z\wedge\dd\zb\,
        b_{n}\qty(z,\zb)^{2}$};

    \node[align=center,font=\scriptsize]
      at (0,-1.05)
      {$\displaystyle
        =
        \oint_{\partial\Omega}
        \dd z\,
        \qty(
          \sum_{m=0}^{n-1}
          a_{m}\qty(z,\zb)
        )$};
  \end{tikzpicture}
  \caption{
    Stokes--Gauss relation for a compact region $\Omega$ of the parameter surface $\Sigma$.
    If $\Omega$ is covered by a single local representative and $\partial\Omega$ is smooth, the area integral of $b_{n}^{2}$ equals the boundary integral of $\sum_{m=0}^{n-1}a_{m}$.
    The arrow indicates the positive orientation of $\partial\Omega$.
  }
  \label{fig:krylov-stokes-gauss}
\end{figure}

\subsection{Fubini--Study metric}
\label{subsec:krylov-fubini-study-metric}

Each component of the holomorphic Krylov exterior state can be identified with a Pl\"ucker coordinate of the $n$-dimensional Krylov subspace $\K_{n}\qty(\HH,\ket{\psi\qty(z)})$ [see Eq.~\eqref{eq:plucker-embedding}].
We pull back the Fubini--Study metric on the Pl\"ucker image to $\Sigma$ along the Krylov-subspace map~\cite{ProvostVallee1980}.
Equation~\eqref{eq:taun-innerproduct} gives its local K\"ahler potential as
\begin{align}
  K_{n}^{\mathrm{FS}}\qty(z,\zb)
  &\coloneqq
  \ln\qty(\tau_{n}\qty(z,\zb)).
  \label{eq:krylov-fubini-study-kahler-potential}
\end{align}
Here, $\tau_{n}$ is the squared norm of the local representative vector $\wket{\Psi_{n}}$ and changes under a change of representative.
By contrast, $\partial_{z}\partial_{\zb}\ln\qty(\tau_{n})$ is invariant under such a change.
Equation~\eqref{eq:2d-toda-log-tau} gives the metric component and line element as
\begin{align}
  g_{z\zb}^{\qty(n)}\qty(z,\zb)
  &\coloneqq
  \partial_{z}\partial_{\zb}
  K_{n}^{\mathrm{FS}}\qty(z,\zb)
  =
  b_{n}\qty(z,\zb)^{2},
  \label{eq:krylov-fubini-study-metric}\\
  \dd s_{n,\mathrm{FS}}^{2}
  &=
  b_{n}\qty(z,\zb)^{2}\dd z\,\dd\zb.
  \label{eq:krylov-fubini-study-line-element}
\end{align}
The corresponding K\"ahler form is
\begin{align}
  \omega_{n}^{\mathrm{FS}}\qty(z,\zb)
  &\coloneqq
  \ii b_{n}\qty(z,\zb)^{2}
  \dd z\wedge\dd\zb
  =
  -\mathcal{F}_{n}\qty(z,\zb).
  \label{eq:krylov-fubini-study-kahler-form}
\end{align}
Equations~\eqref{eq:krylov-fubini-study-metric} and \eqref{eq:krylov-fubini-study-kahler-form} show that the same Flaschka variable $b_{n}\qty(z,\zb)^{2}$ determines the Fubini--Study metric and the Berry curvature.
Equation~\eqref{eq:krylov-grassmann-metric-arnoldi} in the Appendix gives a direct derivation of this identity using the Arnoldi basis.

Figure~\ref{fig:krylov-geometry-map} shows the relation between the holomorphic Krylov-subspace map and the geometric quantities induced by $\tau_{n}$.

\begin{figure}[t]
  \centering
  \begin{tikzpicture}[
    manuscript flow diagram,
    box/.append style={
      font=\footnotesize,
      inner xsep=4pt,
      inner ysep=6pt
    }
  ]
    \node[
      box,
      fill=blue!7,
      text width=2.45cm
    ] (sigma) at (-6.35,1.25) {
      {\footnotesize\textbf{Parameter}}\\[-0.2ex]
      {\footnotesize\textbf{surface}}\\[2pt]
      $\Sigma\ni\qty(z,\zb)$
    };

    \node[
      box,
      fill=green!9,
      text width=4.25cm
    ] (grassmann) at (-2.05,1.25) {
      {\footnotesize\textbf{Krylov subspace}}\\[2pt]
      $\K_{n}\qty(\HH,\ket{\psi\qty(z)})
      \in\Gr\qty(n,\mathcal{H}_{D})$
    };

    \node[
      box,
      fill=violet!9,
      text width=4.85cm
    ] (projective) at (4.75,1.25) {
      {\footnotesize\textbf{Pl\"ucker image}}\\[2pt]
      $\qty[\wket{\Psi_{n}\qty(z)}]
      \in\mathbb{P}\qty(\bigwedge^{n}\mathcal{H}_{D})$
    };

    \node[
      box,
      fill=gray!8,
      text width=5.80cm
    ] (tau) at (0,-1.25) {
      $\tau_{n}=\norm{\wket{\Psi_{n}}}^{2}$,
      \qquad
      $b_{n}^{2}
      =\partial_{z}\partial_{\zb}\ln\qty(\tau_{n})$
    };

    \node[
      box,
      fill=orange!10,
      text width=5.50cm
    ] (metric) at (-3.25,-3.35) {
      {\footnotesize\textbf{Fubini--Study metric}}\\[2pt]
      $g_{z\zb}^{\qty(n)}=b_{n}^{2}$,
      \qquad
      $\dd s_{n,\mathrm{FS}}^{2}=b_{n}^{2}\dd z\,\dd\zb$
    };

    \node[
      box,
      fill=violet!9,
      text width=6.10cm
    ] (berry) at (3.35,-3.35) {
      {\footnotesize\textbf{Berry curvature and Chern number}}\\[2pt]
      $\mathcal{F}_{n}
      =-\ii b_{n}^{2}\dd z\wedge\dd\zb$,
      \qquad
      $C_{n}=\frac{1}{2\pi}\int_{\Sigma}\mathcal{F}_{n}$
    };

    \draw[flow] (sigma.east) -- (grassmann.west);
    \draw[flow]
      (grassmann.east)
      --
      node[above,align=center,font=\footnotesize] {Pl\"ucker\\embedding}
      (projective.west);
    \draw[flow]
      (projective.south)
      --
      node[
        midway,
        right,
        fill=white,
        inner sep=1pt,
        font=\footnotesize
      ] {local representative on $\Sigma$}
      (tau.east);
    \draw[flow] (tau.south west) -- (metric.north);
    \draw[flow] (tau.south east) -- (berry.north);
  \end{tikzpicture}
  \caption{
    Geometric quantities obtained from the holomorphic Krylov map.
    In the region where $\operatorname{rank}\qty(\mathsf{K}_{n})=n$, $\tau_{n}=\norm{\wket{\Psi_{n}}}^{2}$ gives the squared norm of a local representative of the Pl\"ucker image, and $b_{n}^{2}=\partial_{z}\partial_{\zb}\ln\qty(\tau_{n})$ determines the Fubini--Study metric and the Berry curvature.
    If the map of Krylov subspaces extends holomorphically to the entire compact $\Sigma$ without boundary, the curvature integral gives the Chern number $C_{n}$.
  }
  \label{fig:krylov-geometry-map}
\end{figure}

\subsection{Chern numbers and Arnoldi breakdown in a three-site model}
\label{subsec:three-site-chern-breakdown}

We use a three-site model with nonreciprocal, non-Hermitian couplings to examine the relation
between Arnoldi breakdown and changes in the Chern numbers.
In this model, Arnoldi breakdown appears as a reduction in the Krylov
dimension.
Let $0\leq\theta\leq\pi/2$ be an external parameter of the Hamiltonian.
We impose the cyclic-vector assumption used in the preceding sections only for
$0<\theta<\pi/2$ and relax it at the endpoints $\theta=0,\pi/2$ to treat
Arnoldi breakdown.
Let $\ket{e_{1}},\ket{e_{2}},\ket{e_{3}}$ be the standard basis of
$\mathcal{H}_{3}\simeq\C^{3}$.
Using components in this basis, we define the Hamiltonian and the initial
state as follows:
\begin{align}
  \HH\qty(\theta)
  &\coloneqq
  \begin{pmatrix}
    0 & 0 & 0\\
    \cos\qty(\theta) & 0 & 0\\
    0 & \sin\qty(\theta) & 0
  \end{pmatrix}
  =
  \cos\qty(\theta)\ket{e_{2}}\bra{e_{1}}
  +
  \sin\qty(\theta)\ket{e_{3}}\bra{e_{2}},
  \notag\\
  \ket{\psi\qty(0)}
  &\coloneqq
  \begin{pmatrix}
    1\\
    0\\
    0
  \end{pmatrix}
  =
  \ket{e_{1}}.
  \label{eq:three-site-hamiltonian-seed}
\end{align}
The static Krylov sequence generated from this initial state is as follows:
\begin{align}
  \HH\qty(\theta)\ket{\psi\qty(0)}
  &=
  \begin{pmatrix}
    0\\
    \cos\qty(\theta)\\
    0
  \end{pmatrix}
  =
  \cos\qty(\theta)\ket{e_{2}},
  \notag\\
  \HH\qty(\theta)^{2}\ket{\psi\qty(0)}
  &=
  \begin{pmatrix}
    0\\
    0\\
    \cos\qty(\theta)\sin\qty(\theta)
  \end{pmatrix}
  =
  \cos\qty(\theta)\sin\qty(\theta)\ket{e_{3}},
  \notag\\
  \HH\qty(\theta)^{3}
  &=0.
  \label{eq:three-site-static-krylov-chain}
\end{align}
For each $\theta$, define the unnormalized holomorphic state by
\begin{align}
  \ket{\psi\qty(z;\theta)}
  &\coloneqq
  \eno{-z\HH\qty(\theta)}
  \ket{\psi\qty(0)}.
  \notag
\end{align}
Because $\eno{-z\HH\qty(\theta)}$ is invertible and commutes with
$\HH\qty(\theta)$, the Krylov dimension is independent of $z$ and takes the
following values:
\begin{align}
  d_{\mathrm K}\qty(z;\theta)
  &\coloneqq
  \dim\qty[\Span\qty(
    \ket{\psi\qty(z;\theta)},
    \HH\qty(\theta)\ket{\psi\qty(z;\theta)},
    \HH\qty(\theta)^{2}\ket{\psi\qty(z;\theta)}
  )]
  =
  \begin{cases}
    2,
    &\theta=0,\\[1mm]
    3,
    &0<\theta<\dfrac{\pi}{2},\\[2mm]
    1,
    &\theta=\dfrac{\pi}{2}
  \end{cases}.
  \label{eq:three-site-krylov-dimension-holomorphic}
\end{align}

For each $\theta$, we apply the Arnoldi method with the normalized seed vector
\begin{align}
  \ket{u_{0}\qty(z,\zb;\theta)}
  &\coloneqq
  \frac{
    \ket{\psi\qty(z;\theta)}
  }{
    \sqrt{\tau_{1}\qty(z,\zb;\theta)}
  }.
  \notag
\end{align}
The squared Arnoldi coefficients $b_{n}\qty(z,\zb;\theta)^{2}$ are given
below:
\begingroup
\small
\begin{align}
  b_{1}\qty(z,\zb;\theta)^{2}
  &\!=\!
  \frac{
    \cos^{2}\qty(\theta)
    \qty[
      1
      +\sin^{2}\qty(\theta)\abs{z}^{2}
      +\frac{\cos^{2}\qty(\theta)\sin^{2}\qty(\theta)}{4}
      \abs{z}^{4}
    ]
  }{
    \qty[
      1
      +\cos^{2}\qty(\theta)\abs{z}^{2}
      +\frac{\cos^{2}\qty(\theta)\sin^{2}\qty(\theta)}{4}
      \abs{z}^{4}
    ]^{2}
  },
  \label{eq:three-site-b-one}
  \\
  b_{2}\qty(z,\zb;\theta)^{2}
  &\!=\!
  \frac{
    \sin^{2}\qty(\theta)
    \qty[
      1
      +\cos^{2}\qty(\theta)\abs{z}^{2}
      +\frac{\cos^{2}\qty(\theta)\sin^{2}\qty(\theta)}{4}
      \abs{z}^{4}
    ]
  }{
    \qty[
      1
      +\sin^{2}\qty(\theta)\abs{z}^{2}
      +\frac{\cos^{2}\qty(\theta)\sin^{2}\qty(\theta)}{4}
      \abs{z}^{4}
    ]^{2}
  }.
  \label{eq:three-site-b-two}
\end{align}
\endgroup
Equation~\eqref{eq:three-site-b-one} holds for
$0\leq\theta\leq\pi/2$, whereas Eq.~\eqref{eq:three-site-b-two} holds for
$0\leq\theta<\pi/2$.
At $\theta=0$, Eq.~\eqref{eq:three-site-b-two} gives
$b_{2}\qty(z,\zb;0)^{2}=0$, which marks the end of the two-dimensional
Arnoldi chain.
At $\theta=\pi/2$, however, the second Arnoldi basis vector cannot be
constructed, so $b_{2}\qty(z,\zb;\pi/2)^{2}$ is undefined rather than zero.
Equation~\eqref{eq:three-site-b-one} then gives
$b_{1}\qty(z,\zb;\pi/2)^{2}=0$.
For each $\theta$ and $1\leq n\leq d_{\mathrm K}$, the Pl\"ucker
representative of the Krylov map is a nonzero polynomial vector in $z$.
After removing any common polynomial factor and homogenizing its components,
the representative defines a holomorphic extension to
$\Sigma=\C\cup\qty{\infty}\simeq\CP^{1}$.
Consistently, $b_{n}\qty(z,\zb;\theta)^{2}$ decays sufficiently fast at
$z=\infty$.
Integrating Eq.~\eqref{eq:krylov-exterior-first-chern-number} using this
extension of the Pl\"ucker map gives Table~\ref{tab:three-site-chern-strata}.

\begin{table}[t]
  \centering
  \caption{
    Krylov dimensions and Chern numbers of the three-site model.
    At $\theta=\pi/2$, the second exterior-product state cannot be
    constructed, so $C_{2}$ is undefined rather than zero.
  }
  \label{tab:three-site-chern-strata}
  \begin{tabular}{c c c c}
      \hline
      $\theta$
      & $d_{\mathrm K}$
      & $C_{1}$
      & $C_{2}$
      \\
      \hline
      $0$
      & $2$
      & $-1$
      & $0$
      \\
      $0<\theta<\pi/2$
      & $3$
      & $-2$
      & $-2$
      \\
      $\pi/2$
      & $1$
      & $0$
      & $\text{undefined}$
      \\
      \hline
  \end{tabular}
\end{table}

Figure~\ref{fig:three-site-chern-breakdown} shows the directed three-site
chain and the $\abs{z}$ dependence of $b_{1}\qty(z,\zb)^{2}$ and
$b_{2}\qty(z,\zb)^{2}$ at $\theta=\pi/6$.
At $\theta=\pi/6$, $b_{1}^{2}$ decreases monotonically from $\abs{z}=0$,
whereas $b_{2}^{2}$ has a broad maximum at a finite $\abs{z}$ and then
decreases.
In the interior of the parameter interval, both links in this model remain
active, so all three sites are reachable from the seed vector.
When either link is cut at an endpoint, fewer sites are reachable and the
Krylov dimension decreases.
Because this reduction changes the object on which the Chern number is
defined, the Chern number jumps to a different integer value or becomes
undefined when the corresponding Krylov subspace cannot be constructed.

\begin{figure}[t]
  \centering
  \definecolor{arnoldiblue}{RGB}{0,114,178}
  \definecolor{todaorange}{RGB}{213,94,0}
  \resizebox{\linewidth}{!}{%
  \begin{tikzpicture}[
    font=\small,
    site/.style={
      circle,
      draw=black!70,
      minimum size=12mm,
      inner sep=0pt,
      line width=0.6pt,
      fill=white
    },
    seed/.style={
      site,
      fill=blue!8,
      line width=0.9pt,
      draw=arnoldiblue
    },
    hop/.style={
      -{Latex[length=2.2mm,width=1.45mm]},
      line width=0.9pt
    },
    panellabel/.style={
      font=\bfseries\small,
      anchor=west
    },
    paneltitle/.style={
      font=\bfseries\small,
      anchor=west
    }
  ]
    \node[panellabel] at (0,5.35) {(a)};
    \node[paneltitle] at (0.55,5.35)
      {Directed three-site Krylov chain};

    \node[seed] (s1) at (1.00,3.80) {$\ket{e_{1}}$};
    \node[site] (s2) at (3.25,3.80) {$\ket{e_{2}}$};
    \node[site] (s3) at (5.50,3.80) {$\ket{e_{3}}$};

    \draw[hop]
      (s1)
      --
      node[above=3pt] {$\cos\qty(\theta)$}
      (s2);
    \draw[hop]
      (s2)
      --
      node[above=3pt] {$\sin\qty(\theta)$}
      (s3);

    \node[align=center,font=\small,arnoldiblue]
      at (1.00,2.95)
      {seed vector};

    \node[align=center,font=\normalsize,inner sep=0pt]
      at (3.25,1.75)
      {$\displaystyle
       \HH\ket{e_{1}}
       =\cos\qty(\theta)\ket{e_{2}}$
       \\[7pt]
       $\displaystyle
       \HH^{2}\ket{e_{1}}
       =\cos\qty(\theta)\sin\qty(\theta)\ket{e_{3}}$};

    \node[panellabel] at (8.15,5.35) {(b)};
    \node[paneltitle] at (8.70,5.35)
      {Squared Arnoldi coefficients at $\theta=\pi/6$};

    \begin{axis}[
      at={(8.65cm,0.45cm)},
      anchor=south west,
      width=7.55cm,
      height=4.35cm,
      scale only axis,
      xmin=0,
      xmax=6,
      ymin=0,
      ymax=0.8,
      xtick={0,1,2,3,4,5,6},
      ytick={0,0.2,0.4,0.6,0.8},
      xlabel={$\abs{z}$},
      ylabel={$b_{1,2}\qty(z,\zb)^{2}$},
      tick label style={font=\footnotesize},
      label style={font=\small},
      axis line style={black!70},
      tick style={black!70},
      grid=major,
      grid style={black!12,line width=0.35pt},
      legend style={
        draw=none,
        fill=none,
        font=\footnotesize,
        at={(0.97,0.97)},
        anchor=north east,
        cells={anchor=west}
      },
      clip=true
    ]
      \addplot[
        arnoldiblue,
        line width=1.20pt,
        domain=0:6,
        samples=241
      ]
        {48*(3*x^4+16*x^2+64)/
          (3*x^4+48*x^2+64)^2};
      \addlegendentry{$n=1$}

      \addplot[
        todaorange,
        line width=1.20pt,
        dashed,
        domain=0:6,
        samples=241
      ]
        {16*(3*x^4+48*x^2+64)/
          (3*x^4+16*x^2+64)^2};
      \addlegendentry{$n=2$}
    \end{axis}
  \end{tikzpicture}%
  }
  \caption{
    Three-site non-Hermitian nonreciprocal model.
    (a)~Directed Krylov chain generated by the seed vector
    $\ket{\psi\qty(0)}=\ket{e_{1}}$, with couplings
    $\cos\qty(\theta)$ and $\sin\qty(\theta)$.
    (b)~Dependence of $b_{1}\qty(z,\zb)^{2}$ (blue solid line) and
    $b_{2}\qty(z,\zb)^{2}$ (orange dashed line) on $\abs{z}$ at
    $\theta=\pi/6$, where $d_{\mathrm K}=3$
    [Eqs.~\eqref{eq:three-site-b-one} and~\eqref{eq:three-site-b-two}].
    Both tend to zero as $\abs{z}\to\infty$.
  }
  \label{fig:three-site-chern-breakdown}
\end{figure}

\section{Shortcut to adiabaticity in the Arnoldi--Krylov chain}
\label{STA-in-Krylov-chain}

In this section, we fix $\HH$ and the initial seed vector and consider a smooth
real path along which all $D$ Arnoldi basis vectors remain defined.

We represent the Arnoldi matrix as a Krylov chain and construct a Hermitian
auxiliary term that generates its isospectral deformation.
We then derive the evolution operator for the resulting total generator.
When the spectrum is nondegenerate and complete sets of right and left
eigenvectors exist, we show that this evolution cancels transitions between
distinct instantaneous eigenspaces.
Appendix~\ref{app:sta-basics} reviews general counterdiabatic driving.
For Hermitian systems, counterdiabatic terms have been constructed in an
operator-space Krylov basis using the Lanczos method
~\cite{TakahashiDelCampo2024,Bhattacharjee2023AGP}.
Extensions to non-Hermitian systems using the bi-Lanczos and Arnoldi methods
have also been proposed~\cite{ShresthaBhattacharjeeDelCampo2026}.
These approaches expand the adiabatic gauge potential in an operator-space
Krylov basis.
Here, instead, we use the connection of the Arnoldi basis obtained from state
vectors to construct an auxiliary term that generates an isospectral
deformation of the Arnoldi matrix.

\subsection{Arnoldi--Krylov chain and counterdiabatic generator}
\label{subsec:sta-arnoldi-path-generator}

Let $\ket{u_{0}},\ldots,\ket{u_{D-1}}$ denote the Arnoldi basis.
The Arnoldi matrix in Eq.~\eqref{eq:toda-arnoldi-hessenberg-form} can be
written in the following form:
\begin{align}
  \Aarn
  &=
  \begin{pmatrix}
    a_{0} & h_{0,1} & \cdots & h_{0,D-1} \\
    b_{1} & a_{1} & \ddots & \vdots \\
    \vdots & \ddots & \ddots & h_{D-2,D-1} \\
    0 & \cdots & b_{D-1} & a_{D-1}
  \end{pmatrix}.
  \label{eq:sta-krylov-chain}
\end{align}
Here, $h_{m,n}$ with $m<n$ are the strictly upper-triangular entries of
$\Aarn$.
Only the nearest-neighbor coupling $b_{n+1}$ appears in the direction from
$\ket{u_{n}}$ to $\ket{u_{n+1}}$, whereas long-range couplings $h_{m,n}$ may
appear in the reverse direction $\ket{u_{n}}\to\ket{u_{m}}$ $\qty(m<n)$.
Thus, a general Arnoldi matrix has different coupling ranges in the two
directions.

We parameterize a real path in the complex deformation-parameter space as
follows:
\begin{align}
  \gamma:
  s
  &\longmapsto
  \qty(z\qty(s),\zb\qty(s)),
  &
  \zb\qty(s)=z\qty(s)^{*}.
  \label{eq:sta-protocol-path}
\end{align}
Here, $s$ denotes the protocol time, and $z$ is the holomorphic deformation
parameter in Eq.~\eqref{eq:EqMstate}.
An overdot denotes differentiation with respect to $s$.

Let $\Uarn$ be the unitary matrix whose columns are the Arnoldi basis vectors.
Appendix~\ref{app:main-theorem-qr-proof},
Eq.~\eqref{eq:qr-frame-connections}, gives the following frame connections:
\begin{align}
  \mathsf{F}_{z}
  &\coloneqq
  -\Uarn^{\dagger}\partial_{z}\Uarn
  =
  \begin{pmatrix}
    \dfrac{a_{0}}{2} & 0 & \cdots & 0 \\
    b_{1} & \dfrac{a_{1}}{2} & \ddots & \vdots \\
    \vdots & \ddots & \ddots & 0 \\
    0 & \cdots & b_{D-1} & \dfrac{a_{D-1}}{2}
  \end{pmatrix},
  \notag\\
  \mathsf{F}_{\zb}
  &\coloneqq
  -\Uarn^{\dagger}\partial_{\zb}\Uarn
  =
  -
  \begin{pmatrix}
    \dfrac{\bar{a}_{0}}{2} & b_{1} & \cdots & 0 \\
    0 & \dfrac{\bar{a}_{1}}{2} & \ddots & \vdots \\
    \vdots & \ddots & \ddots & b_{D-1} \\
    0 & \cdots & 0 & \dfrac{\bar{a}_{D-1}}{2}
  \end{pmatrix}.
  \label{eq:sta-frame-connection}
\end{align}

We define the auxiliary generator along the path $\gamma$ as follows:
\begin{align}
  \Gmat_{\gamma}
  &\coloneqq
  \ii
  \qty(
    \dot z\mathsf{F}_{z}
    +
    \dot{\zb}\mathsf{F}_{\zb}
  )
  =
  -\ii\Uarn^{\dagger}\dot{\Uarn}.
  \label{eq:sta-path-generator}
\end{align}
Substituting Eq.~\eqref{eq:sta-frame-connection} gives
\begin{align}
  d_{n}
  &\coloneqq
  \frac{\ii}{2}
  \qty(
    \dot z a_{n}
    -
    \dot{\zb}\bar{a}_{n}
  ),
  \notag\\
  \Gmat_{\gamma}
  &=
  \begin{pmatrix}
    d_{0} & -\ii\dot{\zb}b_{1} & \cdots & 0 \\
    \ii\dot z b_{1} & d_{1} & \ddots & \vdots \\
    \vdots & \ddots & \ddots & -\ii\dot{\zb}b_{D-1} \\
    0 & \cdots & \ii\dot z b_{D-1} & d_{D-1}
  \end{pmatrix}.
  \label{eq:sta-path-generator-elements}
\end{align}
Because $\dot{\zb}=\dot z^{*}$ along a real path, $\Gmat_{\gamma}$ is a
Hermitian tridiagonal matrix in the Arnoldi basis.

Since $\HH$ is independent of $s$, $\Aarn=\Uarn^{\dagger}\HH\Uarn$ and
Eq.~\eqref{eq:sta-path-generator} give the Lax equation:
\begin{align}
  \ii\dot{\Aarn}
  &=
  \comm{\Gmat_{\gamma}}{\Aarn}.
  \label{eq:sta-path-lax}
\end{align}
Fix the initial point at $s=0$ on the real path and define
\begin{align}
  \mathsf{W}\qty(s)
  &\coloneqq
  \Uarn^{\dagger}\qty(s)\Uarn\qty(0).
  \label{eq:sta-frame-intertwiner}
\end{align}
This definition gives
\begin{align}
  \ii\partial_{s}\mathsf{W}\qty(s)
  &=
  \Gmat_{\gamma}\qty(s)\mathsf{W}\qty(s),
  \notag\\
  \Aarn\qty(s)
  &=
  \mathsf{W}\qty(s)
  \Aarn\qty(0)
  \mathsf{W}\qty(s)^{\dagger}.
  \label{eq:sta-frame-transport}
\end{align}
Thus, the spectrum and Jordan canonical form of $\Aarn\qty(s)$ are preserved
along the path.

\subsection{Exact transport and non-Hermitian counterdiabatic driving}
\label{subsec:sta-exact-transport}

Consider the evolution generated by adding $\Gmat_{\gamma}\qty(s)$ to
$\Aarn\qty(s)$:
\begin{align}
  \ii\dv{}{s}\ket{\Psi\qty(s)}
  &=
  \qty[
    \Aarn\qty(s)
    +
    \Gmat_{\gamma}\qty(s)
  ]
  \ket{\Psi\qty(s)}.
  \label{eq:sta-krylov-total-evolution}
\end{align}
The corresponding evolution operator takes the form:
\begin{align}
  \mathsf{S}\qty(s)
  &=
  \mathsf{W}\qty(s)
  \eno{-\ii\Aarn\qty(0)s}
  =
  \Uarn^{\dagger}\qty(s)
  \eno{-\ii\HH s}
  \Uarn\qty(0).
  \label{eq:sta-krylov-propagator}
\end{align}
Therefore, an initial state $\ket{\Psi\qty(0)}$ evolves as follows:
\begin{align}
  \ket{\Psi\qty(s)}
  &=
  \mathsf{S}\qty(s)\ket{\Psi\qty(0)}
  =
  \Uarn^{\dagger}\qty(s)
  \eno{-\ii\HH s}
  \Uarn\qty(0)\ket{\Psi\qty(0)}.
  \label{eq:sta-krylov-state-evolution}
\end{align}
Direct differentiation using Eq.~\eqref{eq:sta-frame-transport} verifies that
this expression satisfies the evolution equation
~\eqref{eq:sta-krylov-total-evolution}.
The second equality in Eq.~\eqref{eq:sta-krylov-propagator} is an identity
that expresses evolution under the fixed $\HH$ in the time-dependent Arnoldi
basis.

Equation~\eqref{eq:sta-krylov-propagator} does not assume that $\Aarn$ is
diagonalizable and therefore also applies when it is nondiagonalizable.

Next, suppose that the eigenvalues of $\Aarn\qty(s)$ are nondegenerate on the
interval of interest and that complete sets of right and left eigenvectors
exist.
We normalize the right and left eigenvectors as follows:
\begin{align}
  \Aarn\qty(s)\ket{r_{n}\qty(s)}
  &=
  E_{n}\ket{r_{n}\qty(s)},
  &
  \bra{l_{n}\qty(s)}\Aarn\qty(s)
  &=
  E_{n}\bra{l_{n}\qty(s)},
  \notag\\
  \braket{l_{m}\qty(s)}{r_{n}\qty(s)}
  &=
  \delta_{m,n}.
  \label{eq:sta-krylov-biorthogonal-eigenvectors}
\end{align}
Equation~\eqref{eq:sta-frame-transport} implies that the eigenvalues $E_{n}$
are independent of $s$.
Choose the eigenvectors as follows:
\begin{align}
  \ket{r_{n}\qty(s)}
  &=
  \mathsf{W}\qty(s)
  \ket{r_{n}\qty(0)},
  &
  \bra{l_{n}\qty(s)}
  &=
  \bra{l_{n}\qty(0)}
  \mathsf{W}\qty(s)^{\dagger}.
  \label{eq:sta-krylov-transported-eigenvectors}
\end{align}
With this choice,
$\ii\ket{\dot r_{n}}=\Gmat_{\gamma}\ket{r_{n}}$.
Substituting $\ket{\Psi}=\sum_{n}c_{n}\ket{r_{n}}$ into
Eq.~\eqref{eq:sta-krylov-total-evolution} gives
\begin{align}
  \ii\dot c_{n}
  &=
  E_{n}c_{n}.
  \label{eq:sta-krylov-coefficient-evolution}
\end{align}
Equation~\eqref{eq:sta-krylov-coefficient-evolution} shows that no transitions
occur between distinct instantaneous eigenspaces.
In this sense, $\Gmat_{\gamma}$ acts as a counterdiabatic term that cancels
nonadiabatic transitions.

Let $\Pmat_{n}\coloneqq\ket{r_{n}}\bra{l_{n}}$.
For a non-Hermitian system, the standard auxiliary term with zero diagonal
entries in the biorthogonal eigenbasis can be written as
follows~\cite{IbanezEtAl2011,IbanezEtAl2012Erratum}:
\begin{align}
  \HH_{\mathrm{cd,NH}}^{\qty(0)}
  &\coloneqq
  \ii
  \sum_{n}
  \qty[
    \ket{\dot r_{n}}\bra{l_{n}}
    -
    \braket{l_{n}}{\dot r_{n}}\Pmat_{n}
  ]
  =
  \Gmat_{\gamma}
  \notag\\
  &\quad
  -
  \sum_{n}\Pmat_{n}\Gmat_{\gamma}\Pmat_{n}.
  \label{eq:sta-krylov-standard-nonhermitian-cd}
\end{align}
The off-diagonal entries of $\HH_{\mathrm{cd,NH}}^{\qty(0)}$ and
$\Gmat_{\gamma}$ coincide in the biorthogonal eigenbasis.
Because $\HH_{\mathrm{cd,NH}}^{\qty(0)}$ contains oblique projections, it is
generally non-Hermitian and need not be tridiagonal in the Arnoldi basis.
By contrast, $\Gmat_{\gamma}$ includes diagonal entries in the biorthogonal
eigenbasis but is Hermitian and tridiagonal in the Arnoldi basis, as shown in
Eq.~\eqref{eq:sta-path-generator-elements}.

\subsection{Two-dimensional Toda lattice and Krylov geometry}
\label{subsec:sta-toda-geometry}

Because $\mathsf{F}_{z}$ and $\mathsf{F}_{\zb}$ are the Arnoldi-frame
connections obtained from $-\Uarn^{\dagger}\dd\Uarn$, they satisfy the
flatness condition:
\begin{align}
  \partial_{z}\mathsf{F}_{\zb}
  -
  \partial_{\zb}\mathsf{F}_{z}
  -
  \comm{\mathsf{F}_{z}}{\mathsf{F}_{\zb}}
  &=
  0.
  \label{eq:sta-frame-flatness}
\end{align}
Using the Toda--Arnoldi correspondence, the coefficients in these connections
satisfy the following equations~\cite{UenoTakasaki1984,Takasaki2018Toda}:
\begin{align}
  \partial_{\zb}a_{n}
  &=
  b_{n}^{2}-b_{n+1}^{2},
  \notag\\
  \partial_{z}b_{n}^{2}
  &=
  \qty(a_{n-1}-a_{n})b_{n}^{2}.
  \label{eq:sta-toda-flows}
\end{align}
The first equation holds for $0\leq n\leq D-1$, and the second for
$1\leq n\leq D-1$, with boundary conditions $b_{0}=b_{D}=0$.
Equation~\eqref{eq:sta-frame-flatness} expresses the compatibility between
changes in the Arnoldi basis along the $z$ and $\zb$ directions.

In counterdiabatic driving, we do not treat $z$ and $\zb$ as two independent
times.
We instead evaluate the connections along the real path in
Eq.~\eqref{eq:sta-protocol-path}.
Multiplying their component along the path
$\dot z\mathsf{F}_{z}+\dot{\zb}\mathsf{F}_{\zb}$ by $\ii$ gives
$\Gmat_{\gamma}$ in Eq.~\eqref{eq:sta-path-generator}.
The Flaschka variables of the two-dimensional Toda lattice therefore
determine the diagonal and nearest-neighbor entries of $\Gmat_{\gamma}$.
In contrast, the general upper-triangular entries $h_{m,n}$ in
Eq.~\eqref{eq:sta-krylov-chain} need not be determined by $a_{n}$ and $b_{n}$
alone.

Equation~\eqref{eq:krylov-fubini-study-metric} gives the Fubini--Study metric
of the $n$-dimensional Krylov subspace as
$g_{z\zb}^{\qty(n)}=b_{n}^{2}$.
Let $\ell_{n,\mathrm{FS}}$ denote the Fubini--Study arc length along the real
path.
For $1\leq n\leq D-1$, we then have
\begin{align}
  \qty(
    \dv{\ell_{n,\mathrm{FS}}}{s}
  )^{2}
  &=
  g_{z\zb}^{\qty(n)}
  \dot z\dot{\zb}
  =
  b_{n}^{2}\dot z\dot{\zb},
  \notag\\
  \abs{
    \qty(\Gmat_{\gamma})_{n,n-1}
  }^{2}
  &=
  b_{n}^{2}\dot z\dot{\zb}
  =
  \qty(
    \dv{\ell_{n,\mathrm{FS}}}{s}
  )^{2}.
  \label{eq:sta-fs-speed}
\end{align}
Thus, the squared magnitude of the subdiagonal matrix element
$\qty(\Gmat_{\gamma})_{n,n-1}$ equals the squared speed measured by the
Fubini--Study metric of the corresponding Krylov subspace.

Finally, consider the case $\HH=\HH^{\dagger}$.
Then $\Aarn=\LLop$ and $a_{n}\in\R$.
Equations~\eqref{eq:sta-frame-connection},
\eqref{eq:sta-path-generator},
\eqref{eq:toda-lax-m}, and~\eqref{eq:toda-lax-l} give
\begin{align}
  \Gmat_{\gamma}
  &=
  \qty(\dot z+\dot{\zb})\ii\MM
  -
  \frac{\dot z-\dot{\zb}}{2\ii}\LLop.
  \label{eq:sta-hermitian-limit-generator}
\end{align}
The second term commutes with $\LLop$ and does not contribute to transitions
between instantaneous eigenstates.
The auxiliary term with zero diagonal entries in the instantaneous eigenbasis
therefore reduces to
\begin{align}
  \Hcd^{\qty(0)}
  &=
  \qty(\dot z+\dot{\zb})\ii\MM.
  \label{eq:sta-hermitian-zero-diagonal}
\end{align}
Thus, the general expression in Eq.~\eqref{eq:sta-path-generator-elements}
reduces in the Hermitian limit to the result of Okuyama and Takahashi
~\cite{OkuyamaTakahashi2016}.

\section{Conclusion}
\label{sec:conclusion}

We have established a finite-dimensional Toda--Arnoldi correspondence for a
fixed, generally non-Hermitian Hamiltonian and a holomorphically deformed
cyclic state.
In this correspondence, each $\tau$ function can be regarded as the square of
the volume spanned by the corresponding unnormalized holomorphic Krylov vectors.
We proved directly that the Toda Flaschka variables coincide with the diagonal and subdiagonal Arnoldi coefficients, and obtained the same result independently by QR factorization.
Differentiating the QR factorization also gives the Arnoldi Lax equation.
The correspondence reduces to the Toda--Lanczos relation in the Hermitian
limit~\cite{DymarskyGorsky2020,TakahashiNandyDelCampo2026}.
It is exact in the cyclic region, but it does not determine the remaining
entries of the upper Hessenberg matrix.

The same correspondence shows that $b_{n}^{2}$ determines both the Fubini--Study metric and the Berry curvature of the holomorphic Krylov subspace, while $\sum_{m=0}^{n-1}a_{m}$ determines the $z$ component of the Berry connection in the chosen phase convention.
When the Krylov-subspace map extends holomorphically to a compact Riemann
surface without boundary, the curvature integral defines a Chern number.
When Arnoldi breakdown changes the Krylov dimension, the associated Chern
numbers can change or become undefined.
Such breakdown does not occur along the holomorphic flow of a cyclic seed,
because that flow preserves cyclicity.

Along a smooth real path in the cyclic region, the Arnoldi-frame connection
provides a Hermitian tridiagonal generator for an exact similarity evolution
of the Arnoldi matrix.
This evolution preserves the spectrum even when the matrix is
nondiagonalizable.
In the nondegenerate, diagonalizable case, the same generator cancels
transitions between instantaneous eigenspaces and thus realizes
counterdiabatic driving.
This construction uses a moving state-space Arnoldi frame, whereas related
approaches expand an adiabatic gauge potential in an operator-space Krylov
basis~\cite{TakahashiDelCampo2024,ShresthaBhattacharjeeDelCampo2026}.

A natural extension of this correspondence is to superoperator dynamics, where Arnoldi and bi-Lanczos
methods have already been applied
~\cite{MingantiHuybrechts2022,BhattacharyaEtAl2022Arnoldi,BhattacharyaEtAl2023BiLanczos}.
In particular, we would like to extend the present correspondence to
generators of the Gorini--Kossakowski--Sudarshan--Lindblad form
~\cite{GoriniKossakowskiSudarshan1976,Lindblad1976}.
Other possible directions include broader Toda hierarchies
~\cite{KodamaYe1996,KodamaWilliams2015,UenoTakasaki1984,Takasaki2018Toda},
a Sato-theoretic formulation of the Arnoldi data
~\cite{AlexandrovZabrodin2013}, and discrete or ultradiscrete analogues of the
correspondence~\cite{Tsujimoto2002DiscreteToda,MarunoBiondini2004}.
Finally, the correspondence may provide a Toda-lattice framework for studying
state spreading in Krylov space and the associated Krylov complexity in
non-Hermitian dynamics
~\cite{LiuTangZhai2023,NandyEtAl2025Review,BhattacharyaEtAl2023BiLanczos,BhattacharyaEtAl2024Spread,NandyEtAl2025SVD}.

\section*{Acknowledgements}
We thank Y. Ito, K. Shimomura, K. Takahashi, K. Totsuka, and S. Tsujimoto for helpful discussions.
This work was inspired by a talk at the YITP International Molecule-type Workshop YITP-T-25-03, ``Hydrodynamics of low-dimensional interacting systems: Advances, challenges, and future directions,'' held at the Yukawa Institute for Theoretical Physics, Kyoto University.

\paragraph{Author contributions}
U.M. performed the research and wrote the manuscript.

\paragraph{Funding information}
U.M. was supported by JSPS KAKENHI Grant Number~JP25KJ1573.

\paragraph{Data and code availability}
No external datasets are associated with this theoretical work.
All material needed to reproduce the derivations and figures is contained in the manuscript and its source files.

\paragraph{Generative AI disclosure}
OpenAI Codex (GPT-5) was used to assist with English-language editing and LaTeX formatting.
The author critically reviewed and approved all changes and remains responsible for the content.


\appendix
\numberwithin{equation}{section}

\section{Review of the two-dimensional Toda lattice}
\label{app:toda-review}

This appendix summarizes the two-dimensional Toda lattice used in this paper
in terms of $\tau$ functions and Flaschka variables.
We first review the Toda lattice and then its two-dimensional extension.
In the two-dimensional Toda hierarchy, which includes the two-dimensional Toda lattice,
the standard zero-curvature condition is equivalent to a Lax equation.
We do not state either formulation explicitly here
~\cite{UenoTakasaki1984,Takasaki2018Toda}.

\subsection{Toda lattice}
\label{appsub:toda-lattice}

We consider a finite nonperiodic Toda lattice with $D$ sites.
Let $q_{n}\qty(t)$, $\qty(0\leq n\leq D-1)$, denote dynamical variables
that depend on time $t\in\R$~\cite{Toda1967,Flaschka1974Integrals,Moser1975}:
\begin{align}
  \dv[2]{}{t}q_{n}
  &=
  \eno{q_{n+1}-q_{n}}
  - \eno{q_{n}-q_{n-1}}.
  \label{eq:toda-newton}
\end{align}
At the lower boundary $n=0$, we set the second term on the right-hand side to zero;
at the upper boundary $n=D-1$, we set the first term to zero.
We suppress time arguments when convenient.

Many integrable systems can be recast in Hirota bilinear form by introducing
$\tau$ functions~\cite{Hirota2004}. These functions provide a starting point
for Sato theory and integrable hierarchies~\cite{UenoTakasaki1984,Takasaki2018Toda}.
The positive $\tau$ functions $\tau_{n}\qty(t)$ and the dynamical variables
$q_{n}\qty(t)$ are related for $0\leq n\leq D-1$ as follows
~\cite{KajiwaraMazzoccoOhta2007}:
\begin{align}
  q_{n}
  &=
  \ln\qty(\frac{\tau_{n+1}}{\tau_{n}}).
  \label{eq:toda-q-tau}
\end{align}
Here, $\tau_{0}\coloneqq1$, and we impose $\tau_{D+1}\coloneqq0$ as the
boundary condition for a finite chain.
The Hirota bilinear identity for the Toda lattice takes the form
~\cite{KajiwaraMazzoccoOhta2007}:
\begin{align}
  \tau_{n}\dv[2]{\tau_{n}}{t}
  - \qty(\dv{\tau_{n}}{t})^{2}
  &=
  \tau_{n+1}\tau_{n-1},
  \qquad
  1\leq n\leq D.
  \label{eq:toda-bilinear}
\end{align}
The logarithmic-derivative form of the same equation is
\begin{align}
  \dv[2]{}{t}\ln\qty(\tau_{n})
  &=
  \frac{\tau_{n+1}\tau_{n-1}}{\tau_{n}^{2}},
  \qquad
  1\leq n\leq D.
  \label{eq:toda-log-tau}
\end{align}
A Hankel-determinant solution of this bilinear identity is constructed from
$\tau_{1}\qty(t)$ as follows~\cite{KajiwaraMazzoccoOhta2007}:
\begin{align}
  \tau_{n}\qty(t)
  &=
  \det\qty[
    \dv[i+j]{}{t}\tau_{1}\qty(t)
  ]_{0 \leq i,j \leq n-1},
  \qquad
  1\leq n\leq D.
  \label{eq:toda-hankel}
\end{align}
For a finite chain, we use a solution for which the same determinant vanishes at $n=D+1$.
Equation~\eqref{eq:toda-hankel} satisfies the Hirota bilinear identity by the following
Desnanot--Jacobi identity for a general square matrix $\mathsf{M}$
~\cite{Krattenthaler1999} ($i \neq j$):
\begin{align}
  \det\qty(\mathsf{M})
  \det\qty(\mathsf{M}_{i,j}^{i,j})
  &=
  \det\qty(\mathsf{M}_{i}^{i})
  \det\qty(\mathsf{M}_{j}^{j})
  -
  \det\qty(\mathsf{M}_{i}^{j})
  \det\qty(\mathsf{M}_{j}^{i}).
  \label{eq:desnanot-jacobi}
\end{align}
Here, $\mathsf{M}_{i}^{j}$ denotes the submatrix obtained from $\mathsf{M}$ by
deleting row $i$ and column $j$, whereas $\mathsf{M}_{i,j}^{i,j}$ denotes the
submatrix obtained by deleting the rows indexed by $i,j$ and the columns indexed
by $i,j$.

The integrability of the Toda lattice is expressed by a Lax representation in
terms of the Flaschka variables. The variables $a_{n}\qty(t)\in\R$ for
$\qty(0\leq n\leq D-1)$ and $b_{n}\qty(t)>0$ for
$\qty(1\leq n\leq D-1)$ are related to the $\tau$ functions as follows
~\cite{Flaschka1974Integrals,KajiwaraMazzoccoOhta2007}:
\begin{align}
  a_{n}
  &=
  -\dv{}{t}\ln\qty(\frac{\tau_{n+1}}{\tau_{n}}),
  \label{eq:toda-flaschka-a}\\
  b_{n}^{2}
  &=
  \frac{\tau_{n+1}\tau_{n-1}}{\tau_{n}^{2}}.
  \label{eq:toda-flaschka-b}
\end{align}
Here, $b_{0}\coloneqq b_{D}\coloneqq0$.
The Flaschka variables obey~\cite{Flaschka1974Integrals,KajiwaraMazzoccoOhta2007}:
\begin{align}
  \dv{}{t}a_{n}
  &=
  b_{n}^{2}-b_{n+1}^{2},
  \label{eq:toda-flaschka-a-flow}\\
  \dv{}{t}b_{n}^{2}
  &=
  \qty(a_{n-1}-a_{n})b_{n}^{2}.
  \label{eq:toda-flaschka-b-flow}
\end{align}
The first equation holds for $0\leq n\leq D-1$, and the second for
$1\leq n\leq D-1$. These equations of motion admit a Lax representation in
terms of the matrices $\LLop\qty(t),\MM\qty(t)$
~\cite{Lax1968,Flaschka1974Integrals}:
\begin{align}
  \dv{}{t}\LLop
  &=
  \comm{\MM}{\LLop}.
  \label{eq:toda-lax}
\end{align}
Here, $\MM,\LLop$ are defined by
\begin{align}
  \MM
  &\coloneqq
  \frac{1}{2}
  \begin{pmatrix}
    0     & -b_{1} &          &              & 0 \\
    b_{1} & 0      & -b_{2}   &              &   \\
          & \ddots & \ddots   & \ddots       &   \\
          &        & b_{D-2}  & 0            & -b_{D-1} \\
    0     &        &          & b_{D-1}      & 0
  \end{pmatrix},
  \label{eq:toda-lax-m}\\
  \LLop
  &\coloneqq
  \begin{pmatrix}
    a_{0} & b_{1} &          &              & 0 \\
    b_{1} & a_{1} & b_{2}    &              &   \\
          & \ddots & \ddots   & \ddots       &   \\
          &        & b_{D-2}  & a_{D-2}      & b_{D-1} \\
    0     &        &          & b_{D-1}      & a_{D-1}
  \end{pmatrix}.
  \label{eq:toda-lax-l}
\end{align}
$\MM$ and $\LLop$ are $D\times D$ matrices.
Under the commutator flow generated by the real antisymmetric matrix $\MM$,
the eigenvalues of $\LLop$ are conserved and give independent conserved
quantities of the finite nonperiodic Toda lattice.
This finite-dimensional system is completely integrable
~\cite{Flaschka1974Integrals,Moser1975}.

\subsection{Two-dimensional Toda equation}
\label{appsub:phi-rep}

The two-dimensional Toda lattice is defined by the following partial differential
equation for dynamical variables $\phi_{n}\qty(z,\zb)\in\C$ with
$\qty(0\leq n\leq D-1)$ and $\qty(z,\zb)\in\C\times\C$
~\cite{HirotaItoKako1988,Takasaki2018Toda}:
\begin{align}
  \partial_{z}\partial_{\zb}\phi_{n}
  &=
  \eno{\phi_{n+1}-\phi_{n}}
  -
  \eno{\phi_{n}-\phi_{n-1}}.
  \label{eq:2d-toda-field}
\end{align}
At the lower boundary $n=0$, we set the second term on the right-hand side to zero;
at the upper boundary $n=D-1$, we set the first term to zero.
Compared with the Toda lattice, the second derivative with respect to time is
replaced by partial derivatives with respect to the two independent variables
$\qty(z,\zb)$.

For the Gram-determinant solution used in this paper, setting $\zb\coloneqq z^{*}$
gives $\tau_{n}\qty(z,\zb)\geq0$ by Eq.~\eqref{eq:taun-innerproduct}.
The condition $\tau_{n}\qty(z,\zb)>0$ is equivalent to
$\operatorname{rank}\mathsf{K}_{n}\qty(z)=n$ [see
Eq.~\eqref{eq:krylov-rank-tau-equivalence}].
We impose this complex-conjugation relation below and define logarithms and
Flaschka variables where the required $\tau$ functions are positive.

\subsection{\texorpdfstring{$\tau$}{tau} functions and moment/Gram determinants}
\label{appsub:tau-rep}

In a region where the required $\tau_{n}\qty(z,\zb)$ and
$\tau_{n+1}\qty(z,\zb)$ are positive, we define the real dynamical variable
$\phi_{n}\qty(z,\zb)\in\R$ as follows
~\cite{UenoTakasaki1984,Takasaki2018Toda}:
\begin{align}
  \phi_{n}
  &=
  \ln\qty(\frac{\tau_{n+1}}{\tau_{n}}).
  \label{eq:2d-toda-phi-tau}
\end{align}
Here, $\tau_{0}\coloneqq1$, and we impose $\tau_{D+1}\coloneqq0$ as the
boundary condition for a finite chain.
The Hirota bilinear identity for the two-dimensional Toda lattice takes the form
~\cite{UenoTakasaki1984,Takasaki2018Toda}
\begin{align}
  \tau_{n}\partial_{z}\partial_{\zb}\tau_{n}
  -
  \qty(\partial_{z}\tau_{n})
  \qty(\partial_{\zb}\tau_{n})
  &=
  \tau_{n+1}\tau_{n-1},
  \qquad
  1\leq n\leq D.
  \label{eq:2d-toda-bilinear}
\end{align}
Equivalently, its logarithmic-derivative form is
\begin{align}
  \partial_{z}\partial_{\zb}\ln\qty(\tau_{n})
  &=
  \frac{\tau_{n+1}\tau_{n-1}}{\tau_{n}^{2}},
  \qquad
  1\leq n\leq D.
  \label{eq:2d-toda-log-tau}
\end{align}

In the present construction, the moment determinant solution of this bilinear
identity is a Gram determinant constructed from $\tau_{1}\qty(z,\zb)$ as follows
~\cite{AdlerVanMoerbeke1997Moment}:
\begin{align}
  \tau_{n}\qty(z,\zb)
  &=
  \det\qty[
    \partial_{z}^{i}
    \partial_{\zb}^{j}
    \tau_{1}\qty(z,\zb)
  ]_{0 \leq i,j \leq n-1},
  \qquad
  1\leq n\leq D.
  \label{eq:2d-toda-hankel}
\end{align}
For a finite chain, we use a solution for which the same determinant vanishes at $n=D+1$.
Equation~\eqref{eq:2d-toda-hankel} satisfies the bilinear identity by the
Desnanot--Jacobi identity~\eqref{eq:desnanot-jacobi}.

\subsection{Flaschka variables}
\label{appsub:flaschka-rep}

The Flaschka variables $a_{n}\qty(z,\zb)\in\C$ for
$\qty(0\leq n\leq D-1)$ and $b_{n}\qty(z,\zb)>0$ for
$\qty(1\leq n\leq D-1)$ of the two-dimensional Toda lattice are defined in
terms of the $\tau$ functions as follows
~\cite{UenoTakasaki1984,Takasaki2018Toda}:
\begin{align}
  a_{n}
  &\coloneqq
  -\partial_{z}
  \ln\qty(\frac{\tau_{n+1}}{\tau_{n}}),
  \label{eq:2d-toda-flaschka-a}\\
  b_{n}^{2}
  &\coloneqq
  \frac{\tau_{n+1}\tau_{n-1}}{\tau_{n}^{2}}.
  \label{eq:2d-toda-flaschka-b}
\end{align}
Here, $b_{0}\coloneqq b_{D}\coloneqq0$.
These variables obey~\cite{UenoTakasaki1984,Takasaki2018Toda}:
\begin{align}
  \partial_{\zb}a_{n}
  &=
  b_{n}^{2}-b_{n+1}^{2},
  \label{eq:2d-toda-flaschka-a-flow}\\
  \partial_{z}b_{n}^{2}
  &=
  \qty(a_{n-1}-a_{n})b_{n}^{2}.
  \label{eq:2d-toda-flaschka-b-flow}
\end{align}
The first equation holds for $0\leq n\leq D-1$, and the second for
$1\leq n\leq D-1$. These equations of motion can also be written as a
standard zero-curvature condition.

\subsection{Toda reduction}
\label{appsub:toda-reduction}

To reduce the two-dimensional Toda lattice to the Toda lattice, define the real
time $t$ by
\begin{align}
  t
  &\coloneqq
  z+\zb.
  \label{eq:toda-reduction-time}
\end{align}
We consider the case in which $\phi_{n}$ and $\tau_{n}$ depend only on $t$ and
not on $z-\zb$~\cite{Takasaki2018Toda}.
Then $a_{n}$ and $b_{n}$ also depend only on $t$, and the partial derivatives
acting on these variables become
\begin{align}
  \partial_{z}
  &=
  \partial_{\zb}
  =
  \dv{}{t}.
  \label{eq:toda-reduction-derivatives}
\end{align}

Define the reduced dynamical variable $q_{n}$ by
\begin{align}
  q_{n}
  &=
  \phi_{n}.
  \label{eq:toda-reduction-coordinate}
\end{align}
The oscillator equation~\eqref{eq:2d-toda-field} in
Sec.~\ref{appsub:phi-rep} becomes
\begin{align}
  \dv[2]{q_{n}}{t}
  &=
  \eno{q_{n+1}-q_{n}}
  -
  \eno{q_{n}-q_{n-1}}.
  \label{eq:toda-reduction-newton}
\end{align}
This equation and its boundary conditions at both ends coincide with the Toda
equation~\eqref{eq:toda-newton} in Sec.~\ref{appsub:toda-lattice}.

\section{Arnoldi method}
\label{app:arnoldi}

Let $\HH \in \operatorname{End}\qty(\C^{N})$ be a square matrix that is not
necessarily Hermitian, and let $\ket{u_{0}} \in \C^{N}$ be a nonzero normalized
seed vector. The Arnoldi method uses Krylov subspaces to transform $\HH$ by a
unitary matrix $\Uarn$ into the upper Hessenberg form $\Aarn$
~\cite{Arnoldi1951,Saad1980}:
\begin{align}
  \Uarn^{\dagger}\HH\Uarn
  = \Aarn
  =
  \begin{pmatrix}
    \alpha_{0} & *          & *          & *          & \cdots \\
    \beta_{1}  & \alpha_{1} & *          & *          & \cdots \\
    0          & \beta_{2}  & \alpha_{2} & *          & \cdots \\
    0          & 0          & \beta_{3}  & \alpha_{3} & \ddots \\
    \vdots     & \vdots     & \vdots     & \ddots     & \ddots
  \end{pmatrix}.
  \label{eq:arnoldi-hessenberg-form}
\end{align}

If a matrix element is viewed as a transition amplitude from Krylov site $n$ to
site $m$, Eq.~\eqref{eq:arnoldi-hessenberg-form} permits only the nearest-neighbor
transition $n\to n+1$ in the direction away from the seed vector. In the opposite
direction, it permits long-range transitions $n\to m<n$.
For such upper Hessenberg Krylov-chain representations and Arnoldi constructions
in Floquet and dissipative open systems, see
Refs.~\cite{YatesMitra2021,MingantiHuybrechts2022,BhattacharyaEtAl2022Arnoldi,NizamiShrestha2023}.

We define the $n$th Krylov subspace generated by $\HH$ and $\ket{u_{0}}$ as
\begin{align}
  \K_{n}\qty(\HH,\ket{u_{0}})
  &\coloneqq
  \Span\qty(
    \ket{u_{0}},
    \HH\ket{u_{0}},
    \HH^{2}\ket{u_{0}},
    \cdots,
    \HH^{n-1}\ket{u_{0}}
  ).
  \label{eq:arnoldi-krylov-subspace}
\end{align}
In this appendix, we assume $\K_{N}\qty(\HH,\ket{u_{0}})=\C^{N}$; thus,
$\ket{u_{0}}$ is a cyclic vector of $\HH$. Under this assumption, we apply
Gram--Schmidt orthogonalization to the Krylov sequence and construct $\Uarn$ by
placing the resulting orthonormal basis vectors in its columns.
Related Krylov-subspace methods include the Lanczos and bi-Lanczos methods.
If $\HH$ is Hermitian, the Arnoldi reduction reduces to the Lanczos reduction,
and $\Aarn$ is a Hermitian tridiagonal matrix~\cite{Lanczos1950,Saad1980}.
The bi-Lanczos method tridiagonalizes a non-Hermitian matrix by a generally
nonunitary similarity transformation~\cite{Gutknecht1992,FreundGutknechtNachtigal1993}.
It is also used to describe dissipative open systems involving non-Hermitian
matrices~\cite{BhattacharyaEtAl2023BiLanczos}.
\subsection{Arnoldi iteration}
\label{appsub:arnoldi-pseudocode}

Consider $\HH \in \operatorname{End}\qty(\C^{N})$ and a nonzero seed vector
$\ket{\psi}$, and assume
\begin{align}
  \K_{N}\qty(\HH,\ket{\psi})
  &=
  \C^{N}.
  \label{eq:arnoldi-cyclic-vector-condition}
\end{align}
A vector $\ket{\psi}$ that satisfies this condition is called a cyclic vector
of $\HH$. The condition prevents Arnoldi breakdown for $n < N-1$ and yields a
complete unitary transformation~\cite{Arnoldi1951,Saad1980}.
The iteration based on Gram--Schmidt orthogonalization is as follows.

\noindent\textbf{Input:}
A square matrix $\HH \in \operatorname{End}\qty(\C^{N})$ and a nonzero seed
vector $\ket{\psi}$ satisfying $\K_{N}\qty(\HH,\ket{\psi}) = \C^{N}$.

\noindent\textbf{Initialization:}
$\ket{u_{0}}\gets\ket{\psi}/\norm{\ket{\psi}}$.

\noindent\textbf{Iteration:}
For $n = 0,1,\ldots,N-1$, perform the following steps.
\begin{enumerate}
  \item
  \label{item:arnoldi-projection-coefficients}
  For $m = 0,1,\ldots,n-1$, set
  \begin{align*}
    h_{m,n}
    &\gets
    \mel{u_{m}}{\HH}{u_{n}}.
  \end{align*}

  \item
  \label{item:arnoldi-diagonal-coefficient}
  Set the diagonal entry to
  \begin{align*}
    \alpha_{n}
    &\gets
    \mel{u_{n}}{\HH}{u_{n}}.
  \end{align*}

  \item
  \label{item:arnoldi-residual}
  Subtract the components along the known basis vectors:
  \begin{align*}
    \ket{r_{n+1}}
    &\gets
    \HH\ket{u_{n}}
    -\sum_{m = 0}^{n-1}h_{m,n}\ket{u_{m}}
    -\alpha_{n}\ket{u_{n}}.
  \end{align*}

  \item
  \label{item:arnoldi-normalization}
  If $n < N-1$, set
  \begin{align*}
    \beta_{n+1}
    &\gets
    \norm{\ket{r_{n+1}}},
    &
    \ket{u_{n+1}}
    &\gets
    \frac{\ket{r_{n+1}}}{\beta_{n+1}}.
  \end{align*}
\end{enumerate}

\noindent\textbf{Output:}
\begin{align*}
  \Uarn
  &=
  \qty(
    \ket{u_{0}},
    \ket{u_{1}},
    \ldots,
    \ket{u_{N-1}}
  ),
  \\
  \Aarn
  &=
  \Uarn^{\dagger}\HH\Uarn.
\end{align*}
Here, $\Uarn$ is unitary, and $\Aarn$ has upper Hessenberg form.

\subsection{Arnoldi relation and Hessenberg reduction}
\label{appsub:arnoldi-hessenberg-derivation}

Starting from a normalized seed vector $\ket{u_{0}}$, suppose that we have
obtained the orthonormal basis vectors $\ket{u_{0}},\ldots,\ket{u_{n}}$.
The next basis vector is obtained by orthogonalizing $\HH\ket{u_{n}}$ against
the known basis vectors. Define
\begin{align}
  h_{m,n}
  &\coloneqq
  \mel{u_{m}}{\HH}{u_{n}},
  \qquad
  0 \leq m \leq n-1,
  \label{eq:arnoldi-coefficient-h}\\
  \alpha_{n}
  &\coloneqq
  \mel{u_{n}}{\HH}{u_{n}}.
  \label{eq:arnoldi-coefficient-alpha}
\end{align}
We define the residual vector by
\begin{align}
  \ket{r_{n+1}}
  &\coloneqq
  \HH\ket{u_{n}}
  -\sum_{m = 0}^{n-1}h_{m,n}\ket{u_{m}}
  -\alpha_{n}\ket{u_{n}}.
  \label{eq:arnoldi-residual}
\end{align}
The definitions of the coefficients and the orthonormality of the basis imply
\begin{align}
  \braket{u_{m}}{r_{n+1}}
  &=
  0,
  \qquad
  0 \leq m \leq n.
  \label{eq:arnoldi-residual-orthogonality}
\end{align}
The assumption $\K_{N}\qty(\HH,\ket{u_{0}}) = \C^{N}$ ensures that
$\ket{r_{n+1}} \neq 0$ for $0 \leq n \leq N-2$. Therefore, define
\begin{align}
  \beta_{n+1}
  &\coloneqq
  \norm{\ket{r_{n+1}}},
  &
  \ket{u_{n+1}}
  &\coloneqq
  \frac{1}{\beta_{n+1}}\ket{r_{n+1}}.
  \label{eq:arnoldi-normalization}
\end{align}
Then $\ket{u_{n+1}}$ is orthogonal to the known basis vectors and has unit norm.
This definition gives $\beta_{n+1} > 0$ and simultaneously fixes the phase of
$\ket{u_{n+1}}$.

Rearranging the definition of the residual vector gives, for
$0 \leq n \leq N-2$,
\begin{align}
  \HH\ket{u_{n}}
  &=
  \sum_{m = 0}^{n-1}h_{m,n}\ket{u_{m}}
  +\alpha_{n}\ket{u_{n}}
  +\beta_{n+1}\ket{u_{n+1}}.
  \label{eq:arnoldi-recurrence}
\end{align}
For the last basis vector, $\ket{u_{0}},\ldots,\ket{u_{N-1}}$ span $\C^{N}$, so
\begin{align}
  \HH\ket{u_{N-1}}
  &=
  \sum_{m = 0}^{N-2}h_{m,N-1}\ket{u_{m}}
  +\alpha_{N-1}\ket{u_{N-1}}.
  \label{eq:arnoldi-last-column}
\end{align}
The resulting $N$ vectors form an orthonormal basis of $\C^{N}$. Hence,
\begin{align}
  \Uarn^{\dagger}\Uarn
  &=
  \Uarn\Uarn^{\dagger}
  =
  \Imat_{N},
  \label{eq:arnoldi-unitarity}
\end{align}
and $\Uarn$ is unitary. Writing these relations column by column gives
\begin{align}
  \HH\Uarn
  &=
  \Uarn\Aarn,
  &
  \Aarn
  &=
  \Uarn^{\dagger}\HH\Uarn.
  \label{eq:arnoldi-matrix-relation}
\end{align}
The only new basis vector that can appear in each column $\HH\ket{u_{n}}$ is
$\ket{u_{n+1}}$. Therefore, all entries of $\Aarn$ below the first subdiagonal vanish.
Thus, $\Aarn$ has upper Hessenberg form.

\subsection{Truncation and nonnormality}
\label{appsub:arnoldi-truncation}

Truncating the Krylov dimension leaves a boundary residual that leaks out of the
Krylov subspace. For a normal matrix, the distance from a Ritz value to the
spectrum of the original matrix is bounded above by the residual norm of the
corresponding Ritz vector~\cite{Saad1980}. For a nonnormal matrix, nonorthogonal
eigenvectors can produce a larger error for the same residual. If the nonnormal
matrix is diagonalizable, this bound is multiplied by the condition number of
its eigenvector matrix~\cite{BauerFike1960}. The error in time evolution is bounded
by a time integral of the boundary residual weighted by the norm of the evolution
operator~\cite{BotchevGrimmHochbruck2013,JaweckiAuzingerKoch2020}.
Strong nonnormality can cause transient amplification, requiring estimates based
on the numerical range or pseudospectrum~\cite{Trefethen1997,HochbruckLubich1997}.

\section{Counterdiabatic driving}
\label{app:sta-basics}

Shortcuts to adiabaticity (STA) are methods that reproduce in finite time the
adiabatic evolution generated by a Hamiltonian that would otherwise have to vary
sufficiently slowly. This section formulates counterdiabatic driving in
finite-dimensional systems. We then relate this construction to Lewis--Riesenfeld
invariants and Lax equations when the Hamiltonian has fixed eigenvalues, and
finally summarize what changes in non-Hermitian systems.
For a general formulation of STA, see Ref.~\cite{GueryOdelinEtAl2019}.
We denote time by $s$, which is sometimes called protocol time in STA.

\subsection{Hermitian counterdiabatic driving}
\label{appsub:sta-transitionless-target}
\label{appsub:sta-nonadiabatic-coupling}

Consider a Hermitian matrix $\Had\qty(s)$ that varies with time $s$.
We assume that its eigenvalues are nondegenerate over the time interval of
interest. The instantaneous eigenvalues $E_{n}\qty(s)$ and eigenstates
$\ket{n\qty(s)}$ satisfy
\begin{align}
  \Had\qty(s)
  \ket{n\qty(s)}
  &=
  E_{n}\qty(s)
  \ket{n\qty(s)},
  \notag\\
  \braket{m\qty(s)}{n\qty(s)}
  &=
  \delta_{m,n},
  \qquad
  \sum_{n}
  \ket{n\qty(s)}
  \bra{n\qty(s)}
  =
  \Imat_{D}.
  \label{eq:sta-instantaneous-eigenvalue}
\end{align}

First, consider the evolution generated by $\Had$ alone:
\begin{align}
  \ii\dv{}{s}\ket{\Psi\qty(s)}
  &=
  \Had\qty(s)\ket{\Psi\qty(s)}.
  \label{eq:sta-schrodinger}
\end{align}
Expand the wave function in the instantaneous eigenstates as
\begin{align}
  \ket{\Psi\qty(s)}
  &=
  \sum_{n}
  c_{n}\qty(s)\ket{n\qty(s)}.
  \label{eq:sta-instantaneous-expansion}
\end{align}
Substituting Eq.~\eqref{eq:sta-instantaneous-expansion} into
Eq.~\eqref{eq:sta-schrodinger} and multiplying from the left by
$\bra{m\qty(s)}$ gives
\begin{align}
  \ii\dot c_{m}\qty(s)
  &=
  E_{m}\qty(s)c_{m}\qty(s)
  -
  \ii
  \sum_{n}
  \braket{m\qty(s)}{\dot n\qty(s)}
  c_{n}\qty(s).
  \label{eq:sta-coefficient-evolution}
\end{align}
The components with $n\neq m$ in the second term on the right-hand side can
induce transitions between distinct instantaneous eigenstates.
Differentiating the eigenvalue equation gives, for $m\neq n$,
\begin{align}
  \braket{m\qty(s)}{\dot n\qty(s)}
  &=
  \frac{
    \mel{m\qty(s)}{\dot{\HH}_{\mathrm{ad}}\qty(s)}{n\qty(s)}
  }{
    E_{n}\qty(s)-E_{m}\qty(s)
  }.
  \label{eq:sta-nonadiabatic-coupling}
\end{align}
Thus, the right-hand side of Eq.~\eqref{eq:sta-nonadiabatic-coupling} is the
nonadiabatic coupling amplitude between the two instantaneous eigenstates.

Choose an auxiliary term $\Hcd\qty(s)$ that cancels this off-diagonal coupling
by requiring
\begin{align}
  \mel{m\qty(s)}{\Hcd\qty(s)}{n\qty(s)}
  &=
  \ii\braket{m\qty(s)}{\dot n\qty(s)},
  \qquad
  m\neq n.
  \label{eq:sta-counterdiabatic-offdiagonal}
\end{align}
Define the total Hamiltonian including this auxiliary term as
\begin{align}
  \HH_{\mathrm{tot}}\qty(s)
  &\coloneqq
  \Had\qty(s)
  +
  \Hcd\qty(s).
  \label{eq:sta-total-hamiltonian}
\end{align}
The full evolution is governed by
\begin{align}
  \ii\dv{}{s}\ket{\Psi\qty(s)}
  &=
  \HH_{\mathrm{tot}}\qty(s)\ket{\Psi\qty(s)}.
  \label{eq:sta-total-schrodinger}
\end{align}
A standard choice with zero diagonal entries in this eigenbasis is the following
~\cite{DemirplakRice2003,Berry2009Transitionless}:
\begin{align}
  \Hcd^{\qty(0)}\qty(s)
  &\coloneqq
  \ii
  \sum_{n}
  \qty[
    \ket{\dot n\qty(s)}
    \bra{n\qty(s)}
    -
    \braket{n\qty(s)}{\dot n\qty(s)}
    \ket{n\qty(s)}\bra{n\qty(s)}
  ].
  \label{eq:sta-counterdiabatic}
\end{align}
Equation~\eqref{eq:sta-counterdiabatic-offdiagonal} fixes only the off-diagonal
entries. Adding a Hermitian term that commutes with $\Had$ therefore leaves the
cancellation of transitions unchanged and modifies only the phase of each
eigenstate. For $\Hcd=\Hcd^{\qty(0)}$, Eq.~\eqref{eq:sta-total-schrodinger}
gives
\begin{align}
  \ii\dot c_{n}
  &=
  \qty(
    E_{n}
    -
    \ii\braket{n}{\dot n}
  )
  c_{n},
  \label{eq:sta-transitionless-coefficient}
\end{align}
so coefficients of distinct eigenstates do not couple. The exact solution for
the initial state $\ket{n\qty(0)}$ is therefore
\begin{align}
  \ket{\Psi_{n}^{\mathrm{ad}}\qty(s)}
  &=
  \eno{
    -\ii
    \int_{0}^{s}\dd{s'}\,
    E_{n}\qty(s')
    -
    \int_{0}^{s}\dd{s'}\,
    \braket{n\qty(s')}{\partial_{s'}n\qty(s')}
  }
  \ket{n\qty(s)}.
  \label{eq:sta-adiabatic-target}
\end{align}
The two integrals in the exponent give the dynamical and geometric phases,
respectively.

\subsection{Evolution with fixed eigenvalues, dynamical invariants, and Lax equations}
\label{appsub:sta-isospectral-lax}
\label{appsub:sta-counterdiabatic}

Define the orthogonal projector $\mathsf{P}_{n}$ onto each instantaneous
eigenstate and the generator $\mathsf{G}$ of the time dependence of the
eigenbasis by
\begin{align}
  \mathsf{P}_{n}\qty(s)
  &\coloneqq
  \ket{n\qty(s)}\bra{n\qty(s)}
  \label{eq:sta-spectral-projector}
  \\
  \mathsf{G}\qty(s)
  &\coloneqq
  \ii
  \sum_{n}
  \ket{\dot n\qty(s)}\bra{n\qty(s)}.
  \label{eq:sta-full-generator}
\end{align}
Differentiating the completeness relation gives
$\mathsf{G}^{\dagger}=\mathsf{G}$ and
\begin{align}
  \ii\dot{\mathsf{P}}_{n}
  &=
  \comm{\mathsf{G}}{\mathsf{P}_{n}}.
  \label{eq:sta-projector-transport}
\end{align}
Equation~\eqref{eq:sta-counterdiabatic} can also be written as
\begin{align}
  \Hcd^{\qty(0)}
  &=
  \mathsf{G}
  -
  \sum_{n}
  \mathsf{P}_{n}\mathsf{G}\mathsf{P}_{n}.
  \label{eq:sta-counterdiabatic-projector}
\end{align}
Thus, $\mathsf{G}$ and $\Hcd^{\qty(0)}$ differ only in their diagonal entries and
generate the same evolution of the eigenspaces. For $\Hcd=\mathsf{G}$, an
eigenstate initialized at $s=0$ evolves exactly as
\begin{align}
  \ket{\Psi_{n}^{\qty(G)}\qty(s)}
  &=
  \eno{
    -\ii
    \int_{0}^{s}\dd{s'}\,
    E_{n}\qty(s')
  }
  \ket{n\qty(s)}
  .
  \label{eq:sta-full-generator-state}
\end{align}

In general, the eigenvalues also depend on $s$.
Differentiating $\Had=\sum_{n}E_{n}\mathsf{P}_{n}$ and using
$\ii\dot{\mathsf{P}}_{n}=\comm{\mathsf{G}}{\mathsf{P}_{n}}$ gives
\begin{align}
  \ii\dot{\HH}_{\mathrm{ad}}
  &=
  \comm{\mathsf{G}}{\Had}
  +
  \ii
  \sum_{n}
  \dot E_{n}\mathsf{P}_{n}.
  \label{eq:sta-general-had-evolution}
\end{align}
The last term vanishes only when the eigenvalues are constant, giving
\begin{align}
  \ii\dot{\HH}_{\mathrm{ad}}
  &=
  \comm{\mathsf{G}}{\Had}
  =
  \comm{\Hcd^{\qty(0)}}{\Had}.
  \label{eq:sta-isospectral-lax}
\end{align}
Equation~\eqref{eq:sta-isospectral-lax} has the form of a Lax equation while
retaining the freedom to choose the diagonal entries of the counterdiabatic
term. This freedom remains because the diagonal term subtracted from $\mathsf{G}$
in Eq.~\eqref{eq:sta-counterdiabatic-projector} commutes with $\Had$ and does not
appear in the commutator.

A Lewis--Riesenfeld invariant $\mathsf{I}\qty(s)$ is defined as a Hermitian
matrix satisfying~\cite{LewisRiesenfeld1969}:
\begin{align}
  \ii\dot{\mathsf{I}}
  &=
  \comm{\HH_{\mathrm{tot}}}{\mathsf{I}}.
  \label{eq:sta-lr-invariant-definition}
\end{align}
For a matrix family with fixed eigenvalues, choose $\Hcd=\mathsf{G}$ and set
$\HH_{\mathrm{tot}}=\Had+\mathsf{G}$. Equation~\eqref{eq:sta-isospectral-lax}
then becomes
\begin{align}
  \ii\dot{\HH}_{\mathrm{ad}}
  &=
  \comm{\HH_{\mathrm{tot}}}{\Had}.
  \label{eq:sta-lr-invariant}
\end{align}
Thus, $\Had$ itself is a Lewis--Riesenfeld invariant of the total Hamiltonian in
this case.

Equation~\eqref{eq:sta-isospectral-lax} can also be written as
\begin{align}
  \dot{\HH}_{\mathrm{ad}}
  &=
  \comm{-\ii\mathsf{G}}{\Had},
  \label{eq:sta-standard-lax-equation}
\end{align}
so $\qty(\Had,-\ii\mathsf{G})$ forms a Lax pair.
Because $\mathsf{G}$ is Hermitian, $-\ii\mathsf{G}$ is anti-Hermitian.
Okuyama and Takahashi identified the Lewis--Riesenfeld invariant equation with
a Lax equation for a classical nonlinear integrable system and constructed a
counterdiabatic term from the corresponding Lax pair~\cite{OkuyamaTakahashi2016}.

\subsection{Non-Hermitian counterdiabatic driving}
\label{appsub:sta-nonhermitian-transport}

Thus far, we have considered Hermitian Hamiltonians.
For a non-Hermitian system, right and left eigenstates must be treated separately
~\cite{Brody2014,AshidaGongUeda2020}. We now allow $\Had\qty(s)$ to be
non-Hermitian. We assume that a complete set of right and left eigenstates exists
over the interval of interest and that all eigenvalues are nondegenerate.
We normalize the right and left eigenstates by
\begin{align}
  \Had\qty(s)
  \ket{r_{n}\qty(s)}
  &=
  E_{n}\qty(s)\ket{r_{n}\qty(s)},
  \notag\\
  \bra{l_{n}\qty(s)}
  \Had\qty(s)
  &=
  E_{n}\qty(s)\bra{l_{n}\qty(s)},
  \notag\\
  \braket{l_{m}\qty(s)}{r_{n}\qty(s)}
  &=
  \delta_{m,n}.
  \label{eq:sta-biorthogonal-eigenvectors}
\end{align}
The corresponding oblique projectors satisfy
\begin{align}
  \mathsf{P}_{n}\qty(s)
  &\coloneqq
  \ket{r_{n}\qty(s)}\bra{l_{n}\qty(s)},
  &
  \sum_{n}\mathsf{P}_{n}\qty(s)
  &=
  \Imat_{D}.
  \label{eq:sta-biorthogonal-projectors}
\end{align}
Unlike in a Hermitian system, $\mathsf{P}_{n}$ is generally not an orthogonal
projector, and $E_{n}$ can be complex. The state norm is generally not conserved,
and the biorthogonal normalization is not unique under complex rescaling.

Expanding a solution of $\ii\partial_{s}\ket{\Psi}=\Had\ket{\Psi}$ as
$\ket{\Psi}=\sum_{n}c_{n}\ket{r_{n}}$ gives
\begin{align}
  \ii\dot c_{m}
  &=
  E_{m}c_{m}
  -
  \ii
  \sum_{n}
  \braket{l_{m}}{\dot{r}_{n}}c_{n}.
  \label{eq:sta-nonhermitian-coefficient-evolution}
\end{align}
Thus, $\braket{l_{m}}{\dot{r}_{n}}$ with $m\neq n$ can induce transitions
between distinct instantaneous eigenstates.

We formally write the state corresponding to the adiabatic evolution under
$\Had$ and initialized at $s=0$ as
\begin{align}
  \ket{\Psi_{n}^{\mathrm{ad}}\qty(s)}
  &=
  \eno{
    -\ii
    \int_{0}^{s}\dd{s'}\,
    E_{n}\qty(s')
    -
    \int_{0}^{s}\dd{s'}\,
    \braket{l_{n}\qty(s')}{\partial_{s'}r_{n}\qty(s')}
  }
  \ket{r_{n}\qty(s)}.
  \label{eq:sta-nonhermitian-adiabatic-phase}
\end{align}
The standard auxiliary term that generates this state exactly in finite time is the following
~\cite{IbanezEtAl2011,IbanezEtAl2012Erratum}:
\begin{align}
  \Hcd^{\qty(0)}
  &\coloneqq
  \ii
  \sum_{n}
  \qty[
    \ket{\dot{r}_{n}}\bra{l_{n}}
    -
    \braket{l_{n}}{\dot{r}_{n}}
    \mathsf{P}_{n}
  ].
  \label{eq:sta-nonhermitian-counterdiabatic}
\end{align}
Indeed, for $m\neq n$,
\begin{align}
  \mel{l_{m}}{\Hcd^{\qty(0)}}{r_{n}}
  &=
  \ii\braket{l_{m}}{\dot{r}_{n}},
  \label{eq:sta-nonhermitian-offdiagonal}
\end{align}
and $\Had+\Hcd^{\qty(0)}$ has the state in
Eq.~\eqref{eq:sta-nonhermitian-adiabatic-phase} as an exact solution.
Because the right and left eigenstates are not adjoints of one another,
$\Hcd^{\qty(0)}$ is generally non-Hermitian. In the Hermitian limit, it reduces
to Eq.~\eqref{eq:sta-counterdiabatic}.

Equation~\eqref{eq:sta-nonhermitian-adiabatic-phase} corresponds to the adiabatic
approximation for evolution under $\Had$ alone, but it is an exact solution for
the total Hamiltonian including $\Hcd^{\qty(0)}$.

\section{QR proof of the Toda--Arnoldi correspondence}
\label{app:main-theorem-qr-proof}

We reprove the main theorem~\eqref{eq:toda-arnoldi-coefficients} from a QR
factorization and derive the Lax equations for the Arnoldi matrix.
See Refs.~\cite{Symes1982QR,DeiftLiTomei1989} for the classical relations
among the finite nonperiodic Toda lattice, the QR algorithm, and matrix
factorizations.
In the following, we assume that $\ket{\psi\qty(0)}$ is a cyclic vector of
$\HH$.
We suppress the $\qty(z,\zb)$ dependence except where needed.

\subsection{QR factorization and coefficient identification}
\label{appsub:qr-coefficients}

For $1\leq n\leq D$, define the matrix whose columns are the first $n$
Krylov vectors by
\begin{align}
  \mathsf{K}_{n}
  &\coloneqq
  \qty(
    \ket{\psi},
    \HH\ket{\psi},
    \ldots,
    \HH^{n-1}\ket{\psi}
  )
  \in\C^{D\times n}.
  \label{eq:qr-krylov-matrix}
\end{align}
The full matrix satisfies
$\mathsf{K}_{D}\qty(z)
=\eno{-z\HH}\mathsf{K}_{D}\qty(0)$.
The cyclic-vector assumption makes $\mathsf{K}_{D}\qty(0)$ invertible, and
$\eno{-z\HH}$ is also invertible.
Thus, $\mathsf{K}_{D}\qty(z)$ is invertible and has the unique QR
factorization
\begin{align}
  \mathsf{K}_{D}
  &=
  \Uarn\mathsf{R},
  &
  \Uarn
  &=
  \qty(
    \ket{u_{0}},\ldots,\ket{u_{D-1}}
  ),
  \notag\\
  \Uarn^{\dagger}\Uarn
  &=
  \Uarn\Uarn^{\dagger}
  =
  \Imat_{D},
  &
  r_{n}
  &\coloneqq
  \qty(\mathsf{R})_{nn}>0
  \quad\qty(0\leq n\leq D-1),
  \label{eq:qr-full-factorization}
\end{align}
where $\Imat_{D}$ is the $D$-dimensional identity matrix and $\mathsf{R}$ is
upper triangular.
With this convention, $\Uarn$ and $\mathsf{R}$ depend smoothly on the
parameters~\cite{DieciEirola1999}, and
$\mathsf{K}_{n}=\Uarn_{n}\mathsf{R}_{n}$.
Here, $\Uarn_{n}$ consists of the first $n$ columns of $\Uarn$, and
$\mathsf{R}_{n}$ is the upper-left $n\times n$ block of $\mathsf{R}$.
In particular, $\ket{u_{0}}$ coincides with the seed vector in
Eq.~\eqref{eq:toda-arnoldi-seed}.

By the uniqueness of the QR factorization with positive diagonal entries, the
columns of $\Uarn$ coincide with the Arnoldi basis used in the main text.
Therefore,
\begin{align}
  \Aarn
  &\coloneqq
  \Uarn^{\dagger}\HH\Uarn,
  \notag\\
  \qty(\Aarn)_{nn}
  &=\alpha_{n},
  &&0\leq n\leq D-1,
  \notag\\
  \qty(\Aarn)_{n+1,n}
  &=\beta_{n+1}
  =\frac{r_{n+1}}{r_{n}}>0,
  &&0\leq n\leq D-2.
  \label{eq:qr-diagonal-ratio}
\end{align}
For $1\leq n\leq D-1$, the inclusion
$\HH\operatorname{im}\qty(\mathsf{K}_{n})
\subseteq\operatorname{im}\qty(\mathsf{K}_{n+1})$ shows that $\Aarn$ has
upper Hessenberg form.
Applying $\HH$ to the $n$th column of the QR factorization and comparing the
$\ket{u_{n+1}}$ components gives $\beta_{n+1}=r_{n+1}/r_{n}$.

The moment matrix in Eq.~\eqref{eq:taun} is the transpose of
$\mathsf{K}_{n}^{\dagger}\mathsf{K}_{n}$.
Hence,
\begin{align}
  \tau_{n}
  &=
  \det\qty(\mathsf{K}_{n}^{\dagger}\mathsf{K}_{n})
  =
  \det\qty(\mathsf{R}_{n}^{\dagger}\mathsf{R}_{n})
  \notag\\
  &=
  \prod_{j=0}^{n-1}r_{j}^{2},
  &&1\leq n\leq D,
  \notag\\
  r_{n}^{2}
  &=
  \frac{\tau_{n+1}}{\tau_{n}},
  &&0\leq n\leq D-1,
  \label{eq:qr-tau-product}
\end{align}
where we used $\tau_{0}=1$ and the empty-product convention.
Equations~\eqref{eq:qr-diagonal-ratio} and \eqref{eq:qr-tau-product} give
$\beta_{n}^{2}=\tau_{n+1}\tau_{n-1}/\tau_{n}^{2}=b_{n}^{2}$ for
$1\leq n\leq D-1$.
Because $b_{n}>0$ and $\beta_{n}>0$, we obtain $b_{n}=\beta_{n}$.

Using the determinant derivative formula
$\partial_{z}\ln\qty(\det\qty(\mathsf{X}))
  =
  \operatorname{tr}\qty(
    \mathsf{X}^{-1}\partial_{z}\mathsf{X}
  )
$,
along with
$\partial_{z}\mathsf{K}_{n}=-\HH\mathsf{K}_{n}$,
$\partial_{z}\mathsf{K}_{n}^{\dagger}=0$,
$\mathsf{K}_{n}=\Uarn_{n}\mathsf{R}_{n}$, and the cyclicity of the trace, we
obtain
\begin{align}
  \partial_{z}\ln\qty(\tau_{n})
  &=
  \partial_{z}\ln\qty(
    \det\qty(\mathsf{K}_{n}^{\dagger}\mathsf{K}_{n})
  )
  =
  -\operatorname{tr}\qty(
    \qty(\mathsf{K}_{n}^{\dagger}\mathsf{K}_{n})^{-1}
    \mathsf{K}_{n}^{\dagger}\HH\mathsf{K}_{n}
  )
  \notag\\
  &=
  -\operatorname{tr}\qty(
    \qty(\mathsf{R}_{n}^{\dagger}\mathsf{R}_{n})^{-1}
    \mathsf{R}_{n}^{\dagger}
    \qty(\Uarn_{n}^{\dagger}\HH\Uarn_{n})
    \mathsf{R}_{n}
  )
  \notag\\
  &=
  -\operatorname{tr}\qty(
    \mathsf{R}_{n}^{-1}
    \qty(\Uarn_{n}^{\dagger}\HH\Uarn_{n})
    \mathsf{R}_{n}
  )
  \notag\\
  &=
  -\operatorname{tr}\qty(
    \Uarn_{n}^{\dagger}\HH\Uarn_{n}
  )
  =
  -\sum_{j=0}^{n-1}\alpha_{j},
  \qquad 1\leq n\leq D.
  \label{eq:qr-log-tau-alpha}
\end{align}
Taking the difference between consecutive values of $n$ and comparing the
result with Eq.~\eqref{eq:2d-toda-flaschka-a}, we obtain
\begin{align}
  a_{n}
  &=-\partial_{z}\ln\qty(\frac{\tau_{n+1}}{\tau_{n}})
  =\alpha_{n},
  &&0\leq n\leq D-1,
  \notag\\
  b_{n}
  &=\beta_{n},
  &&1\leq n\leq D-1.
  \label{eq:qr-coefficient-identification}
\end{align}
Equations~\eqref{eq:qr-tau-product} and
\eqref{eq:qr-log-tau-alpha} also give, for $0\leq n\leq D-1$,
\begin{align}
  \partial_{z}\ln\qty(r_{n})
  &=-\frac{\alpha_{n}}{2},
  &
  \partial_{\zb}\ln\qty(r_{n})
  &=-\frac{\bar{\alpha}_{n}}{2}.
  \label{eq:qr-r-derivatives}
\end{align}

\subsection{Differentiated QR factorization and Arnoldi Lax equation}
\label{appsub:qr-arnoldi-lax}

We define the frame connections by
\begin{align}
  \mathsf{F}_{z}
  &\coloneqq
  -\Uarn^{\dagger}\partial_{z}\Uarn,
  &
  \mathsf{F}_{\zb}
  &\coloneqq
  -\Uarn^{\dagger}\partial_{\zb}\Uarn.
  \label{eq:qr-frame-connections-definition}
\end{align}
Differentiating the full QR factorization
$\mathsf{K}_{D}=\Uarn\mathsf{R}$ and using
$\partial_{z}\mathsf{K}_{D}=-\HH\mathsf{K}_{D}$ and
$\partial_{\zb}\mathsf{K}_{D}=0$, we obtain
\begin{align}
  \partial_{z}\mathsf{K}_{D}
  &=
  \qty(\partial_{z}\Uarn)\mathsf{R}
  +
  \Uarn\partial_{z}\mathsf{R}
  =
  -\HH\Uarn\mathsf{R},
  \notag\\
  \partial_{\zb}\mathsf{K}_{D}
  &=
  \qty(\partial_{\zb}\Uarn)\mathsf{R}
  +
  \Uarn\partial_{\zb}\mathsf{R}
  =0.
  \label{eq:qr-expanded-differentiated-factorization}
\end{align}
Multiplying Eq.~\eqref{eq:qr-expanded-differentiated-factorization} by
$\Uarn^{\dagger}$ from the left and by $\mathsf{R}^{-1}$ from the right, and
then using Eq.~\eqref{eq:qr-frame-connections-definition} and
$\Aarn=\Uarn^{\dagger}\HH\Uarn$, gives
\begin{align}
  -\Aarn
  &=
  -\mathsf{F}_{z}
  +
  \qty(\partial_{z}\mathsf{R})\mathsf{R}^{-1},
  \notag\\
  0
  &=
  -\mathsf{F}_{\zb}
  +
  \qty(\partial_{\zb}\mathsf{R})\mathsf{R}^{-1}.
  \label{eq:qr-differentiated-factorization}
\end{align}
Both $\qty(\partial_{z}\mathsf{R})\mathsf{R}^{-1}$ and
$\qty(\partial_{\zb}\mathsf{R})\mathsf{R}^{-1}$ are upper triangular, while
$\mathsf{F}_{\zb}=-\mathsf{F}_{z}^{\dagger}$.
Comparing matrix elements using
Eqs.~\eqref{eq:qr-coefficient-identification} and
\eqref{eq:qr-r-derivatives} gives
\begin{align}
  \mathsf{F}_{z}
  &=
  \begin{pmatrix}
    \frac{1}{2}a_{0} & 0                   & 0                   & 0                   & \cdots \\
    b_{1}             & \frac{1}{2}a_{1} & 0                   & 0                   & \cdots \\
    0                 & b_{2}               & \frac{1}{2}a_{2} & 0                   & \cdots \\
    0                 & 0                   & b_{3}               & \frac{1}{2}a_{3} & \ddots \\
    \vdots            & \vdots              & \vdots              & \ddots              & \ddots
  \end{pmatrix},
  \notag\\
  \mathsf{F}_{\zb}
  &=
  -
  \begin{pmatrix}
    \frac{1}{2}\bar{a}_{0} & b_{1}                 & 0                       & 0                       & \cdots \\
    0                       & \frac{1}{2}\bar{a}_{1} & b_{2}                 & 0                       & \cdots \\
    0                       & 0                       & \frac{1}{2}\bar{a}_{2} & b_{3}                 & \cdots \\
    0                       & 0                       & 0                       & \frac{1}{2}\bar{a}_{3} & \ddots \\
    \vdots                & \vdots                & \vdots                  & \ddots                  & \ddots
  \end{pmatrix}.
  \label{eq:qr-frame-connections}
\end{align}
Because $\HH$ is independent of $z$ and $\zb$, differentiating
$\Aarn=\Uarn^{\dagger}\HH\Uarn$ gives
\begin{equation}
  \begin{aligned}
    \partial_{z}\Aarn
    &=
    \comm{\mathsf{F}_{z}}{\Aarn},
    \\
    \partial_{\zb}\Aarn
    &=
    \comm{\mathsf{F}_{\zb}}{\Aarn}.
  \end{aligned}
  \label{eq:qr-arnoldi-lax}
\end{equation}
Similarly, the relation
$\Aarn^{\dagger}=\Uarn^{\dagger}\HH^{\dagger}\Uarn$ and the independence of
$\HH^{\dagger}$ from $z$ and $\zb$ give
\begin{equation}
  \begin{aligned}
    \partial_{z}\Aarn^{\dagger}
    &=
    \comm{\mathsf{F}_{z}}{\Aarn^{\dagger}},
    \\
    \partial_{\zb}\Aarn^{\dagger}
    &=
    \comm{\mathsf{F}_{\zb}}{\Aarn^{\dagger}}.
  \end{aligned}
  \label{eq:arnoldi-adjoint-lax}
\end{equation}
Under the coefficient identification~\eqref{eq:qr-coefficient-identification},
the diagonal part of the second equation in
Eq.~\eqref{eq:qr-arnoldi-lax} coincides with
Eq.~\eqref{eq:2d-toda-flaschka-a-flow}.
The evolution equation for $b_{n}^{2}$ obtained from the subdiagonal part of
the first equation coincides with Eq.~\eqref{eq:2d-toda-flaschka-b-flow}.

\section{Grassmannian geometry of Krylov subspaces}
\label{app:grassmannian-geometry}

We define the images of holomorphic Krylov subspaces under the Pl\"ucker
embedding and the associated Fubini--Study metric.
We then show that the pullback metric equals $b_{n}^{2}$, the squared Arnoldi
coefficient.

\subsection{Projective space and the Pl\"ucker embedding}
\label{appsub:GrnN}

For $m\in\Z_{\geq0}$, the complex projective space is
\begin{align*}
  \CP^{m}
  &\coloneqq
  \qty(
    \C^{m+1}\setminus\{0\}
  )
  /\C^{\times}.
\end{align*}
We denote the point defined by a nonzero vector $\ket{\psi}$ by
$\qty[\psi]\coloneqq\Span\qty(\ket{\psi})$.
For a normalized representative $\braket{\psi}{\psi}=1$, the
Fubini--Study line element is
\begin{align*}
  \dd s_{\mathrm{FS}}^{2}
  &\coloneqq
  \bra{\dd\psi}
  \qty(
    \Imat_{m+1}
    -
    \ket{\psi}\bra{\psi}
  )
  \ket{\dd\psi},
\end{align*}
as given in Ref.~\cite{ProvostVallee1980}.
The projector $\Imat_{m+1}-\ket{\psi}\bra{\psi}$ removes the component
along the representative vector and retains only the variation in projective
space.

For $1\leq n\leq N$, let $\Gr\qty(n,\mathcal{H}_{N})$ be the
Grassmannian of $n$-dimensional subspaces of
$\mathcal{H}_{N}\simeq\C^{N}$.
Write a matrix whose columns form a basis of
$V_{n}\in\Gr\qty(n,\mathcal{H}_{N})$ as
\begin{align*}
  \Vmat
  &\coloneqq
  \qty(
    \ket{v_{1}},
    \ldots,
    \ket{v_{n}}
  ),
  &
  V_{n}
  &=
  \operatorname{im}\qty(\Vmat).
\end{align*}
A different basis is given by $\widetilde{\Vmat}=\Vmat\Gmat$ with
$\Gmat\in\mathrm{GL}\qty(n,\C)$.
The exterior-product vector
\begin{align*}
  \wket{\Psi_{n}}
  &\coloneqq
  \ket{v_{1}}
  \wedge\cdots\wedge
  \ket{v_{n}}
\end{align*}
transforms under this change of basis as
\begin{align*}
  \wket{\widetilde{\Psi}_{n}}
  &=
  \det\qty(\Gmat)
  \wket{\Psi_{n}}.
\end{align*}
Therefore, the map
$V_{n}\mapsto\Span\qty(\wket{\Psi_{n}})$ is independent of the choice of
basis.
It defines the Pl\"ucker embedding~\cite{GriffithsHarris1978}
\begin{align}
  \iota_{\mathsf{Pl}}:
  \Gr\qty(n,\mathcal{H}_{N})
  &\longrightarrow
  \mathbb{P}\qty(
    \bigwedge^{n}\mathcal{H}_{N}
  )
  \simeq
  \CP^{\binom{N}{n}-1},
  \notag\\
  V_{n}
  &\longmapsto
  \Span\qty(
    \wket{\Psi_{n}}
  ).
  \label{eq:plucker-embedding}
\end{align}

Let $\ket{e_{1}},\ldots,\ket{e_{N}}$ be an orthonormal basis of
$\mathcal{H}_{N}$.
We use the standard inner product for which the exterior-product basis is
also orthonormal.
Expanding the exterior state as
\begin{align*}
  \wket{\Psi_{n}}
  &=
  \sum_{1\leq i_{1}<\cdots<i_{n}\leq N}
  p_{i_{1}\cdots i_{n}}\,
  \ket{e_{i_{1}}}
  \wedge\cdots\wedge
  \ket{e_{i_{n}}},
\end{align*}
the coefficients
\begin{align*}
  p_{i_{1}\cdots i_{n}}
  &=
  \det\qty[
    \Vmat_{i_{\alpha},a}
  ]_{\substack{
    1\leq\alpha\leq n\\
    1\leq a\leq n
  }}
\end{align*}
are the Pl\"ucker coordinates.
The Cauchy--Binet formula gives
\begin{align}
  \wbraket{\Psi_{n}}{\Psi_{n}}
  &=
  \det\qty(
    \Vmat^{\dagger}\Vmat
  )
  =
  \sum_{1\leq i_{1}<\cdots<i_{n}\leq N}
  \abs{p_{i_{1}\cdots i_{n}}}^{2}.
  \label{eq:grassmann-wedge-norm}
\end{align}
The Desnanot--Jacobi identity used to derive the Hirota bilinear identity
from Eq.~\eqref{eq:taun} is an example of a Pl\"ucker relation among
minors~\cite{IshikawaWakayama2006}.

In this paper, we set $N=D$ and take $\Vmat$ to be the Krylov matrix
$\mathsf{K}_{n}\qty(z)$ in Eq.~\eqref{eq:qr-krylov-matrix}.
Define the region in which $\mathsf{K}_{n}\qty(z)$ has rank $n$ by
\begin{align*}
  \Sigma_{n}^{\circ}
  &\coloneqq
  \qty{
    z\in\Sigma
    \,\middle|\,
    \operatorname{rank}
    \qty(
      \mathsf{K}_{n}\qty(z)
    )
    =
    n
  }
  .
\end{align*}
In this region, define
\begin{align*}
  \Phi_{n}:
  \Sigma_{n}^{\circ}
  &\longrightarrow
  \Gr\qty(n,\mathcal{H}_{D}),
  &
  z
  &\longmapsto
  \operatorname{im}
  \qty(
    \mathsf{K}_{n}\qty(z)
  ).
\end{align*}
Each column of $\mathsf{K}_{n}\qty(z)$ depends holomorphically on $z$, so
$\Phi_{n}$ is holomorphic.
On the real section where $\zb$ is the complex conjugate of $z$,
Eqs.~\eqref{eq:taun} and \eqref{eq:grassmann-wedge-norm} give
\begin{align}
  \tau_{n}>0
  \quad\Longleftrightarrow\quad
  \operatorname{rank}
  \qty(
    \mathsf{K}_{n}
  )
  =
  n
  \quad\Longleftrightarrow\quad
  \wket{\Psi_{n}}
  \neq
  0.
  \label{eq:krylov-rank-tau-equivalence}
\end{align}
When $\tau_{n}=0$, all Pl\"ucker coordinates vanish, so $\Phi_{n}$ and the
normalized exterior state are undefined.
However, if removing a common holomorphic factor from all Pl\"ucker coordinates
yields a nonzero local representative, that representative extends $\Phi_{n}$.

\subsection{Fubini--Study metric and Arnoldi coefficients}
\label{appsub:Grmetric}

Define the metric induced on the Grassmannian by the Pl\"ucker embedding as
\begin{align}
  g_{\mathrm{Gr}}
  &\coloneqq
  \iota_{\mathsf{Pl}}^{*}
  g_{\mathrm{FS}}.
  \label{eq:grassmann-fubini-study-pullback}
\end{align}
For a basis matrix $\Vmat$ that depends holomorphically on a local complex
coordinate, the K\"ahler potential is
\begin{align*}
  K_{\mathrm{Gr}}
  &=
  \ln\qty[
    \det\qty(
      \Vmat^{\dagger}\Vmat
    )
  ].
\end{align*}
A holomorphic and invertible change of basis
$\Vmat\mapsto\Vmat\Gmat$ adds only
$\ln\qty(\abs{\det\qty(\Gmat)}^{2})$ to $K_{\mathrm{Gr}}$, so the metric is
independent of the choice of basis.

Let the orthogonal projector onto $V_{n}$ be
\begin{align}
  \Pmat_{V_{n}}
  &\coloneqq
  \Vmat
  \qty(
    \Vmat^{\dagger}\Vmat
  )^{-1}
  \Vmat^{\dagger}.
  \label{eq:grassmann-orthogonal-projector}
\end{align}
In terms of this projector, the line element can be written as
~\cite{MitscherlingAvdoshkinMoore2025}
\begin{align}
  \dd s_{\mathrm{Gr}}^{2}
  &=
  \operatorname{tr}\qty[
    \qty(\Vmat^{\dagger}\Vmat)^{-1}
    \dd\Vmat^{\dagger}
    \qty(\Imat_{N}-\Pmat_{V_{n}})
    \dd\Vmat
  ]
  =
  \frac{1}{2}
  \operatorname{tr}\qty[
    \qty(\dd\Pmat_{V_{n}})^{2}
  ].
  \label{eq:grassmann-fubini-study-metric}
\end{align}
An infinitesimal change of basis $\dd\Vmat=\Vmat\mathsf{A}$ that leaves the
subspace unchanged does not contribute to the line element because
$\qty(\Imat_{N}-\Pmat_{V_{n}})\Vmat=0$.
Thus, $g_{\mathrm{Gr}}$ measures the variation of the subspace $V_{n}$ itself.

Setting $\Vmat=\mathsf{K}_{n}$ gives
$K_{\mathrm{Gr}}=\ln\qty(\tau_{n})$.
Equation~\eqref{eq:2d-toda-log-tau} then gives
\begin{align*}
  \qty(
    \Phi_{n}^{*}g_{\mathrm{Gr}}
  )_{z\zb}
  &=
  \partial_{z}\partial_{\zb}
  \ln\qty(\tau_{n})
  =
  b_{n}^{2}.
\end{align*}
We next derive this equality directly from the Arnoldi basis.
We substitute the QR factorization
$\mathsf{K}_{n}=\Uarn_{n}\mathsf{R}_{n}$ into
Eq.~\eqref{eq:grassmann-fubini-study-metric}.
On $\Sigma_{n}^{\circ}$, $\mathsf{R}_{n}$ is invertible and
\begin{align*}
  \Pmat_{n}
  &\coloneqq
  \Uarn_{n}\Uarn_{n}^{\dagger}
\end{align*}
is the orthogonal projector onto $\K_{n}$.
Using also
\begin{align*}
  \partial_{z}\mathsf{K}_{n}
  &=
  -\HH\mathsf{K}_{n},
  &
  \partial_{\zb}\mathsf{K}_{n}^{\dagger}
  &=
  -\mathsf{K}_{n}^{\dagger}\HH^{\dagger},
\end{align*}
we obtain
\begin{align}
  \qty(\Phi_{n}^{*}g_{\mathrm{Gr}})_{z\zb}
  &=
  \operatorname{tr}\qty[
    \qty(\mathsf{K}_{n}^{\dagger}\mathsf{K}_{n})^{-1}
    \qty(\partial_{\zb}\mathsf{K}_{n}^{\dagger})
    \qty(\Imat_{D}-\Pmat_{n})
    \qty(\partial_{z}\mathsf{K}_{n})
  ]
  =
  \notag\\
  &\quad
  \norm{\qty(\Imat_{D}-\Pmat_{n})\HH\Uarn_{n}}_{\mathrm{F}}^{2}.
  \label{eq:krylov-grassmann-metric-projector}
\end{align}
Here, $\norm{\cdot}_{\mathrm{F}}$ denotes the Frobenius norm.

The Arnoldi relation and Eq.~\eqref{eq:toda-arnoldi-coefficients} give
\begin{align}
  \qty(
    \Imat_{D}
    -
    \Pmat_{n}
  )
  \HH\ket{u_{j}}
  &=
  0,
  &&0\leq j\leq n-2,
  \notag\\
  \qty(
    \Imat_{D}
    -
    \Pmat_{n}
  )
  \HH\ket{u_{n-1}}
  &=
  b_{n}\ket{u_{n}}.
  \label{eq:krylov-arnoldi-orthogonal-components}
\end{align}
Therefore, throughout the cyclic region and for $1\leq n\leq D-1$,
\begin{align}
  \qty(
    \Phi_{n}^{*}g_{\mathrm{Gr}}
  )_{z\zb}
  &=
  b_{n}^{2}.
  \label{eq:krylov-grassmann-metric-arnoldi}
\end{align}
The coefficient $b_{n}$ is the norm of the component of
$\HH\ket{u_{n-1}}$ orthogonal to $\K_{n}$.
Accordingly, $b_{n}^{2}$ measures the magnitude of the variation of
$\K_{n}$ along the $z$ direction in the Fubini--Study metric.
At the endpoint $n=D$, the Grassmannian
$\Gr\qty(D,\mathcal{H}_{D})$ consists of a single point, so $\Phi_{D}$ is a
constant map and $b_{D}^{2}=0$.
If Arnoldi breakdown occurs at a Krylov dimension $d_{\mathrm K}<D$, then
$\K_{d_{\mathrm K}}$ is invariant under the action of $\HH$.
Thus, $\Phi_{d_{\mathrm K}}$ is constant, and both the pullback metric and
$b_{d_{\mathrm K}}^{2}$ vanish.
For $n>d_{\mathrm K}$,
$\operatorname{rank}\qty(\mathsf{K}_{n})<n$, so $\Phi_{n}$ is undefined.

\section{Krylov electrostatics}
\label{app:krylov-electrostatics}

We reinterpret the two-dimensional Toda equation~\eqref{eq:2d-toda-field} as
in-plane electrostatics, with $\phi_{n}$ the potential on layer $n$ and
$a_{n}$ the in-plane complex electric field.
Regard each layer as a plane with a potential distribution, each link between
adjacent layers as a capacitor, and $b_{n}^{2}$ as its polarization.
The same equation then becomes a local form of Gauss's law in which the
polarization depends exponentially on the potential difference between layers.
Equation~\eqref{eq:krylov-exterior-stokes-theorem} in the main text is the
corresponding integral form obtained from Stokes' theorem. Applying it to a disk
gives a theorem for the circular mean of the cumulative potential.
\subsection{Potential and complex electric field}
\label{appsub:krylov-electrostatic-potential}

Following Eq.~\eqref{eq:2d-toda-phi-tau}, define the potential $\phi_{n}$ and
the complex electric field $E_{n}$ on layer $n$ by
\begin{align}
  \phi_{n}
  &=
  \ln\qty(\frac{\tau_{n+1}}{\tau_{n}}),
  \qquad
  0\leq n\leq D-1,
  \notag\\
  E_{n}
  &\coloneqq
  -\partial_{z}\phi_{n}
  =
  a_{n}.
  \label{eq:krylov-complex-electric-field}
\end{align}
Thus, the Toda Flaschka variable $a_{n}$ coincides with the in-plane complex
electric field on layer $n$.

\subsection{Link polarization and nonlinear Poisson equation}
\label{appsub:krylov-nonlinear-poisson}

\begin{figure}[ht]
  \centering
  \begin{tikzpicture}[
    x=1cm,
    y=0.82cm,
    layer/.style={
      draw=black,
      fill=gray!10,
      line width=0.45pt
    },
    central layer/.style={
      draw=black,
      fill=gray!22,
      line width=0.65pt
    },
    link/.style={
      draw=black,
      line width=0.65pt
    },
    continuation/.style={
      draw=gray!75,
      densely dotted,
      line width=0.6pt
    },
    polarization/.style={
      -{Latex[length=1.8mm,width=1.2mm]},
      line width=0.8pt
    },
    field/.style={
      -{Latex[length=1.8mm,width=1.2mm]},
      line width=0.9pt
    },
    every node/.style={
      font=\small
    }
  ]
    \draw[continuation] (1.10,-0.72) -- (1.10,4.02);
    \node[fill=white,inner sep=0.5pt,text=black]
      at (1.10,-0.48) {$\vdots$};
    \node[fill=white,inner sep=0.5pt,text=black]
      at (1.10,3.78) {$\vdots$};

    \filldraw[layer]
      (-2.10,0)
      -- (1.45,0)
      -- (1.90,0.34)
      -- (-1.65,0.34)
      -- cycle;
    \filldraw[central layer]
      (-2.10,1.45)
      -- (1.45,1.45)
      -- (1.90,1.79)
      -- (-1.65,1.79)
      -- cycle;
    \filldraw[layer]
      (-2.10,2.90)
      -- (1.45,2.90)
      -- (1.90,3.24)
      -- (-1.65,3.24)
      -- cycle;

    \node[anchor=east] at (-2.20,0.17) {$n-1$};
    \node[anchor=east] at (-2.20,1.62) {$n$};
    \node[anchor=east] at (-2.20,3.07) {$n+1$};
    \node at (-0.25,0.17) {$\phi_{n-1}$};
    \node at (-0.25,1.62) {$\phi_{n}$};
    \node at (-0.25,3.07) {$\phi_{n+1}$};

    \draw[field]
      (-1.45,1.54)
      -- (-0.70,1.70);
    \node[anchor=south,fill=white,inner sep=0.5pt]
      at (-1.15,1.88) {$E_{n}$};

    \draw[link] (1.10,0.17) -- (1.10,0.78);
    \draw[line width=1pt] (0.78,0.78) -- (1.42,0.78);
    \draw[line width=1pt] (0.78,0.91) -- (1.42,0.91);
    \draw[link] (1.10,0.91) -- (1.10,1.62);
    \draw[polarization]
      (1.58,0.43)
      -- (1.58,1.32)
      node[midway,right] {$b_{n}^{2}$};
    \node[anchor=east] at (0.70,0.85)
      {$V_{n-\frac{1}{2}}$};

    \draw[link] (1.10,1.62) -- (1.10,2.23);
    \draw[line width=1pt] (0.78,2.23) -- (1.42,2.23);
    \draw[line width=1pt] (0.78,2.36) -- (1.42,2.36);
    \draw[link] (1.10,2.36) -- (1.10,3.07);
    \draw[polarization]
      (1.58,1.88)
      -- (1.58,2.77)
      node[midway,right] {$b_{n+1}^{2}$};
    \node[anchor=east] at (0.70,2.30)
      {$V_{n+\frac{1}{2}}$};

    \node[anchor=west] at (2.00,1.62)
      {$\rho_{n}=b_{n}^{2}-b_{n+1}^{2}$};

  \end{tikzpicture}
  \caption{
    Electrostatic analogy for the Toda lattice.
    Each layer is a copy of the parameter plane $\Sigma$ with potential
    $\phi_{m}$, and the arrow within layer $n$ denotes the in-plane complex
    electric field $E_{n}$.
    An arrow between layers denotes the link polarization
    $b_{m}^{2}=\exp\qty(V_{m-\frac{1}{2}})$ associated with the potential
    difference $V_{m-\frac{1}{2}}$.
    The polarization charge density on layer $n$ is
    $\rho_{n}=b_{n}^{2}-b_{n+1}^{2}$.
  }
  \label{fig:krylov-capacitor-network}
\end{figure}

Orient the link $n-\frac{1}{2}$ from layer $n-1$ to layer $n$, and define the
potential difference along this orientation by
\begin{align}
  V_{n-\frac{1}{2}}
  &\coloneqq
  \phi_{n}-\phi_{n-1},
  \notag\\
  b_{n}^{2}
  &=
  \exp\qty(V_{n-\frac{1}{2}}),
  \qquad
  1\leq n\leq D-1,
  \notag\\
  b_{0}^{2}
  &=
  b_{D}^{2}
  =
  0.
  \label{eq:krylov-link-constitutive-law}
\end{align}
For a finite chain, $b_{0}^{2}=b_{D}^{2}=0$ indicates that there are no
links beyond the boundaries.
The link between layers $n$ and $n+1$ is
$b_{n+1}^{2}=\exp\qty(V_{n+\frac{1}{2}})
=\exp\qty(\phi_{n+1}-\phi_{n})$.

Following the electrostatic analogy, we define the polarization charge density
on layer $n$ for $0\leq n\leq D-1$ by
\begin{align}
  \rho_{n}
  &\coloneqq
  -\partial_{z}\partial_{\zb}\phi_{n}
  =b_{n}^{2}-b_{n+1}^{2}.
  \label{eq:krylov-layer-charge-density}
\end{align}
At an interior layer $1\leq n\leq D-2$,
\begin{align}
  -\partial_{z}\partial_{\zb}\phi_{n}
  &=
  \exp\qty(V_{n-\frac{1}{2}})
  -
  \exp\qty(V_{n+\frac{1}{2}}).
  \label{eq:krylov-nonlinear-poisson-equation}
\end{align}
The quantity $\rho_{n}$ is the negative discrete divergence of the link
polarization $b_{n}^{2}$ and can have either sign.
The exponential constitutive relation in
Eq.~\eqref{eq:krylov-link-constitutive-law} corresponds to a nonlinear
dielectric response along the discrete Krylov direction.
Equation~\eqref{eq:krylov-nonlinear-poisson-equation} is therefore a Poisson
equation that combines the linear differential operator
$\partial_{z}\partial_{\zb}$ in the complex parameter plane with a nonlinear
constitutive relation in the discrete direction.
This layered structure is represented by the nonlinear capacitor chain shown in
Fig.~\ref{fig:krylov-capacitor-network}.

\subsection{Gauss's law and Stokes' theorem}
\label{appsub:krylov-gauss-law}

Equation~\eqref{eq:krylov-complex-electric-field} gives the local form of
Gauss's law,
\begin{align}
  \partial_{\zb}E_{n}
  &=
  \rho_{n}.
  \label{eq:krylov-complex-gauss-local}
\end{align}
In terms of the one-form $E_{n}\dd z$, this relation becomes
\begin{align}
  \dd\qty(E_{n}\dd z)
  &=
  -\rho_{n}\dd z\wedge\dd\zb.
  \label{eq:krylov-complex-gauss-one-form}
\end{align}
Let $\Omega\subset\Sigma$ be a compact region with a smooth boundary, and
give $\partial\Omega$ the positive orientation.
Applying Stokes' theorem to
Eq.~\eqref{eq:krylov-complex-gauss-one-form} gives
\begin{align}
  -\oint_{\partial\Omega}\dd z\,E_{n}
  &=
  \int_{\Omega}\dd z\wedge\dd\zb\,\rho_{n}.
  \label{eq:krylov-complex-gauss-law}
\end{align}
This is Gauss's law in complex one-form notation.
Taking the imaginary part of both sides relates the outward electric-field
flux across the boundary to the polarization charge in $\Omega$.

Summing the charge over all layers gives
\begin{align}
  \sum_{n=0}^{D-1}\rho_{n}
  &=
  \sum_{n=0}^{D-1}
  \qty(b_{n}^{2}-b_{n+1}^{2})
  =
  b_{0}^{2}-b_{D}^{2}
  =
  0,
  \label{eq:krylov-total-charge-neutrality}
\end{align}
so the total charge vanishes.
Each link polarization contributes with opposite signs to the two adjacent
layers, and the contributions cancel in the sum over layers.

The cumulative potential in
Eq.~\eqref{eq:krylov-fubini-study-kahler-potential} can be written as
\begin{align}
  \sum_{m=0}^{n-1}\phi_{m}
  &=
  \ln\qty(\tau_{n}),
  \notag\\
  -\partial_{z}\ln\qty(\tau_{n})
  &=
  \sum_{m=0}^{n-1}E_{m}
  =
  \sum_{m=0}^{n-1}a_{m},
  \notag\\
  \partial_{z}\partial_{\zb}\ln\qty(\tau_{n})
  &=
  b_{n}^{2},
  \notag\\
  \sum_{m=0}^{n-1}\rho_{m}
  &=
  -b_{n}^{2},
  \qquad
  1\leq n\leq D.
  \label{eq:krylov-cumulative-potential-field}
\end{align}
Since $E_{m}=a_{m}$, Eq.~\eqref{eq:krylov-exterior-gauss-law} is a complex
form of Gauss's law relating the cumulative electric field
$\sum_{m=0}^{n-1}E_{m}$ to the cumulative charge density $-b_{n}^{2}$ on the
first $n$ layers.
Its boundary term is
$\oint_{\partial\Omega}\dd z\,\sum_{m=0}^{n-1}E_{m}$, and this Gauss law
coincides with Stokes' formula~\eqref{eq:krylov-exterior-stokes-theorem}.

\subsection{Circular means of the cumulative potential}
\label{appsub:krylov-circular-mean}

Let $B_{r}$ be the disk of radius $r$ centered at $z_{0}$, and parametrize
its boundary by $z=z_{0}+r e^{i\theta}$.
For $1\leq n\leq D$, define the circular mean of the cumulative potential
$\sum_{m=0}^{n-1}\phi_{m}=\ln\qty(\tau_{n})$ in
Eq.~\eqref{eq:krylov-cumulative-potential-field} by
\begin{align}
  M_{n}\qty(r)
  &\coloneqq
  \frac{1}{2\pi}
  \int_{0}^{2\pi}\dd\theta\,
  \ln\qty(\tau_{n}\qty(z_{0}+r e^{i\theta},\zb_{0}+r e^{-i\theta})).
  \label{eq:krylov-circular-mean}
\end{align}
Applying Eq.~\eqref{eq:krylov-cumulative-potential-field} and Gauss's law for
the cumulative electric field to $B_{r}$ gives
\begin{align}
  \dv{M_{n}}{r}
  &=
  -\frac{1}{\pi r}
  \operatorname{Im}
  \oint_{\partial B_{r}}\dd z\,
  \sum_{m=0}^{n-1}E_{m}
  =
  \frac{\ii}{\pi r}
  \int_{B_{r}}\dd z\wedge\dd\zb\,b_{n}^{2}
  \geq
  0.
  \label{eq:krylov-circular-mean-gauss}
\end{align}
Thus, $M_{n}$ is nondecreasing in $r$, and the limit $r\to0$ gives
\begin{align}
  M_{n}\qty(r)
  &\geq
  \ln\qty(\tau_{n}\qty(z_{0},\zb_{0})),
  \qquad
  r>0.
  \label{eq:krylov-mean-value-property}
\end{align}
Equation~\eqref{eq:krylov-cumulative-potential-field} shows that
$\ln\qty(\tau_{n})$ is subharmonic, and
Eq.~\eqref{eq:krylov-mean-value-property} is its mean-value
inequality~\cite{Ransford1995}.


\nolinenumbers


\begin{thebibliography}{68}%
\makeatletter
\providecommand \@ifxundefined [1]{%
 \@ifx{#1\undefined}
}%
\providecommand \@ifnum [1]{%
 \ifnum #1\expandafter \@firstoftwo
 \else \expandafter \@secondoftwo
 \fi
}%
\providecommand \@ifx [1]{%
 \ifx #1\expandafter \@firstoftwo
 \else \expandafter \@secondoftwo
 \fi
}%
\providecommand \natexlab [1]{#1}%
\providecommand \enquote  [1]{``#1''}%
\providecommand \bibnamefont  [1]{#1}%
\providecommand \bibfnamefont [1]{#1}%
\providecommand \citenamefont [1]{#1}%
\providecommand \href@noop [0]{\@secondoftwo}%
\providecommand \href [0]{\begingroup \@sanitize@url \@href}%
\providecommand \@href[1]{\@@startlink{#1}\@@href}%
\providecommand \@@href[1]{\endgroup#1\@@endlink}%
\providecommand \@sanitize@url [0]{\catcode `\\12\catcode `\$12\catcode
  `\&12\catcode `\#12\catcode `\^12\catcode `\_12\catcode `\%12\relax}%
\providecommand \@@startlink[1]{}%
\providecommand \@@endlink[0]{}%
\providecommand \url  [0]{\begingroup\@sanitize@url \@url }%
\providecommand \@url [1]{\endgroup\@href {#1}{\urlprefix }}%
\providecommand \urlprefix  [0]{URL }%
\providecommand \Eprint [0]{\href }%
\providecommand \doibase [0]{https://doi.org/}%
\providecommand \selectlanguage [0]{\@gobble}%
\providecommand \bibinfo  [0]{\@secondoftwo}%
\providecommand \bibfield  [0]{\@secondoftwo}%
\providecommand \translation [1]{[#1]}%
\providecommand \BibitemOpen [0]{}%
\providecommand \bibitemStop [0]{}%
\providecommand \bibitemNoStop [0]{.\EOS\space}%
\providecommand \EOS [0]{\spacefactor3000\relax}%
\providecommand \BibitemShut  [1]{\csname bibitem#1\endcsname}%
\let\auto@bib@innerbib\@empty
\bibitem{HochbruckLubich1997}%
  \BibitemOpen
  \bibfield  {author} {\bibinfo {author} {\bibfnamefont {M.}~\bibnamefont
  {Hochbruck}}\ and\ \bibinfo {author} {\bibfnamefont {C.}~\bibnamefont
  {Lubich}},\ }\emph{\bibinfo {title} {On Krylov Subspace Approximations to the Matrix Exponential Operator}},\ \href {https://doi.org/10.1137/S0036142995280572} {\bibfield
  {journal} {\bibinfo  {journal} {SIAM Journal on Numerical Analysis}\ }\textbf
  {\bibinfo {volume} {34}},\ \bibinfo {pages} {1911} (\bibinfo {year}
  {1997})}\BibitemShut {NoStop}%
\bibitem{NandyEtAl2025Review}%
  \BibitemOpen
  \bibfield  {author} {\bibinfo {author} {\bibfnamefont {P.}~\bibnamefont
  {Nandy}}, \bibinfo {author} {\bibfnamefont {A.~S.}\ \bibnamefont
  {Matsoukas-Roubeas}}, \bibinfo {author} {\bibfnamefont {P.}~\bibnamefont
  {Mart{\'i}nez-Azcona}}, \bibinfo {author} {\bibfnamefont {A.}~\bibnamefont
  {Dymarsky}},\ and\ \bibinfo {author} {\bibfnamefont {A.}~\bibnamefont {del
  Campo}},\ }\emph{\bibinfo {title} {Quantum dynamics in Krylov space: Methods and applications}},\ \href {https://doi.org/10.1016/j.physrep.2025.05.001} {\bibfield
  {journal} {\bibinfo  {journal} {Physics Reports}\ }\textbf {\bibinfo {volume}
  {1125--1128}},\ \bibinfo {pages} {1} (\bibinfo {year} {2025})},\ \Eprint
  {https://arxiv.org/abs/2405.09628} {arXiv:2405.09628} \BibitemShut {NoStop}%
\bibitem{ParkerEtAl2019}%
  \BibitemOpen
  \bibfield  {author} {\bibinfo {author} {\bibfnamefont {D.~E.}\ \bibnamefont
  {Parker}}, \bibinfo {author} {\bibfnamefont {X.}~\bibnamefont {Cao}},
  \bibinfo {author} {\bibfnamefont {A.}~\bibnamefont {Avdoshkin}}, \bibinfo
  {author} {\bibfnamefont {T.}~\bibnamefont {Scaffidi}},\ and\ \bibinfo
  {author} {\bibfnamefont {E.}~\bibnamefont {Altman}},\ }\emph{\bibinfo {title} {A Universal Operator Growth Hypothesis}},\ \href
  {https://doi.org/10.1103/PhysRevX.9.041017} {\bibfield  {journal} {\bibinfo
  {journal} {Physical Review X}\ }\textbf {\bibinfo {volume} {9}},\ \bibinfo
  {pages} {041017} (\bibinfo {year} {2019})},\ \Eprint
  {https://arxiv.org/abs/1812.08657} {arXiv:1812.08657} \BibitemShut {NoStop}%
\bibitem{RabinoviciEtAl2021}%
  \BibitemOpen
  \bibfield  {author} {\bibinfo {author} {\bibfnamefont {E.}~\bibnamefont
  {Rabinovici}}, \bibinfo {author} {\bibfnamefont {A.}~\bibnamefont
  {S{\'a}nchez-Garrido}}, \bibinfo {author} {\bibfnamefont {R.}~\bibnamefont
  {Shir}},\ and\ \bibinfo {author} {\bibfnamefont {J.}~\bibnamefont {Sonner}},\
  }\emph{\bibinfo {title} {Operator complexity: a journey to the edge of Krylov space}},\ \href {https://doi.org/10.1007/JHEP06(2021)062} {\bibfield  {journal}
  {\bibinfo  {journal} {Journal of High Energy Physics}\ }\textbf {\bibinfo
  {volume} {2021}},\ \bibinfo {pages} {062} (\bibinfo {year}
  {2021})}\BibitemShut {NoStop}%
\bibitem{BhattacharjeeEtAl2022Saddle}%
  \BibitemOpen
  \bibfield  {author} {\bibinfo {author} {\bibfnamefont {B.}~\bibnamefont
  {Bhattacharjee}}, \bibinfo {author} {\bibfnamefont {X.}~\bibnamefont {Cao}},
  \bibinfo {author} {\bibfnamefont {P.}~\bibnamefont {Nandy}},\ and\ \bibinfo
  {author} {\bibfnamefont {T.}~\bibnamefont {Pathak}},\ }\emph{\bibinfo {title} {Krylov complexity in saddle-dominated scrambling}},\ \href
  {https://doi.org/10.1007/JHEP05(2022)174} {\bibfield  {journal} {\bibinfo
  {journal} {Journal of High Energy Physics}\ }\textbf {\bibinfo {volume}
  {2022}},\ \bibinfo {pages} {174} (\bibinfo {year} {2022})},\ \Eprint
  {https://arxiv.org/abs/2203.03534} {arXiv:2203.03534} \BibitemShut {NoStop}%
\bibitem{BhattacharjeeEtAl2023LargeQSYK}%
  \BibitemOpen
  \bibfield  {author} {\bibinfo {author} {\bibfnamefont {B.}~\bibnamefont
  {Bhattacharjee}}, \bibinfo {author} {\bibfnamefont {P.}~\bibnamefont
  {Nandy}},\ and\ \bibinfo {author} {\bibfnamefont {T.}~\bibnamefont
  {Pathak}},\ }\emph{\bibinfo {title} {Krylov complexity in large-{$q$} and double-scaled {SYK} model}},\ \href {https://doi.org/10.1007/JHEP08(2023)099} {\bibfield
  {journal} {\bibinfo  {journal} {Journal of High Energy Physics}\ }\textbf
  {\bibinfo {volume} {2023}},\ \bibinfo {pages} {099} (\bibinfo {year}
  {2023})},\ \Eprint {https://arxiv.org/abs/2210.02474}
  {arXiv:2210.02474} \BibitemShut {NoStop}%
\bibitem{TakahashiDelCampo2024}%
  \BibitemOpen
  \bibfield  {author} {\bibinfo {author} {\bibfnamefont {K.}~\bibnamefont
  {Takahashi}}\ and\ \bibinfo {author} {\bibfnamefont {A.}~\bibnamefont {del
  Campo}},\ }\emph{\bibinfo {title} {Shortcuts to Adiabaticity in Krylov Space}},\ \href {https://doi.org/10.1103/PhysRevX.14.011032} {\bibfield
  {journal} {\bibinfo  {journal} {Physical Review X}\ }\textbf {\bibinfo
  {volume} {14}},\ \bibinfo {pages} {011032} (\bibinfo {year} {2024})},\
  \Eprint {https://arxiv.org/abs/2302.05460} {arXiv:2302.05460} \BibitemShut
  {NoStop}%
\bibitem{TakahashiDelCampo2025}%
  \BibitemOpen
  \bibfield  {author} {\bibinfo {author} {\bibfnamefont {K.}~\bibnamefont
  {Takahashi}}\ and\ \bibinfo {author} {\bibfnamefont {A.}~\bibnamefont {del
  Campo}},\ }\emph{\bibinfo {title} {Krylov Subspace Methods for Quantum Dynamics with Time-Dependent Generators}},\ \href {https://doi.org/10.1103/PhysRevLett.134.030401} {\bibfield
  {journal} {\bibinfo  {journal} {Physical Review Letters}\ }\textbf {\bibinfo
  {volume} {134}},\ \bibinfo {pages} {030401} (\bibinfo {year} {2025})},\
  \Eprint {https://arxiv.org/abs/2408.08383} {arXiv:2408.08383} \BibitemShut
  {NoStop}%
\bibitem{Lanczos1950}%
  \BibitemOpen
  \bibfield  {author} {\bibinfo {author} {\bibfnamefont {C.}~\bibnamefont
  {Lanczos}},\ }\emph{\bibinfo {title} {An iteration method for the solution of the eigenvalue problem of linear differential and integral operators}},\ \href {https://doi.org/10.6028/jres.045.026} {\bibfield
  {journal} {\bibinfo  {journal} {Journal of Research of the National Bureau of
  Standards}\ }\textbf {\bibinfo {volume} {45}},\ \bibinfo {pages} {255}
  (\bibinfo {year} {1950})}\BibitemShut {NoStop}%
\bibitem{DymarskyGorsky2020}%
  \BibitemOpen
  \bibfield  {author} {\bibinfo {author} {\bibfnamefont {A.}~\bibnamefont
  {Dymarsky}}\ and\ \bibinfo {author} {\bibfnamefont {A.}~\bibnamefont
  {Gorsky}},\ }\emph{\bibinfo {title} {Quantum chaos as delocalization in Krylov space}},\ \href {https://doi.org/10.1103/PhysRevB.102.085137} {\bibfield
  {journal} {\bibinfo  {journal} {Physical Review B}\ }\textbf {\bibinfo
  {volume} {102}},\ \bibinfo {pages} {085137} (\bibinfo {year} {2020})},\
  \Eprint {https://arxiv.org/abs/1912.12227} {arXiv:1912.12227} \BibitemShut
  {NoStop}%
\bibitem{Arnoldi1951}%
  \BibitemOpen
  \bibfield  {author} {\bibinfo {author} {\bibfnamefont {W.~E.}\ \bibnamefont
  {Arnoldi}},\ }\emph{\bibinfo {title} {The principle of minimized iterations in the solution of the matrix eigenvalue problem}},\ \href {https://doi.org/10.1090/qam/42792} {\bibfield  {journal}
  {\bibinfo  {journal} {Quarterly of Applied Mathematics}\ }\textbf {\bibinfo
  {volume} {9}},\ \bibinfo {pages} {17} (\bibinfo {year} {1951})}\BibitemShut
  {NoStop}%
\bibitem{Saad1980}%
  \BibitemOpen
  \bibfield  {author} {\bibinfo {author} {\bibfnamefont {Y.}~\bibnamefont
  {Saad}},\ }\emph{\bibinfo {title} {Variations on Arnoldi's method for computing eigenelements of large unsymmetric matrices}},\ \href {https://doi.org/10.1016/0024-3795(80)90169-X} {\bibfield
  {journal} {\bibinfo  {journal} {Linear Algebra and its Applications}\
  }\textbf {\bibinfo {volume} {34}},\ \bibinfo {pages} {269} (\bibinfo {year}
  {1980})}\BibitemShut {NoStop}%
\bibitem{MingantiHuybrechts2022}%
  \BibitemOpen
  \bibfield  {author} {\bibinfo {author} {\bibfnamefont {F.}~\bibnamefont
  {Minganti}}\ and\ \bibinfo {author} {\bibfnamefont {D.}~\bibnamefont
  {Huybrechts}},\ }\emph{\bibinfo {title} {Arnoldi--Lindblad time evolution: Faster-than-the-clock algorithm for the spectrum of time-independent and Floquet open quantum systems}},\ \href {https://doi.org/10.22331/q-2022-02-10-649} {\bibfield
   {journal} {\bibinfo  {journal} {Quantum}\ }\textbf {\bibinfo {volume} {6}},\
  \bibinfo {pages} {649} (\bibinfo {year} {2022})}\BibitemShut {NoStop}%
\bibitem{BhattacharyaEtAl2022Arnoldi}%
  \BibitemOpen
  \bibfield  {author} {\bibinfo {author} {\bibfnamefont {A.}~\bibnamefont
  {Bhattacharya}}, \bibinfo {author} {\bibfnamefont {P.}~\bibnamefont {Nandy}},
  \bibinfo {author} {\bibfnamefont {P.~P.}\ \bibnamefont {Nath}},\ and\
  \bibinfo {author} {\bibfnamefont {H.}~\bibnamefont {Sahu}},\ }\emph{\bibinfo {title} {Operator growth and Krylov construction in dissipative open quantum systems}},\ \href
  {https://doi.org/10.1007/JHEP12(2022)081} {\bibfield  {journal} {\bibinfo
  {journal} {Journal of High Energy Physics}\ }\textbf {\bibinfo {volume}
  {2022}},\ \bibinfo {pages} {081} (\bibinfo {year} {2022})},\ \Eprint
  {https://arxiv.org/abs/2207.05347} {arXiv:2207.05347} \BibitemShut {NoStop}%
\bibitem{BhattacharyaEtAl2023BiLanczos}%
  \BibitemOpen
  \bibfield  {author} {\bibinfo {author} {\bibfnamefont {A.}~\bibnamefont
  {Bhattacharya}}, \bibinfo {author} {\bibfnamefont {P.}~\bibnamefont {Nandy}},
  \bibinfo {author} {\bibfnamefont {P.~P.}\ \bibnamefont {Nath}},\ and\
  \bibinfo {author} {\bibfnamefont {H.}~\bibnamefont {Sahu}},\ }\emph{\bibinfo {title} {On Krylov complexity in open systems: an approach via bi-Lanczos algorithm}},\ \href
  {https://doi.org/10.1007/JHEP12(2023)066} {\bibfield  {journal} {\bibinfo
  {journal} {Journal of High Energy Physics}\ }\textbf {\bibinfo {volume}
  {2023}},\ \bibinfo {pages} {066} (\bibinfo {year} {2023})},\ \Eprint
  {https://arxiv.org/abs/2303.04175} {arXiv:2303.04175} \BibitemShut {NoStop}%
\bibitem{BhattacharjeeEtAl2023DissipativeSYK}%
  \BibitemOpen
  \bibfield  {author} {\bibinfo {author} {\bibfnamefont {B.}~\bibnamefont
  {Bhattacharjee}}, \bibinfo {author} {\bibfnamefont {X.}~\bibnamefont {Cao}},
  \bibinfo {author} {\bibfnamefont {P.}~\bibnamefont {Nandy}},\ and\ \bibinfo
  {author} {\bibfnamefont {T.}~\bibnamefont {Pathak}},\ }\emph{\bibinfo {title} {Operator growth in open quantum systems: Lessons from the dissipative {SYK}}},\ \href
  {https://doi.org/10.1007/JHEP03(2023)054} {\bibfield  {journal} {\bibinfo
  {journal} {Journal of High Energy Physics}\ }\textbf {\bibinfo {volume}
  {2023}},\ \bibinfo {pages} {054} (\bibinfo {year} {2023})},\
  \Eprint {https://arxiv.org/abs/2212.06180} {arXiv:2212.06180} \BibitemShut
  {NoStop}%
\bibitem{BhattacharjeeEtAl2024LindbladianSYK}%
  \BibitemOpen
  \bibfield  {author} {\bibinfo {author} {\bibfnamefont {B.}~\bibnamefont
  {Bhattacharjee}}, \bibinfo {author} {\bibfnamefont {P.}~\bibnamefont
  {Nandy}},\ and\ \bibinfo {author} {\bibfnamefont {T.}~\bibnamefont
  {Pathak}},\ }\emph{\bibinfo {title} {Operator dynamics in Lindbladian {SYK}: A {Krylov} complexity perspective}},\ \href {https://doi.org/10.1007/JHEP01(2024)094} {\bibfield
  {journal} {\bibinfo  {journal} {Journal of High Energy Physics}\ }\textbf
  {\bibinfo {volume} {2024}},\ \bibinfo {pages} {094} (\bibinfo {year}
  {2024})},\ \Eprint {https://arxiv.org/abs/2311.00753} {arXiv:2311.00753}
  \BibitemShut {NoStop}%
\bibitem{UenoTakasaki1984}%
  \BibitemOpen
  \bibfield  {author} {\bibinfo {author} {\bibfnamefont {K.}~\bibnamefont
  {Ueno}}\ and\ \bibinfo {author} {\bibfnamefont {K.}~\bibnamefont
  {Takasaki}},\ }\emph{\bibinfo {title} {Toda Lattice Hierarchy}},\ in\ \href {https://doi.org/10.2969/aspm/00410001} {\emph
  {\bibinfo {booktitle} {Group Representations and Systems of Differential
  Equations}}},\ \bibinfo {series} {Advanced Studies in Pure Mathematics},
  Vol.~\bibinfo {volume} {4},\ \bibinfo {editor} {edited by\ \bibinfo {editor}
  {\bibfnamefont {K.}~\bibnamefont {Okamoto}}}\ (\bibinfo  {publisher}
  {North-Holland Publishing Co.},\ \bibinfo {year} {1984})\ pp.\ \bibinfo
  {pages} {1--95}\BibitemShut {NoStop}%
\bibitem{Takasaki2018Toda}%
  \BibitemOpen
  \bibfield  {author} {\bibinfo {author} {\bibfnamefont {K.}~\bibnamefont
  {Takasaki}},\ }\emph{\bibinfo {title} {Toda hierarchies and their applications}},\ \href {https://doi.org/10.1088/1751-8121/aabc14} {\bibfield
  {journal} {\bibinfo  {journal} {Journal of Physics A: Mathematical and
  Theoretical}\ }\textbf {\bibinfo {volume} {51}},\ \bibinfo {pages} {203001}
  (\bibinfo {year} {2018})},\ \Eprint {https://arxiv.org/abs/1801.09924}
  {arXiv:1801.09924 [math-ph]} \BibitemShut {NoStop}%
\bibitem{TakahashiNandyDelCampo2026}%
  \BibitemOpen
  \bibfield  {author} {\bibinfo {author} {\bibfnamefont {K.}~\bibnamefont
  {Takahashi}}, \bibinfo {author} {\bibfnamefont {P.}~\bibnamefont {Nandy}},\
  and\ \bibinfo {author} {\bibfnamefont {A.}~\bibnamefont {del Campo}},\ }\emph{\bibinfo {title} {Krylov complexity under Hamiltonian deformations and Toda flows}},\ \href
  {https://doi.org/10.1103/zt9g-scp5} {\bibfield  {journal} {\bibinfo
  {journal} {Physical Review B}\ }\textbf {\bibinfo {volume} {113}},\ \bibinfo
  {pages} {144312} (\bibinfo {year} {2026})},\ \Eprint
  {https://arxiv.org/abs/2510.19436} {arXiv:2510.19436} \BibitemShut {NoStop}%
\bibitem{Simon1983}%
  \BibitemOpen
  \bibfield  {author} {\bibinfo {author} {\bibfnamefont {B.}~\bibnamefont
  {Simon}},\ }\emph{\bibinfo {title} {Holonomy, the Quantum Adiabatic Theorem, and Berry's Phase}},\ \href {https://doi.org/10.1103/PhysRevLett.51.2167} {\bibfield
  {journal} {\bibinfo  {journal} {Physical Review Letters}\ }\textbf {\bibinfo
  {volume} {51}},\ \bibinfo {pages} {2167} (\bibinfo {year}
  {1983})}\BibitemShut {NoStop}%
\bibitem{Berry1984}%
  \BibitemOpen
  \bibfield  {author} {\bibinfo {author} {\bibfnamefont {M.~V.}\ \bibnamefont
  {Berry}},\ }\emph{\bibinfo {title} {Quantal phase factors accompanying adiabatic changes}},\ \href {https://doi.org/10.1098/rspa.1984.0023} {\bibfield
  {journal} {\bibinfo  {journal} {Proceedings of the Royal Society of London.
  A. Mathematical and Physical Sciences}\ }\textbf {\bibinfo {volume} {392}},\
  \bibinfo {pages} {45} (\bibinfo {year} {1984})}\BibitemShut {NoStop}%
\bibitem{ProvostVallee1980}%
  \BibitemOpen
  \bibfield  {author} {\bibinfo {author} {\bibfnamefont {J.-P.}\ \bibnamefont
  {Provost}}\ and\ \bibinfo {author} {\bibfnamefont {G.}~\bibnamefont
  {Vall{\'e}e}},\ }\emph{\bibinfo {title} {Riemannian structure on manifolds of quantum states}},\ \href {https://doi.org/10.1007/BF02193559} {\bibfield
  {journal} {\bibinfo  {journal} {Communications in Mathematical Physics}\
  }\textbf {\bibinfo {volume} {76}},\ \bibinfo {pages} {289} (\bibinfo {year}
  {1980})}\BibitemShut {NoStop}%
\bibitem{Bhattacharjee2023AGP}%
  \BibitemOpen
  \bibfield  {author} {\bibinfo {author} {\bibfnamefont {B.}~\bibnamefont
  {Bhattacharjee}},\ }\href@noop {} {\bibinfo {title} {A {Lanczos} approach to
  the adiabatic gauge potential}} (\bibinfo {year} {2023}),\ \Eprint
  {https://arxiv.org/abs/2302.07228} {arXiv:2302.07228 [quant-ph]} \BibitemShut
  {NoStop}%
\bibitem{ShresthaBhattacharjeeDelCampo2026}%
  \BibitemOpen
  \bibfield  {author} {\bibinfo {author} {\bibfnamefont {A.~W.}\ \bibnamefont
  {Shrestha}}, \bibinfo {author} {\bibfnamefont {B.}~\bibnamefont
  {Bhattacharjee}},\ and\ \bibinfo {author} {\bibfnamefont {A.}~\bibnamefont
  {del Campo}},\ }\href@noop {} {\bibinfo {title} {Shortcuts to adiabaticity
  for non-hermitian systems in {Krylov} space}} (\bibinfo {year} {2026}),\
  \bibinfo {note} {preprint; not peer reviewed as of 2026-07-25},\ \Eprint
  {https://arxiv.org/abs/2607.07802} {arXiv:2607.07802 [quant-ph]} \BibitemShut
  {NoStop}%
\bibitem{IbanezEtAl2011}%
  \BibitemOpen
  \bibfield  {author} {\bibinfo {author} {\bibfnamefont {S.}~\bibnamefont
  {Ib{\'a}{\~n}ez}}, \bibinfo {author} {\bibfnamefont {S.}~\bibnamefont
  {Mart{\'i}nez-Garaot}}, \bibinfo {author} {\bibfnamefont {X.}~\bibnamefont
  {Chen}}, \bibinfo {author} {\bibfnamefont {E.}~\bibnamefont {Torrontegui}},\
  and\ \bibinfo {author} {\bibfnamefont {J.~G.}\ \bibnamefont {Muga}},\ }\emph{\bibinfo {title} {Shortcuts to adiabaticity for non-Hermitian systems}},\ \href
  {https://doi.org/10.1103/PhysRevA.84.023415} {\bibfield  {journal} {\bibinfo
  {journal} {Physical Review A}\ }\textbf {\bibinfo {volume} {84}},\ \bibinfo
  {pages} {023415} (\bibinfo {year} {2011})},\ \Eprint
  {https://arxiv.org/abs/1106.2776} {arXiv:1106.2776} \BibitemShut {NoStop}%
\bibitem{IbanezEtAl2012Erratum}%
  \BibitemOpen
  \bibfield  {author} {\bibinfo {author} {\bibfnamefont {S.}~\bibnamefont
  {Ib{\'a}{\~n}ez}}, \bibinfo {author} {\bibfnamefont {S.}~\bibnamefont
  {Mart{\'i}nez-Garaot}}, \bibinfo {author} {\bibfnamefont {X.}~\bibnamefont
  {Chen}}, \bibinfo {author} {\bibfnamefont {E.}~\bibnamefont {Torrontegui}},\
  and\ \bibinfo {author} {\bibfnamefont {J.~G.}\ \bibnamefont {Muga}},\ }\emph{\bibinfo {title} {Erratum: Shortcuts to adiabaticity for non-Hermitian systems [Phys. Rev. A 84, 023415 (2011)]}},\ \href
  {https://doi.org/10.1103/PhysRevA.86.019901} {\bibfield  {journal} {\bibinfo
  {journal} {Physical Review A}\ }\textbf {\bibinfo {volume} {86}},\ \bibinfo
  {pages} {019901} (\bibinfo {year} {2012})}\BibitemShut {NoStop}%
\bibitem{OkuyamaTakahashi2016}%
  \BibitemOpen
  \bibfield  {author} {\bibinfo {author} {\bibfnamefont {M.}~\bibnamefont
  {Okuyama}}\ and\ \bibinfo {author} {\bibfnamefont {K.}~\bibnamefont
  {Takahashi}},\ }\emph{\bibinfo {title} {From Classical Nonlinear Integrable Systems to Quantum Shortcuts to Adiabaticity}},\ \href {https://doi.org/10.1103/PhysRevLett.117.070401}
  {\bibfield  {journal} {\bibinfo  {journal} {Physical Review Letters}\
  }\textbf {\bibinfo {volume} {117}},\ \bibinfo {pages} {070401} (\bibinfo
  {year} {2016})}\BibitemShut {NoStop}%
\bibitem{GoriniKossakowskiSudarshan1976}%
  \BibitemOpen
  \bibfield  {author} {\bibinfo {author} {\bibfnamefont {V.}~\bibnamefont
  {Gorini}}, \bibinfo {author} {\bibfnamefont {A.}~\bibnamefont
  {Kossakowski}},\ and\ \bibinfo {author} {\bibfnamefont {E.~C.~G.}\
  \bibnamefont {Sudarshan}},\ }\emph{\bibinfo {title} {Completely positive dynamical semigroups of $N$-level systems}},\ \href {https://doi.org/10.1063/1.522979}
  {\bibfield  {journal} {\bibinfo  {journal} {Journal of Mathematical Physics}\
  }\textbf {\bibinfo {volume} {17}},\ \bibinfo {pages} {821} (\bibinfo {year}
  {1976})}\BibitemShut {NoStop}%
\bibitem{Lindblad1976}%
  \BibitemOpen
  \bibfield  {author} {\bibinfo {author} {\bibfnamefont {G.}~\bibnamefont
  {Lindblad}},\ }\emph{\bibinfo {title} {On the generators of quantum dynamical semigroups}},\ \href {https://doi.org/10.1007/BF01608499} {\bibfield
  {journal} {\bibinfo  {journal} {Communications in Mathematical Physics}\
  }\textbf {\bibinfo {volume} {48}},\ \bibinfo {pages} {119} (\bibinfo {year}
  {1976})}\BibitemShut {NoStop}%
\bibitem{KodamaYe1996}%
  \BibitemOpen
  \bibfield  {author} {\bibinfo {author} {\bibfnamefont {Y.}~\bibnamefont
  {Kodama}}\ and\ \bibinfo {author} {\bibfnamefont {J.}~\bibnamefont {Ye}},\
  }\emph{\bibinfo {title} {Iso-spectral deformations of general matrix and their reductions on Lie algebras}},\ \href {https://doi.org/10.1007/BF02108824} {\bibfield  {journal} {\bibinfo
  {journal} {Communications in Mathematical Physics}\ }\textbf {\bibinfo
  {volume} {178}},\ \bibinfo {pages} {765} (\bibinfo {year} {1996})},\ \Eprint
  {https://arxiv.org/abs/solv-int/9506005} {arXiv:solv-int/9506005}
  \BibitemShut {NoStop}%
\bibitem{KodamaWilliams2015}%
  \BibitemOpen
  \bibfield  {author} {\bibinfo {author} {\bibfnamefont {Y.}~\bibnamefont
  {Kodama}}\ and\ \bibinfo {author} {\bibfnamefont {L.}~\bibnamefont
  {Williams}},\ }\emph{\bibinfo {title} {The Full Kostant--Toda Hierarchy on the Positive Flag Variety}},\ \href {https://doi.org/10.1007/s00220-014-2203-x} {\bibfield
  {journal} {\bibinfo  {journal} {Communications in Mathematical Physics}\
  }\textbf {\bibinfo {volume} {335}},\ \bibinfo {pages} {247} (\bibinfo {year}
  {2015})},\ \Eprint {https://arxiv.org/abs/1308.5011} {arXiv:1308.5011}
  \BibitemShut {NoStop}%
\bibitem{AlexandrovZabrodin2013}%
  \BibitemOpen
  \bibfield  {author} {\bibinfo {author} {\bibfnamefont {A.}~\bibnamefont
  {Alexandrov}}\ and\ \bibinfo {author} {\bibfnamefont {A.}~\bibnamefont
  {Zabrodin}},\ }\emph{\bibinfo {title} {Free fermions and tau-functions}},\ \href {https://doi.org/10.1016/j.geomphys.2013.01.007}
  {\bibfield  {journal} {\bibinfo  {journal} {Journal of Geometry and Physics}\
  }\textbf {\bibinfo {volume} {67}},\ \bibinfo {pages} {37} (\bibinfo {year}
  {2013})},\ \Eprint {https://arxiv.org/abs/1212.6049} {arXiv:1212.6049
  [math-ph]} \BibitemShut {NoStop}%
\bibitem{Tsujimoto2002DiscreteToda}%
  \BibitemOpen
  \bibfield  {author} {\bibinfo {author} {\bibfnamefont {S.}~\bibnamefont
  {Tsujimoto}},\ }\emph{\bibinfo {title} {On a Discrete Analogue of the Two-Dimensional Toda Lattice Hierarchy}},\ \href {https://doi.org/10.2977/prims/1145476418} {\bibfield
  {journal} {\bibinfo  {journal} {Publications of the Research Institute for
  Mathematical Sciences}\ }\textbf {\bibinfo {volume} {38}},\ \bibinfo {pages}
  {113} (\bibinfo {year} {2002})}\BibitemShut {NoStop}%
\bibitem{MarunoBiondini2004}%
  \BibitemOpen
  \bibfield  {author} {\bibinfo {author} {\bibfnamefont {K.-i.}\ \bibnamefont
  {Maruno}}\ and\ \bibinfo {author} {\bibfnamefont {G.}~\bibnamefont
  {Biondini}},\ }\emph{\bibinfo {title} {Resonance and web structure in discrete soliton systems: The two-dimensional Toda lattice and its fully discrete and ultra-discrete analogues}},\ \href {https://doi.org/10.1088/0305-4470/37/49/005} {\bibfield
   {journal} {\bibinfo  {journal} {Journal of Physics A: Mathematical and
  General}\ }\textbf {\bibinfo {volume} {37}},\ \bibinfo {pages} {11819}
  (\bibinfo {year} {2004})},\ \Eprint {https://arxiv.org/abs/nlin/0406059}
  {arXiv:nlin/0406059 [nlin.SI]} \BibitemShut {NoStop}%
\bibitem{LiuTangZhai2023}%
  \BibitemOpen
  \bibfield  {author} {\bibinfo {author} {\bibfnamefont {C.}~\bibnamefont
  {Liu}}, \bibinfo {author} {\bibfnamefont {H.}~\bibnamefont {Tang}},\ and\
  \bibinfo {author} {\bibfnamefont {H.}~\bibnamefont {Zhai}},\ }\emph{\bibinfo {title} {Krylov complexity in open quantum systems}},\ \href
  {https://doi.org/10.1103/PhysRevResearch.5.033085} {\bibfield  {journal}
  {\bibinfo  {journal} {Physical Review Research}\ }\textbf {\bibinfo {volume}
  {5}},\ \bibinfo {pages} {033085} (\bibinfo {year} {2023})},\ \Eprint
  {https://arxiv.org/abs/2207.13603} {arXiv:2207.13603} \BibitemShut {NoStop}%
\bibitem{BhattacharyaEtAl2024Spread}%
  \BibitemOpen
  \bibfield  {author} {\bibinfo {author} {\bibfnamefont {A.}~\bibnamefont
  {Bhattacharya}}, \bibinfo {author} {\bibfnamefont {R.~N.}\ \bibnamefont
  {Das}}, \bibinfo {author} {\bibfnamefont {B.}~\bibnamefont {Dey}},\ and\
  \bibinfo {author} {\bibfnamefont {J.}~\bibnamefont {Erdmenger}},\ }\emph{\bibinfo {title} {Spread complexity for measurement-induced non-unitary dynamics and Zeno effect}},\ \href
  {https://doi.org/10.1007/JHEP03(2024)179} {\bibfield  {journal} {\bibinfo
  {journal} {Journal of High Energy Physics}\ }\textbf {\bibinfo {volume}
  {2024}},\ \bibinfo {pages} {179} (\bibinfo {year} {2024})},\ \Eprint
  {https://arxiv.org/abs/2312.11635} {arXiv:2312.11635 [hep-th]} \BibitemShut
  {NoStop}%
\bibitem{NandyEtAl2025SVD}%
  \BibitemOpen
  \bibfield  {author} {\bibinfo {author} {\bibfnamefont {P.}~\bibnamefont
  {Nandy}}, \bibinfo {author} {\bibfnamefont {T.}~\bibnamefont {Pathak}},
  \bibinfo {author} {\bibfnamefont {Z.-Y.}\ \bibnamefont {Xian}},\ and\
  \bibinfo {author} {\bibfnamefont {J.}~\bibnamefont {Erdmenger}},\ }\emph{\bibinfo {title} {Krylov space approach to singular value decomposition in non-Hermitian systems}},\ \href
  {https://doi.org/10.1103/PhysRevB.111.064203} {\bibfield  {journal} {\bibinfo
   {journal} {Physical Review B}\ }\textbf {\bibinfo {volume} {111}},\ \bibinfo
  {pages} {064203} (\bibinfo {year} {2025})},\ \Eprint
  {https://arxiv.org/abs/2411.09309} {arXiv:2411.09309} \BibitemShut {NoStop}%
\bibitem{Toda1967}%
  \BibitemOpen
  \bibfield  {author} {\bibinfo {author} {\bibfnamefont {M.}~\bibnamefont
  {Toda}},\ }\emph{\bibinfo {title} {Vibration of a Chain with Nonlinear Interaction}},\ \href {https://doi.org/10.1143/JPSJ.22.431} {\bibfield  {journal}
  {\bibinfo  {journal} {Journal of the Physical Society of Japan}\ }\textbf
  {\bibinfo {volume} {22}},\ \bibinfo {pages} {431} (\bibinfo {year}
  {1967})}\BibitemShut {NoStop}%
\bibitem{Flaschka1974Integrals}%
  \BibitemOpen
  \bibfield  {author} {\bibinfo {author} {\bibfnamefont {H.}~\bibnamefont
  {Flaschka}},\ }\emph{\bibinfo {title} {The Toda lattice. II. Existence of integrals}},\ \href {https://doi.org/10.1103/PhysRevB.9.1924} {\bibfield
  {journal} {\bibinfo  {journal} {Physical Review B}\ }\textbf {\bibinfo
  {volume} {9}},\ \bibinfo {pages} {1924} (\bibinfo {year} {1974})}\BibitemShut
  {NoStop}%
\bibitem{Moser1975}%
  \BibitemOpen
  \bibfield  {author} {\bibinfo {author} {\bibfnamefont {J.}~\bibnamefont
  {Moser}},\ }\emph{\bibinfo {title} {Three integrable Hamiltonian systems connected with isospectral deformations}},\ \href {https://doi.org/10.1016/0001-8708(75)90151-6} {\bibfield
  {journal} {\bibinfo  {journal} {Advances in Mathematics}\ }\textbf {\bibinfo
  {volume} {16}},\ \bibinfo {pages} {197} (\bibinfo {year} {1975})}\BibitemShut
  {NoStop}%
\bibitem{Hirota2004}%
  \BibitemOpen
  \bibfield  {author} {\bibinfo {author} {\bibfnamefont {R.}~\bibnamefont
  {Hirota}},\ }\href {https://doi.org/10.1017/CBO9780511543043} {\emph
  {\bibinfo {title} {The Direct Method in Soliton Theory}}},\ \bibinfo {series}
  {Cambridge Tracts in Mathematics}, Vol.\ \bibinfo {volume} {155}\ (\bibinfo
  {publisher} {Cambridge University Press},\ \bibinfo {address} {Cambridge},\
  \bibinfo {year} {2004})\BibitemShut {NoStop}%
\bibitem{KajiwaraMazzoccoOhta2007}%
  \BibitemOpen
  \bibfield  {author} {\bibinfo {author} {\bibfnamefont {K.}~\bibnamefont
  {Kajiwara}}, \bibinfo {author} {\bibfnamefont {M.}~\bibnamefont {Mazzocco}},\
  and\ \bibinfo {author} {\bibfnamefont {Y.}~\bibnamefont {Ohta}},\ }\emph{\bibinfo {title} {A remark on the Hankel determinant formula for solutions of the Toda equation}},\ \href
  {https://doi.org/10.1088/1751-8113/40/42/S11} {\bibfield  {journal} {\bibinfo
   {journal} {Journal of Physics A: Mathematical and Theoretical}\ }\textbf
  {\bibinfo {volume} {40}},\ \bibinfo {pages} {12661} (\bibinfo {year}
  {2007})},\ \Eprint {https://arxiv.org/abs/nlin/0701029} {arXiv:nlin/0701029}
  \BibitemShut {NoStop}%
\bibitem{Krattenthaler1999}%
  \BibitemOpen
  \bibfield  {author} {\bibinfo {author} {\bibfnamefont {C.}~\bibnamefont
  {Krattenthaler}},\ }\emph{\bibinfo {title} {Advanced determinant calculus}},\ \href
  {https://www.mat.univie.ac.at/~slc/wpapers/s42kratt.html} {\bibfield
  {journal} {\bibinfo  {journal} {S{\'e}minaire Lotharingien de Combinatoire}\
  }\textbf {\bibinfo {volume} {42}},\ \bibinfo {pages} {B42q} (\bibinfo {year}
  {1999})},\ \Eprint {https://arxiv.org/abs/math/9902004} {arXiv:math/9902004}
  \BibitemShut {NoStop}%
\bibitem{Lax1968}%
  \BibitemOpen
  \bibfield  {author} {\bibinfo {author} {\bibfnamefont {P.~D.}\ \bibnamefont
  {Lax}},\ }\emph{\bibinfo {title} {Integrals of nonlinear equations of evolution and solitary waves}},\ \href {https://doi.org/10.1002/cpa.3160210503} {\bibfield
  {journal} {\bibinfo  {journal} {Communications on Pure and Applied
  Mathematics}\ }\textbf {\bibinfo {volume} {21}},\ \bibinfo {pages} {467}
  (\bibinfo {year} {1968})}\BibitemShut {NoStop}%
\bibitem{HirotaItoKako1988}%
  \BibitemOpen
  \bibfield  {author} {\bibinfo {author} {\bibfnamefont {R.}~\bibnamefont
  {Hirota}}, \bibinfo {author} {\bibfnamefont {M.}~\bibnamefont {Ito}},\ and\
  \bibinfo {author} {\bibfnamefont {F.}~\bibnamefont {Kako}},\ }\emph{\bibinfo {title} {Two-Dimensional Toda Lattice Equations}},\ \href
  {https://doi.org/10.1143/PTPS.94.42} {\bibfield  {journal} {\bibinfo
  {journal} {Progress of Theoretical Physics Supplement}\ }\textbf {\bibinfo
  {volume} {94}},\ \bibinfo {pages} {42} (\bibinfo {year} {1988})}\BibitemShut
  {NoStop}%
\bibitem{AdlerVanMoerbeke1997Moment}%
  \BibitemOpen
  \bibfield  {author} {\bibinfo {author} {\bibfnamefont {M.}~\bibnamefont
  {Adler}}\ and\ \bibinfo {author} {\bibfnamefont {P.}~\bibnamefont {van
  Moerbeke}},\ }\emph{\bibinfo {title} {Group factorization, moment matrices, and Toda lattices}},\ \href {https://doi.org/10.1155/S1073792897000378} {\bibfield
  {journal} {\bibinfo  {journal} {International Mathematics Research Notices}\
  }\textbf {\bibinfo {volume} {1997}},\ \bibinfo {pages} {555} (\bibinfo {year}
  {1997})}\BibitemShut {NoStop}%
\bibitem{YatesMitra2021}%
  \BibitemOpen
  \bibfield  {author} {\bibinfo {author} {\bibfnamefont {D.~J.}\ \bibnamefont
  {Yates}}\ and\ \bibinfo {author} {\bibfnamefont {A.}~\bibnamefont {Mitra}},\
  }\emph{\bibinfo {title} {Strong and almost strong modes of Floquet spin chains in Krylov subspaces}},\ \href {https://doi.org/10.1103/PhysRevB.104.195121} {\bibfield  {journal}
  {\bibinfo  {journal} {Physical Review B}\ }\textbf {\bibinfo {volume}
  {104}},\ \bibinfo {pages} {195121} (\bibinfo {year} {2021})},\ \Eprint
  {https://arxiv.org/abs/2105.13246} {arXiv:2105.13246} \BibitemShut {NoStop}%
\bibitem{NizamiShrestha2023}%
  \BibitemOpen
  \bibfield  {author} {\bibinfo {author} {\bibfnamefont {A.~A.}\ \bibnamefont
  {Nizami}}\ and\ \bibinfo {author} {\bibfnamefont {A.~W.}\ \bibnamefont
  {Shrestha}},\ }\emph{\bibinfo {title} {Krylov construction and complexity for driven quantum systems}},\ \href {https://doi.org/10.1103/PhysRevE.108.054222} {\bibfield
   {journal} {\bibinfo  {journal} {Physical Review E}\ }\textbf {\bibinfo
  {volume} {108}},\ \bibinfo {pages} {054222} (\bibinfo {year} {2023})},\
  \Eprint {https://arxiv.org/abs/2305.00256} {arXiv:2305.00256} \BibitemShut
  {NoStop}%
\bibitem{Gutknecht1992}%
  \BibitemOpen
  \bibfield  {author} {\bibinfo {author} {\bibfnamefont {M.~H.}\ \bibnamefont
  {Gutknecht}},\ }\emph{\bibinfo {title} {A Completed Theory of the Unsymmetric Lanczos Process and Related Algorithms, Part I}},\ \href {https://doi.org/10.1137/0613037} {\bibfield  {journal}
  {\bibinfo  {journal} {SIAM Journal on Matrix Analysis and Applications}\
  }\textbf {\bibinfo {volume} {13}},\ \bibinfo {pages} {594} (\bibinfo {year}
  {1992})}\BibitemShut {NoStop}%
\bibitem{FreundGutknechtNachtigal1993}%
  \BibitemOpen
  \bibfield  {author} {\bibinfo {author} {\bibfnamefont {R.~W.}\ \bibnamefont
  {Freund}}, \bibinfo {author} {\bibfnamefont {M.~H.}\ \bibnamefont
  {Gutknecht}},\ and\ \bibinfo {author} {\bibfnamefont {N.~M.}\ \bibnamefont
  {Nachtigal}},\ }\emph{\bibinfo {title} {An Implementation of the Look-Ahead Lanczos Algorithm for Non-Hermitian Matrices}},\ \href {https://doi.org/10.1137/0914009} {\bibfield  {journal}
  {\bibinfo  {journal} {SIAM Journal on Scientific Computing}\ }\textbf
  {\bibinfo {volume} {14}},\ \bibinfo {pages} {137} (\bibinfo {year}
  {1993})}\BibitemShut {NoStop}%
\bibitem{BauerFike1960}%
  \BibitemOpen
  \bibfield  {author} {\bibinfo {author} {\bibfnamefont {F.~L.}\ \bibnamefont
  {Bauer}}\ and\ \bibinfo {author} {\bibfnamefont {C.~T.}\ \bibnamefont
  {Fike}},\ }\emph{\bibinfo {title} {Norms and exclusion theorems}},\ \href {https://doi.org/10.1007/BF01386217} {\bibfield  {journal}
  {\bibinfo  {journal} {Numerische Mathematik}\ }\textbf {\bibinfo {volume}
  {2}},\ \bibinfo {pages} {137} (\bibinfo {year} {1960})}\BibitemShut {NoStop}%
\bibitem{BotchevGrimmHochbruck2013}%
  \BibitemOpen
  \bibfield  {author} {\bibinfo {author} {\bibfnamefont {M.~A.}\ \bibnamefont
  {Botchev}}, \bibinfo {author} {\bibfnamefont {V.}~\bibnamefont {Grimm}},\
  and\ \bibinfo {author} {\bibfnamefont {M.}~\bibnamefont {Hochbruck}},\ }\emph{\bibinfo {title} {Residual, Restarting, and Richardson Iteration for the Matrix Exponential}},\ \href
  {https://doi.org/10.1137/110820191} {\bibfield  {journal} {\bibinfo
  {journal} {SIAM Journal on Scientific Computing}\ }\textbf {\bibinfo {volume}
  {35}},\ \bibinfo {pages} {A1376} (\bibinfo {year} {2013})}\BibitemShut
  {NoStop}%
\bibitem{JaweckiAuzingerKoch2020}%
  \BibitemOpen
  \bibfield  {author} {\bibinfo {author} {\bibfnamefont {T.}~\bibnamefont
  {Jawecki}}, \bibinfo {author} {\bibfnamefont {W.}~\bibnamefont {Auzinger}},\
  and\ \bibinfo {author} {\bibfnamefont {O.}~\bibnamefont {Koch}},\ }\emph{\bibinfo {title} {Computable upper error bounds for Krylov approximations to matrix exponentials and associated $\varphi$-functions}},\ \href
  {https://doi.org/10.1007/s10543-019-00771-6} {\bibfield  {journal} {\bibinfo
  {journal} {BIT Numerical Mathematics}\ }\textbf {\bibinfo {volume} {60}},\
  \bibinfo {pages} {157} (\bibinfo {year} {2020})},\ \Eprint
  {https://arxiv.org/abs/1809.03369} {arXiv:1809.03369} \BibitemShut {NoStop}%
\bibitem{Trefethen1997}%
  \BibitemOpen
  \bibfield  {author} {\bibinfo {author} {\bibfnamefont {L.~N.}\ \bibnamefont
  {Trefethen}},\ }\emph{\bibinfo {title} {Pseudospectra of Linear Operators}},\ \href {https://doi.org/10.1137/S0036144595295284} {\bibfield
  {journal} {\bibinfo  {journal} {SIAM Review}\ }\textbf {\bibinfo {volume}
  {39}},\ \bibinfo {pages} {383} (\bibinfo {year} {1997})}\BibitemShut
  {NoStop}%
\bibitem{GueryOdelinEtAl2019}%
  \BibitemOpen
  \bibfield  {author} {\bibinfo {author} {\bibfnamefont {D.}~\bibnamefont
  {Gu{\'e}ry-Odelin}}, \bibinfo {author} {\bibfnamefont {A.}~\bibnamefont
  {Ruschhaupt}}, \bibinfo {author} {\bibfnamefont {A.}~\bibnamefont {Kiely}},
  \bibinfo {author} {\bibfnamefont {E.}~\bibnamefont {Torrontegui}}, \bibinfo
  {author} {\bibfnamefont {S.}~\bibnamefont {Mart{\'i}nez-Garaot}},\ and\
  \bibinfo {author} {\bibfnamefont {J.~G.}\ \bibnamefont {Muga}},\ }\emph{\bibinfo {title} {Shortcuts to adiabaticity: Concepts, methods, and applications}},\ \href
  {https://doi.org/10.1103/RevModPhys.91.045001} {\bibfield  {journal}
  {\bibinfo  {journal} {Reviews of Modern Physics}\ }\textbf {\bibinfo {volume}
  {91}},\ \bibinfo {pages} {045001} (\bibinfo {year} {2019})}\BibitemShut
  {NoStop}%
\bibitem{DemirplakRice2003}%
  \BibitemOpen
  \bibfield  {author} {\bibinfo {author} {\bibfnamefont {M.}~\bibnamefont
  {Demirplak}}\ and\ \bibinfo {author} {\bibfnamefont {S.~A.}\ \bibnamefont
  {Rice}},\ }\emph{\bibinfo {title} {Adiabatic Population Transfer with Control Fields}},\ \href {https://doi.org/10.1021/jp030708a} {\bibfield  {journal}
  {\bibinfo  {journal} {The Journal of Physical Chemistry A}\ }\textbf
  {\bibinfo {volume} {107}},\ \bibinfo {pages} {9937} (\bibinfo {year}
  {2003})}\BibitemShut {NoStop}%
\bibitem{Berry2009Transitionless}%
  \BibitemOpen
  \bibfield  {author} {\bibinfo {author} {\bibfnamefont {M.~V.}\ \bibnamefont
  {Berry}},\ }\emph{\bibinfo {title} {Transitionless quantum driving}},\ \href {https://doi.org/10.1088/1751-8113/42/36/365303} {\bibfield
   {journal} {\bibinfo  {journal} {Journal of Physics A: Mathematical and
  Theoretical}\ }\textbf {\bibinfo {volume} {42}},\ \bibinfo {pages} {365303}
  (\bibinfo {year} {2009})}\BibitemShut {NoStop}%
\bibitem{LewisRiesenfeld1969}%
  \BibitemOpen
  \bibfield  {author} {\bibinfo {author} {\bibfnamefont {H.~R.}\ \bibnamefont
  {Lewis}, \bibfnamefont {Jr.}}\ and\ \bibinfo {author} {\bibfnamefont {W.~B.}\
  \bibnamefont {Riesenfeld}},\ }\emph{\bibinfo {title} {An Exact Quantum Theory of the Time-Dependent Harmonic Oscillator and of a Charged Particle in a Time-Dependent Electromagnetic Field}},\ \href {https://doi.org/10.1063/1.1664991}
  {\bibfield  {journal} {\bibinfo  {journal} {Journal of Mathematical Physics}\
  }\textbf {\bibinfo {volume} {10}},\ \bibinfo {pages} {1458} (\bibinfo {year}
  {1969})}\BibitemShut {NoStop}%
\bibitem{Brody2014}%
  \BibitemOpen
  \bibfield  {author} {\bibinfo {author} {\bibfnamefont {D.~C.}\ \bibnamefont
  {Brody}},\ }\emph{\bibinfo {title} {Biorthogonal quantum mechanics}},\ \href {https://doi.org/10.1088/1751-8113/47/3/035305} {\bibfield
  {journal} {\bibinfo  {journal} {Journal of Physics A: Mathematical and
  Theoretical}\ }\textbf {\bibinfo {volume} {47}},\ \bibinfo {pages} {035305}
  (\bibinfo {year} {2014})}\BibitemShut {NoStop}%
\bibitem{AshidaGongUeda2020}%
  \BibitemOpen
  \bibfield  {author} {\bibinfo {author} {\bibfnamefont {Y.}~\bibnamefont
  {Ashida}}, \bibinfo {author} {\bibfnamefont {Z.}~\bibnamefont {Gong}},\ and\
  \bibinfo {author} {\bibfnamefont {M.}~\bibnamefont {Ueda}},\ }\emph{\bibinfo {title} {Non-Hermitian physics}},\ \href
  {https://doi.org/10.1080/00018732.2021.1876991} {\bibfield  {journal}
  {\bibinfo  {journal} {Advances in Physics}\ }\textbf {\bibinfo {volume}
  {69}},\ \bibinfo {pages} {249} (\bibinfo {year} {2020})}\BibitemShut
  {NoStop}%
\bibitem{Symes1982QR}%
  \BibitemOpen
  \bibfield  {author} {\bibinfo {author} {\bibfnamefont {W.~W.}\ \bibnamefont
  {Symes}},\ }\emph{\bibinfo {title} {The QR algorithm and scattering for the finite nonperiodic Toda lattice}},\ \href {https://doi.org/10.1016/0167-2789(82)90069-0} {\bibfield
  {journal} {\bibinfo  {journal} {Physica D: Nonlinear Phenomena}\ }\textbf
  {\bibinfo {volume} {4}},\ \bibinfo {pages} {275} (\bibinfo {year}
  {1982})}\BibitemShut {NoStop}%
\bibitem{DeiftLiTomei1989}%
  \BibitemOpen
  \bibfield  {author} {\bibinfo {author} {\bibfnamefont {P.}~\bibnamefont
  {Deift}}, \bibinfo {author} {\bibfnamefont {L.-C.}\ \bibnamefont {Li}},\ and\
  \bibinfo {author} {\bibfnamefont {C.}~\bibnamefont {Tomei}},\ }\emph{\bibinfo {title} {Matrix factorizations and integrable systems}},\ \href
  {https://doi.org/10.1002/cpa.3160420405} {\bibfield  {journal} {\bibinfo
  {journal} {Communications on Pure and Applied Mathematics}\ }\textbf
  {\bibinfo {volume} {42}},\ \bibinfo {pages} {443} (\bibinfo {year}
  {1989})}\BibitemShut {NoStop}%
\bibitem{DieciEirola1999}%
  \BibitemOpen
  \bibfield  {author} {\bibinfo {author} {\bibfnamefont {L.}~\bibnamefont
  {Dieci}}\ and\ \bibinfo {author} {\bibfnamefont {T.}~\bibnamefont {Eirola}},\
  }\emph{\bibinfo {title} {On Smooth Decompositions of Matrices}},\ \href {https://doi.org/10.1137/S0895479897330182} {\bibfield  {journal}
  {\bibinfo  {journal} {SIAM Journal on Matrix Analysis and Applications}\
  }\textbf {\bibinfo {volume} {20}},\ \bibinfo {pages} {800} (\bibinfo {year}
  {1999})}\BibitemShut {NoStop}%
\bibitem{GriffithsHarris1978}%
  \BibitemOpen
  \bibfield  {author} {\bibinfo {author} {\bibfnamefont {P.}~\bibnamefont
  {Griffiths}}\ and\ \bibinfo {author} {\bibfnamefont {J.}~\bibnamefont
  {Harris}},\ }\href {https://doi.org/10.1002/9781118032527} {\emph {\bibinfo
  {title} {Principles of Algebraic Geometry}}}\ (\bibinfo  {publisher} {John
  Wiley \& Sons},\ \bibinfo {address} {New York},\ \bibinfo {year}
  {1978})\BibitemShut {NoStop}%
\bibitem{IshikawaWakayama2006}%
  \BibitemOpen
  \bibfield  {author} {\bibinfo {author} {\bibfnamefont {M.}~\bibnamefont
  {Ishikawa}}\ and\ \bibinfo {author} {\bibfnamefont {M.}~\bibnamefont
  {Wakayama}},\ }\emph{\bibinfo {title} {Applications of minor summation formula III, Pl\"ucker relations, lattice paths and Pfaffian identities}},\ \href {https://doi.org/10.1016/j.jcta.2005.05.008} {\bibfield
  {journal} {\bibinfo  {journal} {Journal of Combinatorial Theory, Series A}\
  }\textbf {\bibinfo {volume} {113}},\ \bibinfo {pages} {113} (\bibinfo {year}
  {2006})},\ \Eprint {https://arxiv.org/abs/math/0312358} {arXiv:math/0312358
  [math.CO]} \BibitemShut {NoStop}%
\bibitem{MitscherlingAvdoshkinMoore2025}%
  \BibitemOpen
  \bibfield  {author} {\bibinfo {author} {\bibfnamefont {J.}~\bibnamefont
  {Mitscherling}}, \bibinfo {author} {\bibfnamefont {A.}~\bibnamefont
  {Avdoshkin}},\ and\ \bibinfo {author} {\bibfnamefont {J.~E.}\ \bibnamefont
  {Moore}},\ }\emph{\bibinfo {title} {Gauge-invariant projector calculus for quantum state geometry and applications to observables in crystals}},\ \href {https://doi.org/10.1103/qscv-qxqt} {\bibfield  {journal}
  {\bibinfo  {journal} {Physical Review B}\ }\textbf {\bibinfo {volume}
  {112}},\ \bibinfo {pages} {085104} (\bibinfo {year} {2025})},\ \Eprint
  {https://arxiv.org/abs/2412.03637} {arXiv:2412.03637} \BibitemShut {NoStop}%
\bibitem{Ransford1995}%
  \BibitemOpen
  \bibfield  {author} {\bibinfo {author} {\bibfnamefont {T.}~\bibnamefont
  {Ransford}},\ }\href {https://doi.org/10.1017/CBO9780511623776} {\emph
  {\bibinfo {title} {Potential Theory in the Complex Plane}}},\ \bibinfo
  {series} {London Mathematical Society Student Texts}, Vol.~\bibinfo {volume}
  {28}\ (\bibinfo  {publisher} {Cambridge University Press},\ \bibinfo
  {address} {Cambridge},\ \bibinfo {year} {1995})\BibitemShut {NoStop}%
\end{thebibliography}
\end{document}